\newtheorem{theorem}{Theorem}
\newtheorem{lemma}{Lemma}
\newtheorem{define}{Definition}
\newtheorem{observation}{Observation}
\newcommand{\pw}{\mathrm{pw}}
\newcommand{\bw}{\mathrm{bw}}
\newcommand{\smallO}[1]{\mathop{}\!o{\left(#1\right)}}
\newcommand{\tw}{\mathrm{tw}}
\newcommand{\univ}{\mathbb{U}}
\newcommand{\ds}{DS}
\newcommand{\rds}{$r$DS}
\newcommand{\cds}{CDS}
\newcommand{\crds}{$r$CDS}
\begin{document}


\title{Optimal dynamic program for r-domination problems over tree decompositions}


%
%
\author{Glencora Borradaile \and Hung Le}
%


%
%

\maketitle
\thispagestyle{empty}

\begin{abstract} 
  There has been recent progress in showing that the exponential dependence on treewidth in dynamic programming algorithms for solving NP-hard problems are optimal under the Strong Exponential Time Hypothesis (SETH).  We extend this work to $r$-domination problems.  In $r$-dominating set, one wished to find a minimum subset $S$ of vertices such that every vertex of $G$ is within $r$ hops of some vertex in $S$.  In connected $r$-dominating set, one additionally requires that the set induces a connected subgraph of $G$.  We give a $O((2r+1)^\tw n)$ time algorithm for $r$-dominating set and a $O((2r+2)^\tw n^{O(1)})$ time algorithm for connected $r$-dominating set in $n$-vertex graphs of treewidth $\tw$.   We show that the running time dependence on $r$ and $\tw$ is the best possible under SETH.  This adds to earlier observations that a ``+1'' in the denominator is required for connectivity constraints.
\end{abstract}

\newpage
\setcounter{page}{1}

\section{Introduction}

There has been recent progress in showing that the exponential dependence on treewidth\footnote{Treewidth is defined formally in the appendix.} in dynamic programming algorithms for solving NP-hard problems are optimal under the Strong Exponential Time Hypothesis (SETH)~\cite{IP01}.   Lokshtanov, Marx and Saurabh showed that for a wide variety of problems with local constraints, such as maximum independent set, minimum dominating setand $q$-coloring, require $\Omega^*((2-\epsilon)^\tw)$, $\Omega^*((3-\epsilon)^\tw)$ and $\Omega^*((q-\epsilon)^\tw)$ time in graphs of treewidth $\tw$, where $\Omega^*$ hides polynomial dependence on the size of the graph~\cite{LMS11}; these lower bounds met the best-known upper bounds for the same problems.  For problems with connectivity constraints, such as connected dominating set, some thought that a dependence of $\tw^\tw$ would be required.  Cygan et al.\ showed that this is not the case, giving tight upper and lower bounds for the dependence on treewidth for many problems, including connected dominating set~\cite{CNPPRW11}.  They also observed that the base of the constant increased by one when adding a connectivity constraint.  For example, vertex cover has tight upper and lower bounds of $O^*(2^\tw)$ and $\Omega^*((2-\epsilon)^\tw)$ while connected vertex cover has tight upper and lower bounds of $O^*(3^\tw)$ and $\Omega^*((3-\epsilon)^\tw)$.  Similarly, dominating set has tight upper and lower bounds of $O^*(3^\tw)$ and $\Omega^*((3-\epsilon)^\tw)$ while connected dominating set has tight upper and lower bounds of $O^*(4^\tw)$ and $\Omega^*((4-\epsilon)^\tw)$.

\subsection{Generalization to $r$-domination}

In this paper, we show that this pattern of dependence extends to domination problems over greater distances.  
The $r$-dominating set (\rds) problem is a natural extension of the dominating set (\ds) problem, in which, given a graph $G$, the goal is to find a minimum subset $S$ of vertices such that every vertex of $G$ is within $r$ hops of some vertex in $S$.  Likewise, the connected $r$-dominating set (\crds) is the connected generalization of connected dominating set (\cds).  We show that \rds\ can be solved in $O^*((2r+1)^\tw)$ time and that \crds\ can be solved in $O^*((2r+2)^\tw)$ time.  Further, we show that these upper bounds are tight, assuming SETH; in particular, we assume that $n_0$-variable SAT cannot be solved in $O^*(2^{\delta n_0})$ time for any $\delta < 1$~\cite{IP01}. 

\begin{center}
  \begin{tabular}{ r | c| c | c| c }
     & Lower Bound & Reference & Upper Bound & Reference\\ \hline
    \ds & $\Omega^*((3-\epsilon)^{\tw})$ &  \cite{LMS11} & $O^*(3^{\tw})$ & \cite{AN02}  \\ 
    \cds & $\Omega^*((4-\epsilon)^{\tw})$ &  \cite{CNPPRW11} & $O^*(4^{\tw})$ & \cite{CNPPRW11}  \\ 
    \rds & $\Omega^*((2-\epsilon)r+1)^{\tw})$ & Theorem~\ref{thm:dp-thm-ds} & $O^*(\left( 2r + 1\right)^{\tw})$ & Theorem~\ref{thm:lb-ds} \\ 
    \crds & $\Omega^*((2-\epsilon)r+2)^{\tw})$ & Theorem~\ref{thm:dp-thm-cds} & $O^*(\left( 2r + 2\right)^{\tw})$ & Theorem~\ref{thm:lb-cds} 
  \end{tabular}
\end{center}

\paragraph{Upper and lower bounds for $r$-dominating set}

The algorithm we give is a relatively straightforward generalization of the $O(3^\tw \tw^2 n)$-time algorithm for \ds\ given by Rooij, Bodlaender and Rossmanith, but for completeness, we provide the proof of the following in Appendix~\ref{App:AppendixA}.  
\begin{theorem} \label{thm:dp-thm-ds}There is an $O((2r+1)^{\tw+1}n)$-time algorithm for \rds\ in graphs $G$ of treewidth  $\tw$.
\end{theorem}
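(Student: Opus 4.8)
The plan is to adapt the standard domination dynamic program of van Rooij, Bodlaender, and Rossmanith so that it tracks \emph{distances} to the nearest selected vertex rather than a single ``dominated / not dominated'' bit. First I would fix a nice tree decomposition $(T, \{X_t\})$ of $G$ of width $\tw$ with $O(n)$ nodes, each a leaf, introduce, forget, or join node, and root it; for a node $t$ let $G_t$ be the subgraph induced by the vertices appearing in the subtree at $t$. The structural fact I will lean on is the separator property of tree decompositions: once a vertex is forgotten it gains no new neighbors, so every path from it to a vertex introduced later must pass through the current bag $X_t$. Hence, to certify that an already-forgotten vertex is $r$-dominated, it suffices to know its distance to the nearest selected vertex within $G_t$ together with, for each bag vertex, the distance at which help may still arrive from the unprocessed part of the graph.

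I would therefore assign to each vertex of a bag one of $2r+1$ states: a \emph{satisfied} state $\mathbf{i}$ for each $i \in \{0,1,\dots,r\}$, meaning the vertex has a selected vertex at distance exactly $i$ within $G_t$ (so $\mathbf{0}$ means the vertex is itself selected, and a vertex in state $\mathbf{i}$ can relay domination outward to distance $r-i$); and an \emph{open} state $\hat{\mathbf{j}}$ for each $j \in \{1,\dots,r\}$, meaning the vertex is not yet dominated within $G_t$ and can only be saved by a dominator appearing at distance at most $j$ in the remaining graph. For each coloring $c \colon X_t \to \{\mathbf{0},\dots,\mathbf{r},\hat{\mathbf{1}},\dots,\hat{\mathbf{r}}\}$, the entry $A_t[c]$ stores the minimum size of a partial solution $S \subseteq V(G_t)$ consistent with $c$: every vertex of $G_t$ outside the bag is already $r$-dominated, and each bag vertex's recorded distance or demand matches its state.

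I would then write the recurrences. A leaf node is immediate. At an introduce node for $v$, I enumerate $v$'s own state and update the distances recorded for $v$'s (at most $\tw$) bag neighbors along the new edges, decreasing a neighbor's distance whenever $v$ supplies a closer dominator and tightening open demands accordingly; colorings that become inconsistent are discarded. At a forget node for $v$ I retain only colorings in which $v$ is in some satisfied state $\mathbf{i}$ (a forgotten vertex can receive no further help) and take the minimum over $v$'s $2r+1$ states. The delicate case is the join node, where two partial solutions agreeing on $X_t$ and otherwise on vertex-disjoint parts are merged: a vertex's true distance to its nearest selected vertex is the minimum of the two distances offered by the children, and its open demand is the tighter of the two, so the merged state is obtained by a coordinatewise minimum over the two distance profiles.

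The running time is controlled by the number of colorings, $(2r+1)^{|X_t|} \le (2r+1)^{\tw+1}$, times the per-node cost. The introduce and forget recurrences touch only the $O(\tw)$ bag edges per coloring and stay within this budget exactly as in the base $3^{\tw}$ algorithm for \ds. I expect the join to be the main obstacle: done naively it costs $(2r+1)^{2(\tw+1)}$ by ranging over all pairs of child colorings. To reach the stated bound I would route the coordinatewise-minimum combination through the same fast state-transformation used for ordinary domination: converting each child table to a prefix-minimum (covering) representation over the distance states turns the min-combination into a pointwise product, evaluable in $O((2r+1)^{\tw+1})$ per node, after which an inverse transform recovers the merged table. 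Summing over the $O(n)$ nodes yields $O((2r+1)^{\tw+1} n)$, and correctness follows by induction up $T$, with the separator property guaranteeing that the $2r+1$ states carry precisely the information needed to decide $r$-domination of forgotten vertices.
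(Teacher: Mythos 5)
Your proposal is correct and follows essentially the same route as the paper: $2r+1$ distance states per bag vertex (dominated at distance $i$ within the processed subgraph versus awaiting a dominator from outside), a standard DP over a nice tree decomposition with local-validity checks at leaf/introduce/forget nodes, and a fast join realized by a zeta/M\"obius-style state transform in the spirit of van Rooij et al. The only cosmetic difference is in the join: the paper aggregates the labels $+t$ and $-t$ into a single barred label $\bar t$ (a rank-2 transform per coordinate, inverted afterwards) rather than a prefix-minimum over all $2r+1$ distance values, but both realize the same pointwise-product trick and the same $O((2r+1)^{\tw+1})$ per-node cost.
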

\noindent Demaine et al.\ gave an algorithm with running time $O((2r+1)^{\frac{3}{2}\bw}n)$ for \rds\ in graphs of branchwidth $\bw$; since branchwidth and treewidth are closely related by the inequality $\bw \leq \tw+1$ (for which there are tight examples)~\cite{RS91}, our algorithm improves the exponential dependence.
Our proof of the corresponding lower bound uses a high level construction similar to that of Lokshtanov, Marx and Saurabh for \ds~\cite{LMS11}, but the gadgets we require are non-trivial generalizations.  We prove the following in Section~\ref{sec:lb-ds}.
\begin{theorem}\label{thm:lb-ds} 
For every $\epsilon < 1$ and for $r = n^{\smallO 1}$ such that $r$-dominating set can be solved in $((2-\epsilon)r+1)^{\pw}n^{O(1)}$ time in a graph with pathwidth\footnote{A path decomposition is a tree decomposition whose underlying structure is a path.} $\pw$ and $n$ vertices, there is a $\delta < 1$ such that SAT can be solved in $O^*(2^{\delta n_0})$.
\end{theorem}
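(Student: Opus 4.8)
The plan is to prove the contrapositive of SETH by a polynomial-time reduction that turns a SAT instance $\phi$ on $n_0$ variables and $m$ clauses into an instance of \rds\ whose pathwidth is close to $n_0/\log_2(2r+1)$. First I would partition the variables of $\phi$ into $t = \lceil n_0/p \rceil$ groups of size $p$, where $p$ is a large constant fixed at the end in terms of $\epsilon$ and $r$. Each group has $2^p$ possible assignments, and the heart of the construction is to encode one assignment per group using the $2r+1$ states that a vertex carries in the natural \rds\ dynamic program (its signed distance in $\{-r,\dots,r\}$ to the nearest chosen vertex). Since $(2r+1)^\ell \ge 2^p$ for $\ell = \lceil p/\log_2(2r+1)\rceil$, I would represent each group by $\ell$ horizontal information paths running across the whole construction, so that the state these paths exhibit at any vertical cut encodes the group's current assignment.

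Next I would build three families of gadgets. (i) Information paths: long paths along which the offset of the forced dominators encodes a state in $\{0,\dots,2r\}$ and, crucially, propagates unchanged from one column to the next unless acted upon, so that a bounded-budget solution must keep each path in a single globally consistent state. (ii) Group encoders: for each group, a gadget that restricts the $\ell$-tuple of path states to a designated size-$2^p$ subset of the $(2r+1)^\ell$ possible tuples, in bijection with the $2^p$ assignments, and arranged so that every admissible tuple costs the same number of chosen vertices (so the budget leaks no assignment information). (iii) Clause checkers: at the column associated with a clause $C_j$, a vertex (or small sub-gadget) that must be $r$-dominated and that can be dominated for free precisely when some literal of $C_j$ is set true by the current path states; I would realize this OR by padding distances so that exactly the satisfying states of the relevant paths lie within $r$ hops of the checker, while the unsatisfying states sit at distance $r+1$.

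With the gadgets in place I would fix a budget $k$ equal to the number of dominators forced by the information paths and group encoders, and show that $G$ has an \rds\ of size at most $k$ iff $\phi$ is satisfiable: a satisfying assignment dictates path states that dominate every checker within $k$, whereas violating any clause forces at least one extra chosen vertex and so exceeds $k$; conversely any size-$k$ solution reads a globally consistent assignment off the path states, and that assignment must satisfy every clause since every checker is dominated. Finally I would bound the pathwidth: sweeping a path decomposition column by column, each bag holds one representative of each of the $t\ell$ information paths together with the $O(\mathrm{poly}(r))$ vertices of the current gadget, so $\pw \le t\ell + \smallO{n_0} = \tfrac{n_0}{\log_2(2r+1)}(1+O(1/p)) + \smallO{n_0}$, using $r = n^{\smallO 1}$ to absorb the gadget overhead and the $n^{O(1)}$ factor into $2^{\smallO{n_0}}$. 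The hypothesized algorithm then solves $\phi$ in time $((2-\epsilon)r+1)^{\pw}n^{O(1)} \le 2^{\delta n_0}$ with $\delta = \tfrac{\log_2((2-\epsilon)r+1)}{\log_2(2r+1)}(1+O(1/p))$, which is strictly below $1$ once $p$ is taken large enough, since $(2-\epsilon)r+1 < 2r+1$.

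The main obstacle will be the gadget design for general $r$ rather than the counting. Unlike the base-$3$ case for \ds, where the three states (in the set, dominated, undominated) are purely combinatorial, here the $2r+1$ states are genuine distances, so the information paths, the group encoders, and especially the clause checkers must implement their logic entirely through distance-$r$ reachability while keeping the number of forced dominators identical across all admissible assignments. Getting the checker to express a clean OR at exactly distance $r$, and ensuring that propagating a state along a long path neither costs extra budget nor silently changes the state, are the delicate points; this is precisely where the gadgets must generalize those of Lokshtanov, Marx and Saurabh in a non-trivial way.
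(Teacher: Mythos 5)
Your outline reproduces the high-level architecture of the paper's reduction (groups of $\approx p\log(2r+1)$ variables encoded by $p$ paths carrying $2r+1$ states each, equal-cost selection gadgets, clause vertices dominated exactly by satisfying selections, a column-by-column pathwidth bound of $t p + O(\mathrm{poly}(r,p))$, and the same closing arithmetic), but it defers precisely the content that constitutes the proof --- the gadgets --- and in two places the sketch as stated would not go through. First, your consistency mechanism assumes the path state ``propagates unchanged from one column to the next,'' and the budget argument relies on a ``single globally consistent state.'' No gadget in this framework achieves exact propagation: chaining two copies of a length-$(2r+1)$ path only forces the two selected vertices to lie within distance $2r+1$ of each other, which yields a \emph{monotone} (weakly decreasing) index along the chain, not a constant one. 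The paper compensates by instantiating $m(2rpt+1)$ copies of each group gadget and applying a pigeonhole argument: since there can be at most $2rpt$ ``bad'' (strictly decreasing) transitions in total, some block of $m$ consecutive copies is fully consistent, and the clause vertices are replicated $2rpt+1$ times so that one full set of clauses is checked against that consistent block. Without this (or an equivalent device) your ``conversely'' direction fails, because a size-$k$ solution need not read off a single assignment.

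Second, your clause checker is supposed to be $r$-dominated ``precisely when some literal is set true,'' realized ``by padding distances so that exactly the satisfying states of the relevant paths lie within $r$ hops.'' A vertex attached near a path of length $2r+1$ cannot distinguish \emph{which} vertex of that path was selected purely by distance, since all candidate positions sit in a window of width $2r+1$; you need a gadget that selectively blocks domination from a single designated position. This is the role of the paper's $r$-frame and the ``$r$-frame avoiding a vertex'' construction: frames with guard tops $g_1,g_2$ force at least one selected vertex per path (your proposal never forces this either), and for each candidate set $S$ a vertex $x_S$ is attached by frames avoiding exactly the vertices of $S$, so $x_S$ is left undominated precisely when the selection equals $S$; the forced choice of the matching $\bar x_S$ then carries the assignment, and the clause OR is realized combinatorially by adjacency between clause vertices and the $\bar x_S$ of satisfying assignments rather than by distance arithmetic on the paths. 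These missing constructions are exactly the ``non-trivial generalization'' the theorem requires, so as written the proposal is a plan rather than a proof.
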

\noindent We point out that for sufficiently large $r$, there is a sufficiently small $r$-dominating set such that one can find it by enumeration more quickly than suggested by Theorem~\ref{thm:dp-thm-ds}.  At the extreme, if the diameter of the graph is at most $2r$, then there is an $r$-dominating set of size at most $\tw+1$ that is, additionally, contained by one bag of the decomposition~\cite{GPRS01,BC14}.  The optimal solution in this case could be smaller than $\tw +1$. We pose, as an open problem, improving the dependence on running time for the larger values of $r$, $r = n^c$ for some $c$, for which Theorem~\ref{thm:lb-ds} does not hold.

\paragraph{Upper and lower bounds for connected $r$-dominating set} As with the algorithms for connectivity problems with singly-exponential time dependence on treewidth as introduced by Cygan et al.~\cite{CNPPRW11}, our algorithm for \crds\ is a randomized Monte-Carlo algorithm.  As for \rds, this upper bound is relatively straightforward, but we include the details in Appendix~\ref{App:AppendixC}:
\begin{theorem} \label{thm:dp-thm-cds}
There is a $O^*((2r+2)^{\tw+1})$-time true-biased Monte-Carlo algorithm that decides \crds\ for graphs of treewidth $\tw$.
\end{theorem}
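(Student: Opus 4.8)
The plan is to combine the dynamic program underlying Theorem~\ref{thm:dp-thm-ds} with the Cut~\&~Count technique of Cygan et al.~\cite{CNPPRW11}, which is exactly the tool that converts a singly-exponential dynamic program for a local problem into one for its connected variant at the cost of a single extra state per bag vertex. Recall that we may encode the \rds\ dynamic program so that each vertex $v$ of a bag carries a label $\ell(v)\in\{0,1,\dots,2r\}$ recording its domination status: the $r$ labels $\{0,\dots,r-1\}$ mark $v$ as not-yet-dominated and still requiring a dominator from above within a prescribed distance, while the $r+1$ labels $\{r,\dots,2r\}$ mark $v$ as already dominated with residual reach $\ell(v)-r\in\{0,\dots,r\}$. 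Exactly the single label $2r$ corresponds to $v\in S$: a vertex of $S$ dominates itself with full residual reach $r$, whereas any dominated vertex outside $S$ lies at distance at least one from $S$ and hence has residual reach at most $r-1$. Because membership in $S$ is captured by one label, I would split that one label into two according to the side of a cut on which $v$ is placed and leave every other label untouched; this raises the number of states per vertex from $2r+1$ to $2r+2$ and is the source of the ``$+1$'' in the base.

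For the Cut~\&~Count part I would first reduce connectivity to a parity computation. Following~\cite{CNPPRW11}, sample a weight $w(v)$ for each vertex independently and uniformly from $\{1,\dots,2n\}$, fix a vertex $v_0$ forced to lie in $S$ (iterating over the $n$ choices of $v_0$ is absorbed into the $n^{O(1)}$ factor), and, for each target size $k$ and target weight $W$, count modulo $2$ the number of pairs $(S,(S_1,S_2))$ in which $S$ is an \rds\ of $G$ with $v_0\in S$, $|S|=k$, $w(S)=W$, and $(S_1,S_2)$ is a \emph{consistent cut} of $G[S]$ with $v_0\in S_1$, meaning no edge of $G[S]$ crosses between $S_1$ and $S_2$. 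For a fixed $S$ the number of such cuts is $2^{\,c-1}$, where $c$ is the number of connected components of $G[S]$; this is odd exactly when $G[S]$ is connected, so the parity of the total count equals the number of \emph{connected} \rds s of size $k$ and weight $W$, taken modulo $2$. By the Isolation Lemma, with probability at least $1/2$ the minimum-weight connected \rds\ of size $k$ (if one exists) is unique, so for the corresponding weight the count is odd; conversely, when no connected solution of size $k$ exists every count is even. Reporting ``yes'' whenever some count over the polynomially many triples $(v_0,k,W)$ is odd therefore has no false positives and detects an existing solution with probability at least $1/2$, which is precisely a true-biased Monte-Carlo decision procedure.

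It remains to carry out the parity count by dynamic programming over a nice tree decomposition, with every table entry taken modulo $2$ and additionally indexed by the accumulated size and weight. I would take the transitions of the \rds\ program of Theorem~\ref{thm:dp-thm-ds} verbatim on the domination labels and augment them only with the cut-side bookkeeping on the two split labels: at an introduce-edge node the edge is discarded unless it is consistent, i.e.\ unless its two endpoints, when both in $S$, lie on the same side; at a forget node a vertex may be forgotten only if its label certifies that it is already dominated, exactly as in the \rds\ program, which guarantees that the set returned at the root is a valid \rds; and at a join node the two children must agree on the cut side of every bag vertex in $S$, while the domination labels are combined precisely as in the \rds\ program. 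Since the cut side is merely an extra annotation on a single pre-existing label, each of the five node types is processed over a state space of size $(2r+2)^{\tw+1}$, and summing over the polynomially many triples $(v_0,k,W)$ yields the claimed $O^*((2r+2)^{\tw+1})$ running time.

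The main obstacle I anticipate is the join node. A naive product over pairs of label-vectors would square the base, so correctness and efficiency both hinge on reusing the state-change (distance-transform) evaluation of the join from the \rds\ program; the delicate point is to verify that this transform is unaffected by the cut-side coordinate -- the side of an $S$-vertex is only required to match across the two children and is never itself combined -- so that the join still runs at base $2r+2$ rather than $(2r+2)^2$, and that interleaving the distance transform with the ``sides must match'' constraint preserves the modulo-$2$ count of consistent cuts. Once this compatibility is established, the correctness of the Cut~\&~Count argument and the running-time bound follow, completing the proof.
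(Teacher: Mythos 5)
Your proposal is correct and follows essentially the same route as the paper's proof in Appendix~C: Cut~\&~Count with the Isolation Lemma over random weights in $\{1,\dots,2n\}$, a fixed root forced into the solution, parity counting of consistent cuts via the $2^{c-1}$ identity, and a dynamic program whose per-vertex state space is the $r$DS label set with the single ``in $S$'' label split into two cut sides (the paper writes this as $\{-r,\dots,-1,0_1,0_2,1,\dots,r\}$, an isomorphic encoding of your $\{0,\dots,2r\}$ scheme). You also correctly identify the one genuine subtlety at join nodes and its resolution --- the cut-side labels must simply agree across children while the distance labels are combined by the same bar-label convolution as in the disconnected case --- which is exactly how the paper keeps the base at $2r+2$.
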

\noindent Our corresponding lower bound uses a new gadget construction from that of the lower bound of Cygan et al.  The following theorem is proved in Section \ref{sec:lb-cds}.
\begin{theorem}\label{thm:lb-cds} 
For every $\epsilon < 1$ and for $r = n^{\smallO 1}$ such that connected $r$-dominating set can be solved in $((2-\epsilon)r+2)^{\pw}n^{O(1)}$ time in a graph with pathwidth $\pw$ and $n$ vertices, there is a $\delta < 1$ such that SAT can be solved in $O^*(2^{\delta n_0})$.
\end{theorem}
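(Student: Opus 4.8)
The plan is to reduce SAT to \crds, following the SETH-hardness framework of Lokshtanov, Marx and Saurabh~\cite{LMS11} and Cygan et al.~\cite{CNPPRW11}: I would reuse the distance-$r$ gadgetry behind the non-connected bound of Theorem~\ref{thm:lb-ds} and attach to it a connectivity-enforcing backbone responsible for promoting the base from $2r+1$ to $2r+2$. Given a SAT instance $\phi$ on $n_0$ variables and $m$ clauses, partition the variables into $t=\lceil n_0/p\rceil$ groups of $p$ variables, choosing $p\approx\ell\log_2(2r+2)$ so that the $2^p$ assignments of a group nearly exhaust the $(2r+2)^{\ell}$ joint states of $\ell$ consecutive ``slots,'' each slot being a vertex whose role in a connected $r$-dominating set ranges over all $2r+2$ states. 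Lay the graph out as a grid: its $t$ rows are long path-gadgets, one per group, that transmit the group's assignment unchanged across the columns, and its columns correspond to the $m$ clauses, interleaved with propagation and separator columns. Sweeping a path decomposition column by column, each bag intersects the $t\ell=n_0/\log_2(2r+2)$ information-carrying paths together with a $\smallO{n_0}$-sized clause-and-backbone interface, giving $\pw=n_0/\log_2(2r+2)+\smallO{n_0}$---provided all $2r+2$ states are genuinely needed per slot.

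Three families of gadgets carry the construction. A \emph{selector gadget}, adapted from Theorem~\ref{thm:lb-ds}, sits in each row and forces that row to commit, per slot, to one of the $2r+1$ distance states of an $r$-dominator---realized by placing a dominator in one of $2r+1$ positions along a path, each covering a different length-$r$ window---and to propagate the choice unchanged along the row. A \emph{clause gadget} hangs off each clause column: it contains a check vertex that is $r$-dominated from the incident row-paths if and only if at least one literal of the clause is set true by the states crossing that column, so the global domination requirement holds exactly when every clause is satisfied. Finally, a \emph{connectivity backbone}---a spine of mandatorily selected vertices with pendant attachment points in every row and column---is added so that the chosen set is forced to induce a connected subgraph; it is engineered so that connecting a row's dominators to the spine within the budget requires each slot to realize one further, connectivity-bearing configuration beyond its $2r+1$ domination states, lifting the effective alphabet to the full $2r+2$.

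For correctness I would establish both directions against a target budget $B$. Forward: from a satisfying assignment, select in each row the dominators prescribed by its group's assignment, add the backbone, and verify that every clause vertex, filler vertex and backbone vertex lies within $r$ hops of a selected vertex and that the selected set is connected, witnessing a solution of size $B$. Reverse: any connected $r$-dominating set of size at most $B$ must spend essentially its whole budget on the forced backbone and on one selector per slot, which (i) rules out cheating configurations, (ii) forces each group to carry a single consistent assignment across all columns, and (iii) $r$-dominates every clause vertex, so the encoded assignment satisfies $\phi$. Substituting $\pw=n_0/\log_2(2r+2)+\smallO{n_0}$ into a hypothetical $((2-\epsilon)r+2)^{\pw}n^{O(1)}$ algorithm then solves SAT in $2^{\delta n_0}n^{O(1)}$ time with $\delta=\log_2((2-\epsilon)r+2)/\log_2(2r+2)$, which is a constant strictly below $1$ for $\epsilon>0$; the assumption $r=n^{\smallO 1}$ guarantees the $\mathrm{poly}(r)$-sized gadgets add only $n^{\smallO 1}$ vertices and contribute only $\smallO{n_0}$ to the pathwidth, so $\delta$ stays bounded away from $1$, contradicting SETH.

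The chief obstacle is the connectivity backbone and its interaction with the distance states. Reconciling a global connectivity constraint with purely local, distance-based $r$-domination must be done without inflating the pathwidth by more than an additive $\smallO{n_0}$ and without introducing spurious low-cost connected solutions that bypass the intended encoding. The delicate quantitative point is to choose the attachment distances so that exactly one extra state per slot is both necessary and sufficient for connectivity: if connectivity came for free the alphabet would collapse to $2r+1$ and the bound would weaken to $((2-\epsilon)r+1)$, whereas if it cost more than one state each slot would encode fewer than $\log_2(2r+2)$ bits and the pathwidth---hence $\delta$---would be too large. Hitting exactly $2r+2$, mirroring the ``$+1$'' that connectivity contributes in the \cds\ bound of Cygan et al.~\cite{CNPPRW11}, is the crux, and is where I expect the gadgetry of Theorem~\ref{thm:lb-ds} to require genuine rather than cosmetic modification.
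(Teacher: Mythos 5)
Your high-level framework is the right one (grouping variables, a grid of group gadgets swept column-by-column, a consistency/pigeonhole argument over repeated copies, and the final substitution of $\pw = n_0/\log(2r+2) + \smallO{n_0}$ into the hypothetical algorithm), but the proposal stops exactly at the point where the proof actually lives: you name the construction of the $(2r+2)$-state slot as ``the crux'' and ``where I expect the gadgetry \ldots to require genuine rather than cosmetic modification,'' without supplying it. That is not a presentational omission; it is the missing idea. Worse, the mechanism you do sketch --- keep the $2r+1$ distance states of an $r$-dominator on a path and graft on ``one further, connectivity-bearing configuration'' --- is not how the paper obtains the base $2r+2$, and it is doubtful it can be made to work: a ``connectivity-bearing configuration'' is not a single extra local state that composes across bags the way the $2r+1$ position states do, and nothing in your description prevents the connectivity requirement from either being absorbed into the existing states (collapsing the alphabet back to $2r+1$) or from costing more than one state's worth of information.

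The paper discards the ``dominator position on a path of length $2r+1$'' encoding entirely and builds a new object, the \emph{core}: a path on $2r+3$ vertices $a_1,\ldots,a_{2r+3}$ whose $2r+2$ edges (\emph{segments}) are the alphabet. A global root $r_T$ is forced into the solution; two patterns $P_r(r+1)$ are attached to the odd-indexed and to the even-indexed vertices of each core, so that $r$-dominating their roots forces the dominating tree to contain a path from $r_T$ reaching both an odd- and an even-indexed vertex, i.e., crossing exactly one of the $2r+2$ segment edges --- that choice of segment is the state, and it simultaneously provides domination and connectivity. Additional \emph{path-forcing} patterns pin the cost of each such root-to-segment path at exactly $r+2$ vertices per core, which is what makes the budget accounting in Lemmas~\ref{thm:r-domiating-number} and~\ref{lm:r-satisfying-assignment} tight, and Observation~\ref{obs:r-super-path} (segment indices are non-increasing along a super-path) is what drives the pigeonhole/consistency step. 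None of this machinery --- the core, the parity trick with odd/even-indexed pattern attachments, the forced root, or the per-core cost of $r+2$ --- appears in your proposal, so the reverse direction of your reduction (ruling out cheating solutions and extracting a consistent assignment) cannot be carried out as written.
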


\subsection{Motivation}

Algorithms for such problems in graphs of bounded treewidth are useful as subroutines in many approximation algorithms for graphs having bounded local treewidth~\cite{Eppstein00}; specifically, polynomial-time approximation schemes (PTASes) for many problems, including dominating set, TSP and Steiner tree, in planar graphs and graphs of bounded genus all reduce to the same problem in a graph of bounded treewidth whose width depends on the desired precision~\cite{BDT12,BKM09,Klein08,Baker94}.  For sufficiently small $r$, Baker's technique and Demaine and Hajiaghayi's bidimensionality framework imply PTASes for \rds\ and \crds\ (respectively) \cite{Baker94,DHM11}.  For larger values of $r$, {\em approximate} $r$-domination can be achieved by the recent bi-criteria PTAS due to Eisenstat, Klein and Mathieu~\cite{EKM14}; they guarantee a $(1+\epsilon)r$-dominating set of size at most $1+\epsilon$ times the optimal $r$-dominating set.  It is an interesting open question of whether a true PTAS (without approximating the domination distance) can be achieved for \rds\ in planar graphs for arbitrary values of $r$.  We also note that the bi-criteria PTAS of Eisenstat et al.\ is not an {\em efficient} PTAS one which the degree of the polynomial in $n$ (the size of the graph) does not depend on the desired precision, $\epsilon$.  Our new lower bounds suggest that, for large $r$, it may not be possible to design an efficient PTAS for \rds without also approximating the domination distance, since the $O^*(r^\tw)$ run-time of the dynamic program becomes an $O^*(r^{1/\epsilon})$ run time for the PTAS.

\section{Lower Bound for $r$-dominating Set} \label{sec:lb-ds}

In this section we prove Theorem~\ref{thm:lb-ds}: for every $\epsilon < 1$  such that $r$-dominating set can be solved in $O^*(((2-\epsilon)r+1)^{\pw})$ in a graph with pathwidth $\pw$, there is a $\delta < 1$ such that SAT can be solved in $O^*(2^{\delta n_0})$.

We give a reduction from an instance of SAT to an instance of $r$-dominating set in a graph of pathwidth 
\begin{equation}
\pw \leq \frac{n_0p}{\lfloor p\log(2r+1) \rfloor} + O\left(rp\right)\mbox{ for any integer $p$.}\label{eq:pw}
\end{equation}
Therefore, an $O^*(((2-\epsilon)r+1)^{\pw})$-time algorithm for $r-$dominating set would imply an algorithm for SAT of time $O^*\left(\left(2(r-\epsilon)+1\right)^\pw\right) = O^*\left(2^{\pw\log{(2(r-\epsilon)+1)}}\right)$.

We argue that for sufficiently large $p$ depending only $\epsilon$, there is a $\delta$ such that $\pw\log{(2(r-\epsilon)+1)} = \delta n_0$, completing the reduction.  By Equation~(\ref{eq:pw}), 
\begin{equation*}
  \pw\log{(2(r-\epsilon)+1)} = n_0\left(\frac{p\log{(2(r-\epsilon)+1)}}{\lfloor p\log(2r+1) \rfloor} + {\frac{O(rp)}{n_0\lfloor p\log(2r+1) \rfloor}}\right)
\end{equation*}
The second term in the bracketed expression is $\smallO 1$ for large $n_0$; we show that the first term in the bracketed expression $=\delta < 1$ for sufficiently large $p$:
\[
\frac{p\log{(2(r-\epsilon)+1)}}{\lfloor p\log(2r+1) \rfloor}\le \frac{p\log(2r+1-t_\epsilon)}{\lfloor p\log(2r+1) \rfloor} = \frac{\lfloor p\log(2r+1) \rfloor + s_p-pt_\epsilon}{\lfloor p\log(2r+1) \rfloor}= 1-\frac{pt_\epsilon - s_p}{\lfloor p\log(2r+1) \rfloor}
\]
In the above, $t_\epsilon$ is a constant depending only on $\epsilon$ and $s_p < 1$ depends only on $p$.  Therefore, choosing $p$ sufficiently large makes this expression sufficiently smaller than 1.
Note that this hardness result holds for $r = n^{\smallO 1}$ since our construction results in a graph with $O(r^{p+3})$ vertices.

Given an integer $p$, we assume, without loss of generality, that $n$ is a multiple of $\lfloor
p\log(2r+1) \rfloor$. Divide the $n_0$ variables of the SAT formula into
$t$ groups, $F_1,....,F_t$, each of size $\lfloor p\log(2r+1)
\rfloor$; $t = \frac{n_0}{\lfloor p\log(2r+1) \rfloor}$. We assume that $r \geq 2$.

\paragraph{An r-frame}An \emph{$r-$frame} is a graph obtained from a grid of size $r \times r$, adding edges along the diagonal, removing vertices on one side of the diagonal, subdividing edges of the diagonal path and connecting the subdividing vertices to the vetices of the adjacent triangles (refer to Fig.~\ref{fig:frame}). The vertex $A$ is the \emph{top} of the $r-$frame and the path $BC$ is the \emph{bottom path} of the $r-$frame. An \emph{$r-$frame avoiding a vertex $p$} of the bottom path is the graph obtained by deleting edges not in the bottom path and incident $p$. We define \emph{identification} to be the operation of identifying the bottom paths of one or more $r-$frames with a path of length $2r+1$ (refer to Fig.~\ref{fig:frame}).
 
 \begin{figure}[h] \label{fig:frame}
  \centering
  \vspace{-20pt}
  \input{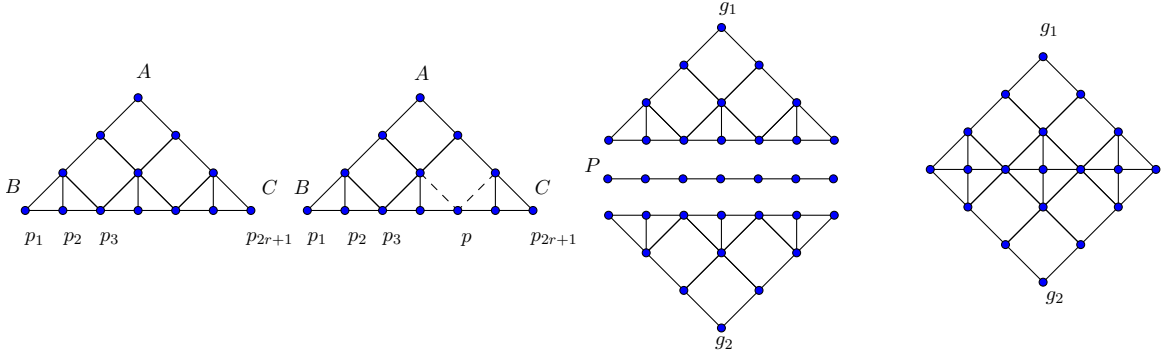}
  \vspace{-15pt}
  \caption{(a) An $r$-frame (r = 3). (b) An $r-$frame avoiding $p$ (dashed edges to be deleted). (c) Two $r$-frames and a path of length $2r+1$ (r=3).  (d) Graph obtained from identifying the two $r$-frames and path in (c).}
\vspace{-10pt}
\end{figure}


\begin{figure}[h] 
  \centering
  \vspace{-20pt}
  \input{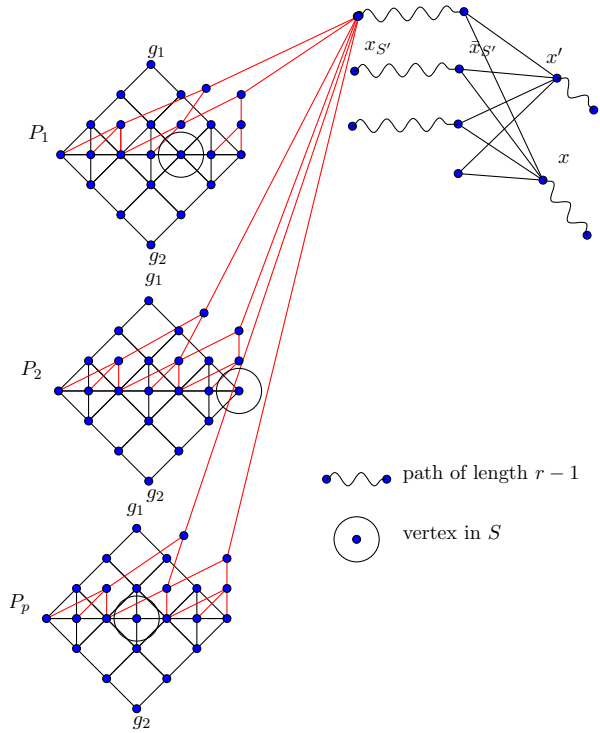}
  \vspace{-15pt}
  \caption{The group gadget. The red edges are edges of $r-$frames with top $x_S$} \label{fig:group-gadgets}
\vspace{-10pt}
\end{figure}

\paragraph{The group gadget} We construct a gadget to represent each
group of variables as follows (refer to Fig.~\ref{fig:group-gadgets}). Let ${\cal P} = \{P_1,P_2,...,P_p\}$ be a set of $p$ paths, each of length $2r + 1$.  For each path $P_i$,
we construct a graph $C_i$ from $P_i$ by identifying two $r-$frames with top vertices $g_1$ and $g_2$, called {\em guards}, to $P_i$. In the remaining steps of the construction, we will only connect to the vertices of $\cal P$.  In order to $r$-dominate the guards, we see:

\begin{observation} \label{obs:ob1}
  At least one vertex of $C_i$ will be required in the dominating set
  in order to $r$-dominate $C_i$.
\end{observation}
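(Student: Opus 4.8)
The plan is to reduce the claim to a single distance computation inside one $r$-frame. The observation asks that any $r$-dominating set of the whole graph must contain at least one vertex of $C_i$; I will prove the slightly stronger statement that every vertex lying within distance $r$ of the guard $g_1$ (the top $A$ of one of the two $r$-frames glued to $P_i$) already belongs to $C_i$. Since any $r$-dominating set must in particular $r$-dominate the vertex $g_1 \in C_i$, this immediately forces a vertex of $C_i$ into the dominating set.

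First I would record the key metric property of an $r$-frame: its top $A$ lies at distance exactly $r$ from its bottom path, and at distance at least $r$ from every bottom vertex. This follows from the layered structure of the frame. Assign to each vertex its height above the bottom path, the bottom path having height $0$ and $A$ having height $r$; every edge of the frame changes height by at most one, so any path from $A$ to a bottom vertex has length at least $r$, and an explicit descending path realizes length exactly $r$. Applied to $g_1 = A$ of the frame sitting on $P_i$, this yields $d(v,g_1) \ge r$ for every $v \in P_i$.

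Next I would exploit the construction invariant that all edges added in later steps are incident only to vertices of $\mathcal{P}$; consequently the frames attached to $P_i$ meet the rest of the graph only through $P_i$, so $P_i$ is the unique \emph{gateway} between $C_i$ and its complement. Consider any vertex $u \notin C_i$ together with a shortest path $u=v_0,v_1,\dots,v_k=g_1$ between $u$ and $g_1$, and let $v_j$ be the first vertex of this path inside $C_i$. By the gateway property $v_j \in P_i$, and since $u \notin C_i$ we have $j \ge 1$. Combining this with the frame distance bound gives $d(u,g_1)=k \ge j + d(v_j,g_1) \ge 1 + r > r$, so $u$ cannot $r$-dominate $g_1$. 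Hence every vertex able to $r$-dominate $g_1$ lies in $C_i$ (and note that a dominating vertex of $P_i$ is itself a vertex of $C_i$), which proves the observation.

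I expect the only delicate point to be the gateway step: one must verify that a shortest path cannot shortcut into a frame except through $P_i$. This is exactly guaranteed by the promise that all later edges touch only $\mathcal{P}$, together with the fact that the non-path vertices of the two frames have all their neighbours inside $C_i$; the height-function distance computation itself is routine once the height assignment is fixed.
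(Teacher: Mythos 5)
Your proof is correct and matches the paper's intent: the paper states this observation as an immediate consequence of the guard construction ("In order to $r$-dominate the guards, we see\dots"), and the argument it has in mind is exactly yours — the guard $g_1$ sits at distance at least $r$ from the bottom path $P_i$, and since all later attachments touch only the vertices of $\mathcal{P}$, no vertex outside $C_i$ can come within distance $r$ of $g_1$. You have simply written out the height-function and gateway details that the paper leaves implicit.
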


Let $S$ be a set of $p$ vertices, one selected from each path in $\cal 
P$.  Let $\cal S$ be the collection of all such sets. 
We injectively map each set in ${\cal S}$ to a particular truth assignment for the corresponding group of variables.  Since the number of sets in ${\cal S}$ maybe larger than the number of truth assignments, we remove the sets that are not mapped to any truth assignment. For every $S \in
{\cal S}$, add a vertex $x_S$, and for each $P \in {\cal P}$,
identify an $r-$frame with top $x_S$ avoiding the vertex in $S \cap P$ with $P$ (refer to Fig.\ref{fig:group-gadgets} (a)). Attach each $x_S$ to a distinct vertex $\bar x_S$ via a path of length $r-1$. We then connect $\bar{x}_S $ to two new vertices $x$ and $x'$, for all $S \in {\cal S}$, and attach paths of length $r-1$ to each of $x$ and $x'$. Since no vertex in ${\cal P}$ can $r-$dominate, for example, $x$, we get: 

\begin{observation} \label{obs:ob2}
  The group gadget requires at least $p+1$ vertices to be $r$-dominated.
\end{observation}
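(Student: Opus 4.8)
The plan is to combine Observation~\ref{obs:ob1}, applied to each of the $p$ subgraphs $C_1,\dots,C_p$ separately, with a single additional forced vertex coming from the ``hub'' vertex $x$. First I would observe that the subgraphs $C_1,\dots,C_p$ are pairwise vertex-disjoint: each $C_i$ is built from the path $P_i$ together with its two private guard frames (tops $g_1,g_2$), these involve distinct vertex sets for distinct $i$, and the $r$-frames with top $x_S$, the vertices $x_S,\bar x_S,x,x'$, and the pendant paths are all introduced later and belong to no $C_i$. Since the guards $g_1,g_2$ of $C_i$ are tops of $r$-frames whose only connection to the rest of the graph runs through $P_i\subseteq C_i$, every vertex within distance $r$ of a guard lies in $C_i$; hence Observation~\ref{obs:ob1} guarantees that any $r$-dominating set $D$ satisfies $D\cap C_i\neq\emptyset$. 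By disjointness this already forces $|D\cap\bigcup_i C_i|\geq p$.

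Next I would produce the extra ``$+1$'' from the vertex $x$. The key claim is that the ball of radius $r$ around $x$ is disjoint from $\bigcup_i C_i$; granting this, $x$ must be $r$-dominated by some vertex outside $\bigcup_i C_i$, which is necessarily distinct from the $\geq p$ vertices already counted, giving $|D|\geq p+1$. To verify the claim I would trace the only routes from $x$ into the paths $\mathcal P$: the vertex $x$ is adjacent only to the vertices $\bar x_S$ and to its own pendant path, each $\bar x_S$ reaches a top $x_S$ along a path of length $r-1$, and each $x_S$ sits at the top of an $r$-frame attached to some $P\in\mathcal P$. Thus for any path vertex $v$,
\[
d(v,x)\;=\;\min_S\bigl(d_{\mathrm{frame}}(v,x_S)+(r-1)+1\bigr)\;\geq\;1+(r-1)+1\;=\;r+1,
\]
using that a bottom-path vertex of an $r$-frame is at distance at least $1$ from the frame's top. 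Every vertex of a guard frame reaches $x$ only through its base path $P_i$, so it is at least as far from $x$ as the nearest vertex of $P_i$; hence no vertex of any $C_i$ lies within distance $r$ of $x$, establishing the claim.

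The main obstacle is the distance estimate in the second step: one has to be sure that the newly added $r$-frames with top $x_S$, which attach the hub region directly to the paths, do not create a shortcut bringing some $C_i$-vertex within distance $r$ of $x$. This reduces to the single inequality $d_{\mathrm{frame}}(v,x_S)\geq 1$ together with the length-$(r-1)$ spacer between $x_S$ and $\bar x_S$, so the argument hinges on the frame gadget genuinely placing its top at distance at least $1$ (indeed the full height) above the bottom path, a property inherited from the definition of an $r$-frame. Once this is in hand, the two counts add and the bound $p+1$ follows; I expect the remainder to be routine bookkeeping.
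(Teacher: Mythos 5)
Your proof is correct and follows essentially the same route as the paper: one forced vertex per $C_i$ via Observation~\ref{obs:ob1} (using that the $C_i$ are disjoint), plus one more because the ball of radius $r$ around $x$ misses $\bigcup_i C_i$. In fact you are slightly more careful than the paper, which only remarks that no vertex of $\mathcal{P}$ can $r$-dominate $x$, whereas the count really requires (as you verify) that no vertex of any $C_i$ — guard-frame vertices included — lies within distance $r$ of $x$.
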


\begin{SCfigure}
  \vspace{-20pt}
  \centering
  \input{figs/s-assemble.tex}
  \caption{The super group gadgets. Each shaded square represents a group gadget. Each row of group gadgets represents one group of variables. Each column of group gadgets represents all groups.}  \label{fig:assemble}
\end{SCfigure}
\paragraph{The super group gadgets} Refer to
Fig.~\ref{fig:assemble}.  For each group $F_i$, create $m(2rpt+1)$
copies $\{\hat{B}^1_i,...,\hat{B}^{(2rpt+1)m}_i\}$ of a group gadget. For every $j = 1, \ldots, (2rpt+1)m-1$ and $\ell = 1,\ldots,p$, connect the last vertex of $P_\ell$ in $\hat B_i^j$ to the first vertex of $P_\ell$ in $\hat B_i^{j+1}$. Add two vertices $h_1$ and $h_2$, connect $h_1$ to the first vertices of paths in $\hat{B}^1_i$ and $h_2$ to the last vertices of paths in $\hat{B}_i^{m(2rpt+1)}$ and attach two paths of length $r$ to each of $h_1$ and $h_2$.

\paragraph{Connecting the super group gadget to represent a SAT
  formula} Recall that each set $S \in {\cal S}$ for a particular
group of variables corresponds to a particular truth assignment for that group of
variables.  For each clause $j$, we create $2rpt+1$ {\em clause}
vertices $c^\ell_j$ for $\ell = 0, \ldots, 2ptr$ and connect each
clause vertex to all vertices $\bar x_S$ in $\{\hat B^{m\ell+j}_i | i = 1,\ldots,t\}$, for all $S \in {\cal S}$ that correspond to truth assignments that satisfy the clause $j$. Connect a path of length $r-1$ to each clause vertex.

\begin{lemma} \label{lm:sat-graph}
  If $\phi$ has a satisfying assignment, $G$ has a dominating of size
  $(p+1)tm(2rpt+1) + 2$.
\end{lemma}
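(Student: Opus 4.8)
The plan is to prove this forward direction constructively: from a satisfying assignment $\pi$ of $\phi$ I would read off an explicit $r$-dominating set $D$ of the stated size and then verify domination locally, part by part. First I fix, for each group $F_i$, the set $S_i \in \mathcal{S}$ that the construction associates with the restriction $\pi|_{F_i}$ (such an $S_i$ exists because the retained sets of $\mathcal{S}$ are in bijection with the truth assignments of $F_i$). The candidate set $D$ is assembled as follows: in every copy $\hat B_i^j$ of every group $i$ I place the $p$ path-vertices of $S_i$ (one per path $P_\ell$, at the position encoding the chosen digit) together with the single vertex $\bar x_{S_i}$ of that copy, and finally I add the two vertices $h_1$ and $h_2$. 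Counting gives $p+1$ vertices in each of the $t\cdot m(2rpt+1)$ copies plus $2$, i.e.\ exactly $(p+1)tm(2rpt+1)+2$, so the size is immediate and all the work is the verification that $D$ $r$-dominates $G$.

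The engine of the verification is the metric behaviour of the $r$-frame, which I would isolate as the key property: in a plain $r$-frame every bottom vertex lies within distance $r$ of the top, whereas in an $r$-frame avoiding $p$ the top is within distance $r$ of every bottom vertex \emph{except} $p$, which is pushed past distance $r$ because the avoiding operation severs $p$ from the triangular interior. Granting this, the guards are easy: each guard of a $C_\ell$ sits atop a plain frame whose bottom is $P_\ell$, so the path-vertex of $S_i$ chosen in that same copy, being a bottom vertex, $r$-dominates it (this is the content of Observation~\ref{obs:ob1}). For the selector vertices I would argue that, with the path-dominators encoding $S_i=(k_1,\dots,k_p)$, any $x_{S'}$ with $S'=(k'_1,\dots,k'_p)\neq S_i$ differs in some coordinate $\ell$; since the $\ell$-th frame of $x_{S'}$ avoids $k'_\ell$ while our dominator sits at $k_\ell\neq k'_\ell$, that dominator reaches $x_{S'}$ within $r$, so every such $x_{S'}$ is dominated by path-vertices. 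The single exception is $x_{S_i}$, whose every frame avoids exactly the position we occupy, so no path-vertex reaches it; this is precisely what $\bar x_{S_i}$ is placed to cover, lying at distance $r-1$ from $x_{S_i}$. The same $\bar x_{S_i}$ mops up the remaining local debris: it is adjacent to $x$ and $x'$, reaches the far ends of their length-$(r-1)$ pendant paths at distance exactly $r$, and for each $S'$ covers the interior of the $x_{S'}$--$\bar x_{S'}$ path from the $\bar x_{S'}$ end, the endpoint $x_{S'}$ itself being handled above.

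Two global issues remain. For the chained paths, because I place the dominator at the \emph{same} position of $P_\ell$ in every copy and consecutive copies contribute $2r+1$ new path-vertices, these dominators are spaced exactly $2r+1$ apart along the $P_\ell$-chain and hence tile it by intervals of radius $r$; the two ends of each chain are closed off by $h_1$ and $h_2$, which additionally $r$-dominate their own length-$r$ pendant paths. For the clauses, since $\pi$ satisfies clause $j$ some group $i$ has $S_i$ satisfying $j$, so by construction the clause vertex $c^\ell_j$ is adjacent to $\bar x_{S_i}$ in copy $m\ell+j$, which lies in $D$; thus $c^\ell_j$ and its length-$(r-1)$ pendant path (endpoint at distance $r$ from $\bar x_{S_i}$) are dominated, and this holds for every offset $\ell$ because $\bar x_{S_i}$ is chosen in \emph{every} copy. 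I expect the main obstacle to be exactly the frame metric property together with the accompanying distance bookkeeping: establishing the avoiding-frame inequalities from the grid-plus-diagonal construction, and checking that the two coverage zones (a path-dominator reaching $x_{S'}$, and $\bar x_{S_i}$ reaching the pendant interiors and clause paths) meet without leaving a gap. Once those distances are pinned down, the chain-tiling and the counting are routine.
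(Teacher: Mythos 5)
Your proposal is correct and follows essentially the same route as the paper: the same dominating set (the $p$ path-vertices of $S_i$ plus $\bar x_{S_i}$ in every copy, plus $h_1,h_2$), the same counting, and the same case-by-case verification of coverage. Your write-up is in fact somewhat more explicit than the paper's about the frame metric property and about why $x_{S'}$ for $S'\neq S_i$ is reached through a coordinate where the two sets differ.
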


\begin{proof}
Given a satisfying assignment of $\phi$, we construct the dominating set $D$ of $G$ as follows. For each group gadget $B_i^j, 1 \leq i \leq t, 1 \leq j \leq m(2rpt+1)$, we will select $\{\bar{x}_S\} \cup S$, for $S$ corresponding to the satisfying assignment of the group variables, for the $r-$dominating set. $S$ $r-$dominates:
	\begin{itemize} [noitemsep,nolistsep]
	\item All the guards and some vertices of their $r-$frames within distance $r$ from $S$.
	\item All the vertices $x_{S'}$ and some vertices of their $r-$frames within distance $r$ from $S$ for all $S' \in {\cal S} \backslash \{S\}$.
	\item All the vertices of the path $P_i$ in $B_i^j$ and maybe some vertices of its copies in $B_i^{j+1}$ and $B_i^{j-1}$ within distance $r$ from $S$(refer to
Fig.~\ref{fig:consecutive}) 
	\end{itemize}
The remaining vertices of the $r-$frames of guards and $x_S$ for $S \in {\cal S}$ that are not $r-$dominated by $S$ in $B_i^j$ would be $r-$dominated by the set $S$ of the nearby group gadgets.\\
The set of vertices that are $r$-dominated by the vertex $\bar{x}_S$ include:
	\begin{itemize} [noitemsep,nolistsep]
	\item The vertices of the path from $x_S$ to $\bar{x}_S$.
	\item The clause vertex connected to $\bar{x}_S$ and its attached path.
	\item The vertex $x$ and $x'$ and their attached paths.
	\item The vertices $\bar{x}_{S'}$ and the vertices of the path from $x_{S'}$ to $\bar{x}_{S'}$ for $S' \in {\cal S} \backslash \{S\}$.    
	\end{itemize}	
Taking the union over all $t$ groups, and all $m(2rpt+1)$ copies of the group gadgets in the super group gadgets gives $(p+1)tm(2rpt+1)$ vertices. Adding vertices $h_1$ and $h_2$ gives the lemma.
\end{proof}

\begin{lemma} \label{lm:graph-sat}
  If $G$ has a dominating set of size $(p+1)tm(2ptr+1) + 2$, then
  $\phi$ has a satisfying assignment.
\end{lemma}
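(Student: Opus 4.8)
The plan is to prove the structural rigidity forced by exact optimality: a dominating set $D$ of size exactly $(p+1)tm(2ptr+1)+2$ must fall into a canonical form from which a satisfying assignment can be read off. Write $M=m(2ptr+1)$ for the number of copies of each group gadget. First I would argue the size bound is tight everywhere. By Observation~\ref{obs:ob2} each of the $tM$ copies of a group gadget needs at least $p+1$ vertices of $D$, and the two pendant gadgets hanging off $h_1$ and $h_2$ (paths of length $r$, whose tips lie at distance exactly $r$) force two further vertices that cannot be reused by any copy; since these demands are vertex-disjoint, their total $(p+1)tM+2$ equals $|D|$, so every inequality is an equality and each copy $\hat B_i^j$ contains exactly $p+1$ vertices of $D$.

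Next I would pin down the shape of these $p+1$ vertices. The apex vertices $x,x'$ of a copy each carry a pendant path of length $r-1$, so their tips can only be $r$-dominated from a common neighbour of $x$ and $x'$; the only such neighbours are the $\bar x_{S'}$, so exactly one $\bar x_S$ is selected. The remaining $p$ vertices must $r$-dominate the $2p$ guards, and by Observation~\ref{obs:ob1} this forces exactly one selected vertex on each path $P_\ell$, at some position $q_{i,\ell}^{\,j}\in\{1,\dots,2r+1\}$; call this tuple $S^\ast$. Using the defining ``avoiding'' property of the frames with top $x_{S'}$, a path selection $r$-dominates $x_{S'}$ if and only if it differs from $S'$ in some coordinate; since $\bar x_S$ covers only $x_S$, every other $x_{S'}$ must be covered by $S^\ast$, which forces $S^\ast=S$. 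Thus each copy encodes a genuine truth assignment $a_i^{\,j}$ of the group $F_i$, exactly as in the companion construction of Lemma~\ref{lm:sat-graph}.

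The heart of the argument is to show this encoding is almost constant along each chain. Fix a path $P_\ell$ of group $F_i$ and write $q^{\,j}=q_{i,\ell}^{\,j}$. The pendant frames contribute no vertex of $D$, so the only members of $D$ that can $r$-dominate the $2r+1$ vertices of $P_\ell$ in copy $j$ are the selections $q^{\,j-1},q^{\,j},q^{\,j+1}$ reached through the two chaining edges. A distance count shows that a vertex left undominated on the low-index end of copy $j$ forces $q^{\,j-1}\ge q^{\,j}$ and one on the high-index end forces $q^{\,j+1}\le q^{\,j}$; a short case analysis then rules out any strictly increasing step $q^{\,j}<q^{\,j+1}$, so $(q^{\,j})_j$ is non-increasing. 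Taking values in a set of size $2r+1$, it changes at most $2r$ times, and summing over all $tp$ paths, the encoded assignment changes in at most $2rtp$ of the $M$ columns.

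Finally I would run an averaging step. These at most $2rtp$ change-points split the $M=m(2ptr+1)$ columns into at most $2rtp+1$ maximal runs of constant assignment, so some run $R$ has length at least $M/(2rtp+1)=m$. Any $m$ consecutive columns meet every residue class modulo $m$, and the clause vertices for clause $j$ attach to the columns $m\ell+j\equiv j\pmod m$; hence $R$ contains a checking column for every clause. Let $a^\ast$ be the assignment constant on $R$. For each clause $j$, the tip of the pendant path on its checking vertex must be $r$-dominated, and since the budget leaves no vertex on the clause pendant paths, this can only happen through a selected neighbour $\bar x_S$, which is present exactly when some group's assignment in that column satisfies clause $j$; thus $a^\ast$ satisfies clause $j$, and hence every clause, giving a satisfying assignment of $\phi$. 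I expect the main obstacle to be the two rigidity steps — forcing the canonical form $S\cup\{\bar x_S\}$ in every copy and proving the monotonicity of $(q^{\,j})$ through the chaining edges and the boundary gadgets $h_1,h_2$ — since both rest on exact distance computations in the frame geometry; the counting and pigeonhole that follow are routine.
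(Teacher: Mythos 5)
Your proposal is correct and follows essentially the same route as the paper's proof: tightness of the budget via Observation~\ref{obs:ob2} plus the $h_1,h_2$ pendants, the canonical form $S\cup\{\bar x_S\}$ in each copy, a bound of $2rpt$ on the ``bad pairs'' where the selected position strictly drops between consecutive copies, and a pigeonhole step yielding $m$ consecutive consistent columns that meet every clause-checking index. The only cosmetic differences are that you pigeonhole over maximal constant runs rather than the aligned blocks $\{\hat B_i^{m\ell+j}\}_{j=1}^m$, and you spell out the rigidity $S^\ast=S$ slightly more explicitly than the paper does.
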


\begin{proof} Let $D$ be the $r-$dominating set of size $(p+1)tm(2ptr+1) + 2$. Since some vertex in the paths attached to $h_1$ and $h_2$ must be in $D$, we can replace these with $h_1$ and $h_2$. By observation~\ref{obs:ob2}, at least $(p+1)$ vertices of each group gadget must  be in $D$, which implies that exactly $(p+1)$ vertices are chosen from each group gadget since there are $tm(2ptr+1)$ group gadgets. Let $\hat{B}_i^{j}, 1 \leq i \leq t, 1 \leq j \leq m(2ptr+1)$ be a group gadget and let $P_k \in {\cal P} = \{P_1,\ldots,P_p\}$ be a path of $\hat{B}_i^{j}$. By observation~\ref{obs:ob1}, at least one vertex from each $P_k , 1 \leq k \leq p$ must be included in $D$. To dominate the vertex $x$ and $x'$ and their attached paths, at least one vertex from the set $\{\bar{x}_S | S \in {\cal S}\}$ must be selected. Therefore, the set of $p+1$ vertices in $D \cap \hat{B}_i^{j}$ includes:
  	\begin{itemize} [noitemsep, nolistsep]
  	\item $p$ vertices, one from each path $P_k , 1 \leq k \leq p$, which make up the set $S$.
  	\item the vertex $\bar{x}_S$ that corresponds to $x_S$ since $x_S$ is not dominated by $S$. 
  	\end{itemize}

 We say that the dominating set $D$ is \emph{consistent} with a set 
  of gadgets $\{\hat{B}_i\}_{i=1}^k$ iff $D \cap {\cal P}$ is the same for all
  $\hat{B}_i$. We show that there exits a number $ \ell \in \{0,1,\ldots,2rtp\}$
  such that $D$ is consistent with the set of gadget $\{\hat{B}_i^{m\ell+j}| 1 \leq j \leq m$\} for each $1 \leq i \leq t$. For two consecutive gadgets $\hat{B}^q_i$ and $\hat{B}^{q+1}_i$, if two vertices   $p^a_i$ and $p^b_i$ of the path $P_i$ in $\hat{B}^q_i$  and of its copy in $\hat{B}^{q+1}_i$, respectively, are selected, the distance between them must be less than $2r + 1$ (refer to Figure ~\ref{fig:consecutive}). Therefore, we have $b \leq a$. We call two consecutive gadgets $\hat{B}^q_i$ and $\hat{B}^{q+1}_i$ \emph{bad} pair if $b < a$. Since the distance between $p^a_i$ and $p^b_i$ is smaller than $2r+1$, there are at most $2pr$ consecutive bad pairs for each $i$ and for $t$ groups of variables $F_i, 1 \leq i \leq t$, the number of bad pairs is no larger than $2rpt$. By the pigeonhole principle, there exists a number $\ell \in \{0,1,\ldots, 2rtp\}$ such that $D$ is consistent with the set of gadgets $\hat{B}_i^{m\ell+j} , 1 \leq j \leq m$ for all $i$.

For each $i \in \{1,\ldots,t\}$, let $\{\hat{B}_i^{m\ell+j} | 1 \leq j \leq m \}$ for some $ \ell \in \{0,1,\ldots,2prt\}$ be the set of group gadgets that is consistent with $D$ and let $F_i$ be the corresponding group of variables. We assign to the variables of group $F_i$ the values of assignment corresponding to the selected set $S$. This assignment satisfies the clauses of $\phi$ that are connected to the vertices $\bar{x}_S$. Because all clauses of $\phi$ are $r-$dominated, the truth assignment of all groups $F_i , 1 \leq i \leq t,$ makes up a satisfying assignment of $\phi$.
\end{proof}
\begin{SCfigure}
  \centering
\input{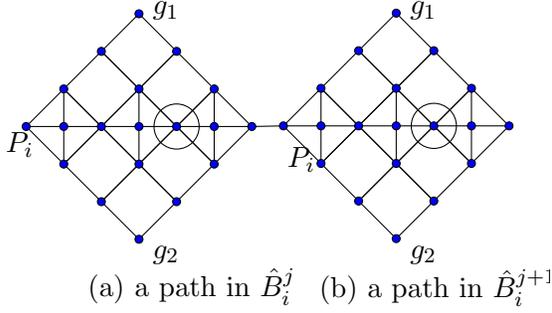}
\caption{Two paths $P_i$ in two consecutive gadgets for $r=3$. Two circled vertices are in the $r-$dominating set $D$. The vertex $p_i^1$ of the gadget $\hat{B}^{j+1}_i$ is not dominated by the vertex $p_i^5$ of the same gadget but it is dominated by the vertex $p_i^5$ of $\hat{B}^{j}_i$. The distance between two circled vertices must be no larger than $7$ $\left( = 2r+1 \right)$. The numbering of the vertices of the horizontal path is shown in Figure~\ref{fig:frame}.}
  \label{fig:consecutive}
\end{SCfigure}

We prove the following bound on pathwidth using a \emph{mixed search game}~\cite{TUK95}.  Since the proof is similar to that of Lokshtanov et al.~\cite{LMS11}, we provide the proof in Appendix~\ref{App:AppendixB}.

\begin{lemma} \label{lm:pw-bound}
$\pw$(G) $\leq tp + O(rp)$.
\end{lemma}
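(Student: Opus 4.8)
The plan is to bound the pathwidth by exhibiting a winning strategy in the mixed search game of \cite{TUK95} that uses at most $tp + O(rp)$ searchers, and then invoke the standard relation that pathwidth is at most the mixed search number $\text{ms}(G)$: a $k$-searcher winning strategy certifies $\text{ms}(G)\le k$ and hence $\pw(G)\le \text{ms}(G)\le k$, the additive $O(rp)$ slack comfortably absorbing the off-by-one between the two parameters. The searchers sweep $G$ column by column. Recall that the $t$ super-group gadgets are aligned so that column $c\in\{1,\ldots,m(2rpt+1)\}$ consists of the $t$ group gadgets $\hat B_i^{c}$ (one per group) together with the single clause vertex $c^{\ell}_j$ determined by $m\ell+j=c$ and its pendant path; this clause vertex is the only object joining distinct groups at a common column.

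The invariant I would maintain is that, between columns, the cleaned region is separated from the contaminated region by the $tp$ \emph{interface} vertices: for each group $i$ the $p$ endpoints of the paths of $\hat B_i^{c}$ that are joined to $\hat B_i^{c+1}$, taken over all $t$ groups. Holding these $tp$ searchers is the source of the dominant $tp$ term. To advance from column $c$ to $c+1$ I process the $t$ groups one at a time. While cleaning a single gadget $\hat B_i^{c}$ I keep its $p$ backbone paths $\mathcal P$ (each of length $2r+1$, so $O(rp)$ vertices) occupied, together with the shared vertices $x$, $x'$ and the current clause vertex, whereas the remaining $t-1$ groups still contribute only their $p$ interface searchers; the running count is therefore $(t-1)p + O(rp) = tp + O(rp)$. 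After $\hat B_i^{c}$ is clean I slide its searchers forward to the right interface and move to group $i+1$, and only once all $t$ groups at column $c$ are done is the clause vertex released.

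With the backbone $\mathcal P$ and the apices $x,x'$ occupied, every structure hanging off the gadget is bounded by this occupied core and can be swept piece by piece without accumulation. Each guard $r$-frame, and for each $S\in\mathcal S$ each of the $p$ frames with top $x_S$, is enclosed by an occupied bottom path and an occupied apex, so it is cleaned with $O(r)$ extra searchers \emph{one frame at a time}; the pendant $x_S$--$\bar x_S$ path and the edges $\bar x_S x$, $\bar x_S x'$, $\bar x_S c^{\ell}_j$ are cleaned with both endpoints simultaneously occupied when the sweep reaches $\bar x_S$. The key point is that, although there are exponentially many $x_S$, they are visited sequentially, so the interior of one gadget never needs more than $O(rp)$ searchers above the interface budget. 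Keeping the clause vertex (a single vertex plus an $O(r)$ pendant) occupied across all $t$ groups lets its edges to every relevant $\bar x_S$ be cleared within the same column, and the endpoints $h_1,h_2$ with their $O(r)$-length pendant paths are handled as $O(r)$ boundary cases at the two ends of the sweep.

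The main obstacle is precisely this exponential blow-up: a decomposition that opened all $x_S$ (or all attached frames) at once would pay $|\mathcal S| = (2r+1)^{O(p)}$ and destroy the bound. The argument hinges on the fact that all these attachments share the small occupied core $\mathcal P\cup\{x,x'\}$ together with at most one apex $x_S$ at a time, so each $r$-frame — itself of pathwidth $O(r)$, being a triangulated $r\times r$ grid region — can be entered, cleaned, and evacuated independently. Verifying that this sequential cleaning never recontaminates the core, and that one $r$-frame really is clearable with $O(r)$ searchers given an occupied bottom path and apex (the step that most closely mirrors \cite{LMS11}), is the technical heart; once it is in place, the count of $tp$ interface vertices plus an $O(rp)$ interior budget is routine and yields $\pw(G)\le tp + O(rp)$.
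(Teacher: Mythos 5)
Your proposal is correct and follows essentially the same route as the paper's own proof: a mixed search strategy that parks $tp$ searchers on the interface (entry/exit) vertices between consecutive group gadgets, sweeps the construction round by round, cleans the $t$ gadgets of each round one at a time with an $O(rp)$ interior budget (occupying the $p$ backbone paths, $x$, $x'$, and the current clause vertex, and visiting the exponentially many $x_S$-frames sequentially with $O(r)$ reusable searchers each), and treats $h_1,h_2$ as $O(r)$ boundary cases. The paper's accounting ($tp+(5+p)(2r+1)+4$ searchers) matches your $tp+O(rp)$ bound.
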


Combined with Lemmas~\ref{lm:sat-graph} and~\ref{lm:graph-sat}, we get Theorem~\ref{thm:lb-ds}.

\section{Lower Bound for Connected $r$-dominating Set}\label{sec:lb-cds}
In this section, we prove Theorem~\ref{thm:dp-thm-cds}. The main idea is similar to that of the previous section: a reduction from $n_0$-variable, $m$-clause SAT to an instance of connected $r$-dominating set in a graph of pathwidth \[\pw \leq \frac{n_0p}{\lfloor p\log(2r+2) \rfloor} + O\left((2r+2)^{2p}\right)\mbox{ for any integer $p$.}\] Given this reduction, the final argument for Theorem~\ref{thm:lb-cds} is similar to the argument at the beginning of Section \ref{sec:lb-ds}.  Let $\phi$ be a SAT formula with $n_0$ variables.  For a given integer $p$, we assume that $n_0$ is divisible by $\lfloor p \log(2r+2) \rfloor$. We partition $\phi$'s variables into $t = \frac{n}{\lfloor p \log(2r+2)\rfloor}$ groups of variables $\{F_1,F_2, \ldots,F_t\}$ each of size $\lfloor p \log(2r+2) \rfloor$.  We will speak of an $r$-dominating {\em tree} as opposed to a connected $r$-dominating set: the tree is simply a witness to the connectedness of the $r$-dominating set.  We treat the problem as rooted: our construction has a global root vertex, $r_T$, which we will require to be in the \crds\ solution.  This can be forced by attaching a path of length $r$ to $r_T$.  In all the figures below, the dashed lines represent paths of length $r+1$ connected to $r_T$.

In the following, the length of a path is given by the number of edges in the path.

\paragraph{Core} A {\em core} is illustrated below. It is composed of a path with $2r+3$ vertices $a_1, a_2, \ldots, a_{2r+3}$, $2r+2$ edges $s_1, s_2, \ldots, s_{2r+2}$ (called {\em segments}), consecutive odd-indexed vertices connected by a subdivided edge and consecutive even-indexed vertices connected by a subdivided edge.  The even indexed vertices $a_2, a_4, \ldots, a_{2r+2}$ are connected the root $r_T$ via paths of length $r+1$.
\vspace{-3em}
\begin{center}
  \definecolor{xdxdff}{rgb}{0.49019607843137253,0.49019607843137253,1.0}
\definecolor{qqqqff}{rgb}{0.0,0.0,1.0}
\begin{tikzpicture}[line cap=round,line join=round,>=triangle 45,x=2.0cm,y=2.0cm]
\clip(1.5,0.5) rectangle (8.0,3.0);
\draw (2.0,1.0)-- (3.0,1.0);
\draw (3.0,1.0)-- (4.0,1.0);
\draw (2.5,2.0)-- (3.5,2.0);
\draw (2.5,2.0)-- (2.0,1.0);
\draw (2.5,2.0)-- (3.0,1.0);
\draw (3.5,2.0)-- (3.0,1.0);
\draw (3.5,2.0)-- (4.0,1.0);
\draw (4.0,1.0)-- (5.0,1.0);
\draw (3.5,2.0)-- (4.5,2.0);
\draw (3.5,2.0)-- (4.0,1.0);
\draw (4.5,2.0)-- (4.0,1.0);
\draw (4.5,2.0)-- (5.0,1.0);
\draw (6.5,1.0)-- (7.5,1.0);
\draw (6.0,2.0)-- (7.0,2.0);
\draw (6.0,2.0)-- (6.5,1.0);
\draw (7.0,2.0)-- (6.5,1.0);
\draw (7.0,2.0)-- (7.5,1.0);
\draw[dashed] (2.5,2.5)-- (2.5,2.0);
\draw[dashed] (3.5,2.5)-- (3.5,2.0);
\draw[dashed] (4.5,2.5)-- (4.5,2.0);
\draw[dashed] (6.0,2.5)-- (6.0,2.0);
\draw[dashed] (7.0,2.5)-- (7.0,2.0);
\draw (1.9723189990216796,1.59951203345955077) node[anchor=north west] {$s_1$};
\draw (2.479337890704167,1.5995120334595507) node[anchor=north west] {$s_2$};
\draw (2.9035438634606364,1.5995120334595507) node[anchor=north west] {$s_3$};
\draw (3.5019692146061323,1.5995120334595507) node[anchor=north west] {$s_4$};
\draw (3.917581646825611,1.5995120334595507) node[anchor=north west] {$s_5$};
\draw (4.498819916897124,1.5995120334595507) node[anchor=north west] {$s_6$};
\draw (6.947978969939647,1.5995120334595507) node[anchor=north west] {$s_{2r+2}$};
\draw (6.432366537720168,1.5995120334595507) node[anchor=north west] {$s_{2r+1}$};
\draw (6.002689510870603,1.5995120334595507) node[anchor=north west] {$s_{2r}$};

\draw (1.8348223504298191,1.0151512769441415) node[anchor=north west] {$a_1$};
\draw (2.1644993772793844,2.2698081953448725) node[anchor=north west] {$a_2$};
\draw (2.917608458090724,0.9807771147961764) node[anchor=north west] {$a_3$};
\draw (3.1699436201073677,2.2955888169558465) node[anchor=north west] {$a_4$};
\draw (3.880084998233751,0.9721835742591851) node[anchor=north west] {$a_5$};
\draw (4.1839814034723425,2.2698081953448725) node[anchor=north west] {$a_6$};
\draw (4.911309862672708,1.0065577364071503) node[anchor=north west] {$a_7$};
\draw (5.501915592075908,2.2698081953448725) node[anchor=north west] {$a_{2r}$};
\draw (6.217528024295387,1.1010866823140546) node[anchor=north west] {$a_{2r+1}$};
\draw (6.864392132218935,2.3213694385668204) node[anchor=north west] {$a_{2r+2}$};
\draw (7.308907672493283,1.0667125201660894) node[anchor=north west] {$a_{2r+3}$};

\draw[dotted] (4.5,2) -- (6,2);
\draw[dotted] (5,1) -- (6.5,1);
\begin{scriptsize}
\draw [fill=qqqqff] (2.0,1.0) circle (1.5pt);
\draw [fill=qqqqff] (3.0,1.0) circle (1.5pt);
\draw [fill=qqqqff] (4.0,1.0) circle (1.5pt);
\draw [fill=qqqqff] (2.5,2.0) circle (1.5pt);
\draw [fill=qqqqff] (3.5,2.0) circle (1.5pt);
\draw [fill=xdxdff] (3.0,2.0) circle (1.5pt);
\draw [fill=xdxdff] (2.5,1.0) circle (1.5pt);
\draw [fill=xdxdff] (3.5,1.0) circle (1.5pt);
\draw [fill=qqqqff] (5.0,1.0) circle (1.5pt);
\draw [fill=qqqqff] (4.5,2.0) circle (1.5pt);
\draw [fill=xdxdff] (4.0,2.0) circle (1.5pt);
\draw [fill=xdxdff] (3.5,1.0) circle (1.5pt);
\draw [fill=xdxdff] (4.5,1.0) circle (1.5pt);
\draw [fill=qqqqff] (6.5,1.0) circle (1.5pt);
\draw [fill=qqqqff] (7.5,1.0) circle (1.5pt);
\draw [fill=qqqqff] (6.0,2.0) circle (1.5pt);
\draw [fill=qqqqff] (7.0,2.0) circle (1.5pt);
\draw [fill=xdxdff] (6.5,2.0) circle (1.5pt);
\draw [fill=xdxdff] (7.0,1.0) circle (1.5pt);
\end{scriptsize}
\end{tikzpicture}
\end{center}
\vspace{-3em}

\paragraph{Pattern} A {\em pattern} $P_r(m)$ is illustrated below.  It is a tree-like graph with $m$ leaves and a single root $r_P$ such that the distance from the root to leaves is $r$.  The structure depends on the parity of $r$; if $r$ is even, the children of vertex $h$ are connected by a clique (indicated by the oval).  The dotted lines represent paths of length $\frac{r-1}{2}$ for $r$ even and $\frac{r}{2}-1$ for $r$ odd.  In future figures, we represent a pattern by a shaded box.
\vspace{-3em}
\begin{center}
\resizebox{1.0\textwidth}{!}{
 \input{figs/s-pattern.tex}
  }
\end{center}
\vspace{-3em}

\begin{observation} \label{obs:pattern}
A leaf of a pattern $r$-dominates all but the other leaves of the pattern.
\end{observation}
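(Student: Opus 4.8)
The plan is to fix an arbitrary leaf $p_i$ and to compute its distance to every other vertex of $P_r(m)$, splitting into the two cases determined by the parity of $r$. In both cases the pattern is organized around the single branching vertex $h$: the root $r_P$ lies at distance $d_1$ from $h$ along a path, and every leaf $p_j$ lies at distance $d_2$ from $h$, where $d_1+d_2=r$ so that $\mathrm{dist}(r_P,p_j)=r$ as demanded by the construction. When $r$ is odd (case (a)) the pattern is a tree with $d_1=\frac{r-1}{2}$ and $d_2=\frac{r+1}{2}$; when $r$ is even (case (b)) we have $d_1=\frac{r}{2}-1$ and $d_2=\frac{r}{2}+1$, and the $m$ children $q_1,\ldots,q_m$ of $h$ (each at distance $1$ from $h$ and distance $d_2-1=\frac{r}{2}$ from its own leaf) are joined pairwise into a clique.

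First I would bound the distance from $p_i$ to the vertices lying toward the root: every vertex on the $r_P$-to-$h$ path is within $d_2+d_1=r$ of $p_i$, and $h$ itself is at distance $d_2\le r$, so all of these, including $r_P$, are $r$-dominated. Next I would treat the branch leading to a different leaf $p_j$. In the odd case the shortest $p_i$–$p_j$ walk ascends to $h$ and descends, so a vertex at distance $k$ from $h$ on $p_j$'s branch is at distance $d_2+k$ from $p_i$; this is at most $2d_2-1=r$ for $k\le d_2-1$ and equals $2d_2=r+1$ exactly when $k=d_2$, that is, at $p_j$ itself. In the even case the clique supplies a shortcut: the shortest route to $p_j$'s branch runs $p_i\to q_i\to q_j\to\cdots$, giving distance $(d_2-1)+1+k=d_2+k$ for the vertex at distance $k$ below $q_j$; this stays $\le r$ until the leaf, where it reaches $2d_2-1=r+1$. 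The clique children $q_j$ themselves sit at distance $d_2\le r$ and are thus dominated.

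Assembling these bounds shows that every vertex of $P_r(m)$ other than the leaves $p_j$ with $j\ne i$ lies within distance $r$ of $p_i$, while each such sibling leaf lies at distance exactly $r+1$, which is precisely the statement. The only genuinely delicate point is the even case: the clique edge among the children of $h$ shortens the leaf-to-leaf distance from the \emph{naive} value $2d_2=r+2$ down to $2d_2-1=r+1$, so one must verify both that the shortcut does not drop below $r+1$ (which would let a leaf dominate its siblings and destroy the gadget) and that it does not push any non-leaf vertex past $r$. Because $d_2=\frac{r}{2}+1$ is chosen so that $2d_2-1=r+1$ on the nose, both requirements hold simultaneously, and the tightness of this ``$+1$'' is exactly what the later construction will rely on.
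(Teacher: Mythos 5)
Your proposal is correct: the paper states this observation without proof, treating it as immediate from the construction of $P_r(m)$, and your case analysis by parity of $r$ is exactly the distance check the paper intends the reader to perform. Your distances $d_1+d_2=r$, leaf-to-leaf distance $2d_2=r+1$ in the odd case and $2d_2-1=r+1$ via the clique edge in the even case, match the labels $\frac{r+1}{2},\frac{r-1}{2}$ and $\frac{r}{2}+1,\frac{r}{2}-1$ in the paper's figure, and your remark that the clique shortcut lands exactly on $r+1$ (neither dominating sibling leaves nor leaving any internal vertex undominated) is precisely the tightness the later gadgets exploit.
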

Given a set of $m$ vertices $X$, we say that pattern $P_r(m)$ is attached to set $X$ if the leaves of $P_r(m)$ are identified with $X$.

\paragraph{Core gadget} We connect patterns to the core in such a way as to force a minimum solution to contain a path from $r_T$ to the core, ending with a segment edge.  To each core that we use in the construction (these are not illustrated), we attach one pattern $P_r(r+1)$ to the odd-indexed vertices $a_1, a_2, \ldots, a_{2r+1}$ (but not $a_{2r+3}$) and another pattern $P_r(r+1)$ to the even-indexed vertices $a_2, a_4, \ldots, a_{2r+2}$.  In order to $r$-dominate the roots of these patterns, the dominating tree must contain a path from $r_T$ to an odd-indexed vertex and to an even-indexed vertex.  We attach additional \emph{path-forcing} patterns to guarantee that, even after adding the rest of the construction, this path will stay in the dominating tree.  For $i = 1, \ldots, r$, for the $r+1$ vertices that are $i$ hops from $r_T$, we attach a pattern $P_r(r+1)$.  As a result, at least one vertex at each distance from $r_T$ must be included in the dominating tree.  A core gadget is a subgraph of the larger construction such that edges from the remaining construction only attach to the vertices $a_1, a_2, \ldots, a_{2r+3}$ and $r_T$.  The previous observations guarantee:
\begin{observation}\label{obs:core-path}
  The part of a \crds\ that intersects a core gadget can be modified
  to contain a path from $r_T$ to an odd-indexed vertex (via a segment edge) without
  increasing its size.
\end{observation}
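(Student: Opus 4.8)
The plan is to prove Observation~\ref{obs:core-path} by a local exchange argument on the witnessing tree $T$ of a given \crds\ $D$, driven entirely by the two structural facts already recorded just above it: that $r$-dominating the roots of the two patterns attached to the odd- and the even-indexed vertices forces the tree to reach an odd- and an even-indexed core vertex, and that the path-forcing patterns force a tree vertex at every distance $1,\dots,r$ from $r_T$. The target canonical object is the path consisting of one dedicated length-$(r+1)$ path from $r_T$ to an even-indexed vertex followed by a single segment edge to an adjacent odd-indexed vertex.

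First I would establish that such a path is essentially already present. Let $w\in D$ be a vertex $r$-dominating the root of the pattern on the odd-indexed vertices; by Observation~\ref{obs:pattern} every such $w$ lies inside that pattern, and by connectivity of $T$ the $r_T$--$w$ path in $T$ must leave the pattern through one of its leaves, i.e.\ through an odd-indexed core vertex. Hence $T$ contains a path $Q$ from $r_T$ to an odd-indexed vertex. Since the odd-indexed vertices are joined to the $r_T$-side of the core only through an even-indexed vertex across a segment edge (the dedicated length-$(r+1)$ paths all terminate at even-indexed vertices, and entering the odd-vertex pattern itself already requires an odd-indexed leaf), the \emph{first} odd-indexed vertex met along $Q$ is entered across a segment edge. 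Truncating $Q$ there yields a path from $r_T$ to an odd-indexed vertex whose last edge is a segment; because that odd vertex sits at distance at least $r+2$ from $r_T$, this sub-path contributes at least $r+2$ non-root vertices of $D$ inside the gadget.

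Next I would handle the ``without increasing its size'' clause by swapping $D$'s gadget-internal part for the canonical path $P$. By construction $P$ has exactly $r+2$ non-root vertices, so the swap cannot increase the size given the bound from the previous paragraph. The domination check is then routine: $r_T$ itself $r$-dominates every path vertex at distance at most $r$; the vertex of $P$ at distance $i$ is a leaf of the $i$-th path-forcing pattern and so $r$-dominates its root; the terminal even- and odd-indexed vertices of $P$ $r$-dominate the roots of the even- and odd-vertex patterns respectively; and Observation~\ref{obs:pattern} takes care of the interior of every pattern. Thus each gadget-internal vertex that only the gadget portion could reach remains $r$-dominated after the swap.

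The step I expect to be the main obstacle is the connectivity bookkeeping rather than the counting or the domination. A core gadget meets the rest of the construction only at $r_T$ and at the interface vertices $a_1,\dots,a_{2r+3}$, so after rerouting the $r_T$-to-odd connection onto $P$ I must ensure that any external part of $T$ that hung off some other interface vertex $a_j$ stays attached to $r_T$. The intended remedy is to reconnect each such $a_j$ to $P$ using only the segment and subdivided edges internal to the core, so that the reconnection reuses vertices already charged to $D$ and the gadget portion remains a single subtree rooted at $r_T$; checking that this is always possible without spending extra vertices—so that the size bound genuinely survives—is the delicate point, and it is exactly where the segment/subdivided-edge layout of the core is used.
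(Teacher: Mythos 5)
The paper never writes out a proof of this observation: it is asserted as an immediate consequence of the construction paragraph that precedes it (the two patterns attached to the odd- and even-indexed vertices each force a solution vertex inside them, hence a tree path from $r_T$ into the core, and the $r$ path-forcing patterns force a solution vertex at each distance $1,\dots,r$ from $r_T$). Your proposal runs on exactly this forcing logic, so in spirit it is the paper's argument, spelled out; the expansion is where the trouble appears.

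Two concrete problems. First, the step asserting that the first odd-indexed vertex met along $Q$ is entered across a segment edge, and that it ``sits at distance at least $r+2$ from $r_T$,'' fails once the core sits inside the super-path: $a_{2r+3}$ of one core is identified with $a_1$ of the corresponding core of the next group gadget, and $a_1$, $a_{2r+3}$ of the extreme gadgets are joined directly to $r_T$, so $Q$ can reach an odd-indexed vertex of a given core through an interface vertex without using any segment edge of that core and at distance far below $r+2$. The count of $r+2$ gadget-internal solution vertices should instead come from the fact that the $r$ path-forcing patterns, the even-vertex pattern and the odd-vertex pattern are pairwise vertex-disjoint and each must contain its own solution vertex (only a vertex of a pattern is within distance $r$ of that pattern's root); this is the counting Lemma~\ref{lm:r-satisfying-assignment} actually relies on. Second, the wholesale replacement of the entire gadget-internal part by the canonical path is stronger than what the observation claims and is precisely what creates the connectivity problem you leave open. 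The observation only says the solution can be modified to \emph{contain} such a path: it suffices to replace, one for one, the forced vertex of the $i$-th path-forcing pattern by the $i$-th vertex of the dedicated length-$(r+1)$ path to a chosen even-indexed vertex, the forced vertex of the even pattern by that even-indexed vertex, and the forced vertex of the odd pattern by an adjacent odd-indexed vertex reached by a segment edge, leaving everything else (in particular whatever hangs off the interface vertices $a_1,\dots,a_{2r+3}$) in place. That keeps the size, preserves domination by Observation~\ref{obs:pattern}, and avoids the reconnection bookkeeping you identify as the obstacle. As written, your proof is incomplete at both of these points.
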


\paragraph{Group gadget} For each group $F_i$ of variables, we construct a {\em group gadget} which consists of $p$ cores $\{C_1,C_2, \ldots, C_p\}$ and is illustrated in Figure~\ref{fig:r-group-gadget}.  Let $S$ be a set of $p$ segments, one from each core, and let $\cal S$ be a collection of all possible such sets $S$; therefore $|{\cal S}| = (2r+2)^p$. Since a group represents $\lfloor p\log(2r+2) \rfloor$ variables, there are at most $2^{p \log (2r+2)} = (2r+2)^p$ truth assignments to each group of variables. We injectively map each set in ${\cal S}$ to a particular truth assignment for the corresponding group of variables.  Since the number of sets in ${\cal S}$ maybe larger than the number of truth assignments, we remove the sets that are not mapped to any truth assignment.
For each set $S \in {\cal S}$, we connect a corresponding {\em set} pattern $P_r(2rp)$ to the cores as follows.  For each $i = 1, \ldots, p$, $P_r(2rp)$ is attached to 
\begin{itemize} [noitemsep,nolistsep]
\item the vertices $a_1, a_2, \ldots, a_{2r+2}$ of $C_i$ {\em except} the endpoints of $s_j$ if $s_j \in S$
\item the vertices $a_2, a_3, \ldots, a_{2r+1}$ if $s_{2r+2} \in S$
\end{itemize}
We label the root of this pattern by the set vertex $x_S$.  We then connect these set patterns together.  For each $S \in {\cal S}$, we connect:
\begin{itemize} [noitemsep,nolistsep]
\item $x_S$ to a new vertex $\bar{x}_S$ via a path of length $r-1$
\item $\bar{x}_S$ to the root $r_T$ via paths of length $r+1$
\item $\bar{x}_S$ to a common vertex $x$, and 
\item $x$ to a path of length $r-1$.
\end{itemize} 
 Similarly, as in the core gadget construction, for the set of vertices $\{\bar{x}_S | S  \in {\cal S}\}$, we add {\em path forcing} patterns $P_r(|{\cal S|})$ to each level of vertices along the paths from $r$ to $\bar{x}_S, S \in {\cal S}$. 
\begin{figure}[!h] 
  \centering
  \input{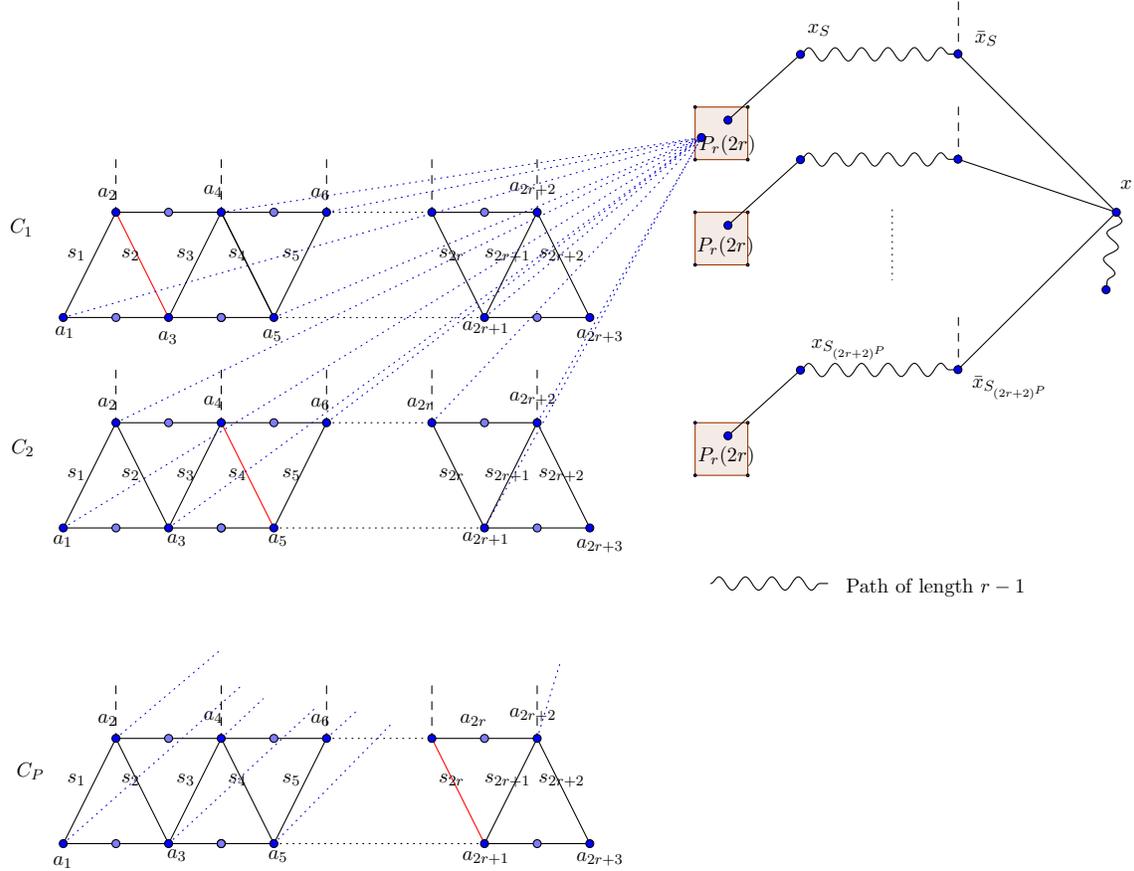}
  \vspace{-20mm}
  \caption{A group gadget. Red segments are segments of the set $S$. The vertex $x_S$ is the root of the pattern shown.} \label{fig:r-group-gadget}
\end{figure}
\begin{observation}\label{obs:group-path}
  The part of a \crds\ that intersects a group gadget can be modified
  to contain a path from $r_T$ to a vertex in the set $\{\bar{x}_S | S
  \in {\cal S}\}$ without increasing its size.
\end{observation}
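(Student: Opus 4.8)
The plan is to mirror the reasoning behind the core-gadget statement (Observation~\ref{obs:core-path}): the path-forcing patterns force the witnessing tree to occupy at least one vertex at every level between $r_T$ and the vertices $\{\bar{x}_S \mid S \in {\cal S}\}$, and a local rerouting then collapses this occupancy onto a single clean path without increasing the size. Fix a \crds\ solution $D$ with a witnessing tree $T$ containing $r_T$. For $i = 1, \ldots, r$ call a vertex lying $i$ hops from $r_T$ along one of the paths joining $r_T$ to a vertex $\bar{x}_S$ a \emph{level-$i$} vertex, with $r_T$ at level $0$ and each $\bar{x}_S$ at level $r+1$. The goal is a valid \crds\ $D'$ with $|D'| \le |D|$ that contains a path $r_T, v_1, \ldots, v_r, \bar{x}_S$ with $v_i$ at level $i$.

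First I would establish the forcing. For each $i \in \{1,\ldots,r\}$ the root of the level-$i$ path-forcing pattern $P_r(|{\cal S}|)$ must be $r$-dominated; that root sits at distance $r$ from the pattern's leaves, and any vertex reached from outside the pattern enters through a leaf and hence lies at distance $\ge r+1$ from the root, so only a vertex of the pattern itself can dominate it. By Observation~\ref{obs:pattern} a leaf already dominates the root, and since $T$ is connected and rooted at the external vertex $r_T$, $T$ must pass through a leaf of this pattern, i.e.\ through a level-$i$ vertex. Thus $T$ contains a level-$i$ vertex for every $i$. Next, to $r$-dominate the far endpoint of the length-$(r-1)$ path hanging off the common vertex $x$, $D$ needs a vertex within distance $r$ of it; each candidate ($x$ itself, a vertex of that path, or some $\bar{x}_{S'}$) connects to $r_T$ in $T$ only through the attachment of $x$ to the vertices $\bar{x}_{S'}$, so $T$ must contain some $\bar{x}_S$ at level $r+1$. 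Because $\bar{x}_S$ lies at distance exactly $r+1$ from $r_T$ and its only geodesic is the level-path down its own branch, the $T$-path to $\bar{x}_S$ has at least $r+1$ internal/terminal vertices, and $T$ meets the spine in at least $r+1$ distinct vertices.

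Then I would reroute. Let $Q$ be the $T$-path from $r_T$ to the reached $\bar{x}_S$; replace $Q$ by the clean level-path $r_T, v_1, \ldots, v_r, \bar{x}_S$ down $\bar{x}_S$'s own branch, keeping the rest of $D$ intact. This inserts exactly $r+1$ vertices and deletes $|Q| \ge r+1$, so $|D'| \le |D|$. Connectivity is preserved because the clean path joins $r_T$ to $\bar{x}_S$, and every set-pattern vertex of $D$ reaches $r_T$ through its own $x_{S'}$–$\bar{x}_{S'}$–$x$ chain, which attaches to the surviving $\bar{x}_S$ through $x$. For domination, each $v_i$ is a leaf of the level-$i$ path-forcing pattern and so covers its root (Observation~\ref{obs:pattern}), while $\bar{x}_S$ lies within distance $r$ of $x$, of the whole path off $x$, of the path from $x_S$ to $\bar{x}_S$, and of every other $\bar{x}_{S'}$; any vertex that was covered only from the deleted portion of $Q$ must be re-checked, but each such vertex is also within distance $r$ either of a $v_i$ (via $r_T$) or of $\bar{x}_S$.

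The step I expect to be the main obstacle is exactly this domination bookkeeping in the rerouting: I must verify that collapsing $T$'s possibly scattered descent toward several branches onto one clean branch leaves no vertex $r$-undominated, in particular the \emph{other} leaves of each path-forcing pattern and the mid-levels of the competing $\bar{x}_{S'}$-branches, where the distance estimates $\min(1+i,\,r+3-i)$ degrade for small $r$ and extreme $i$. As in the core-gadget case and in Cygan et al.'s construction, the clean way to close this is to argue levelwise rather than vertexwise — charging each inserted clean-path vertex to a distinct old tree vertex at the same distance from $r_T$, and retaining the few boundary vertices needed to cover those residual cases — so that the swap is size-nonincreasing while preserving both connectivity and $r$-domination.
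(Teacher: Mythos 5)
Your argument matches the paper's (implicit) justification: the paper states this observation without a separate proof, relying exactly on the path-forcing patterns $P_r(|{\cal S}|)$ placed at each level to force a tree vertex at every distance from $r_T$, on the path attached to the common vertex $x$ to force some $\bar{x}_S$ into the solution, and on the analogy with Observation~\ref{obs:core-path}. Your explicit rerouting onto the clean level-path and the levelwise charging you sketch are a correct, more detailed rendering of that same reasoning.
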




\paragraph{Super-path} 
A \emph{super path} $F_i$ is a graph that consists of $X = m \left( (2r+1)pt + 1 \right)$ copies of the group gadget $B^1_i, B^2_i,\ldots,B^X_i$, which are assembled into a line ($m$ is the number of clauses). Vertex $a_{2r+s}$ of every core gadget of the group gadget $B^j_i$ is identified with the vertex $a_1$ of the corresponding core gadget of the group gadget $B^{j+1}_i$. The vertices $a_1$ and $a_{2r+3}$ of the cores gadgets of $B^1_i$ and $B^X_i$ are directedly connected to the root $r_T$.  In order to dominate all the odd- and even-indexed vertices of the cores (without spanning more than one segment edge per core), we must have:
\begin{observation}\label{obs:r-super-path}
  If an endpoint of segment edge $s_j$ in the $t^{th}$ core of $B^k_i$ is in the \crds, then there must be an endpoint of a segment  $s_{j'}$ in the $t^{th}$ core of $B^{k+1}_i$ that is also in the \crds, for $j' \le j$.
\end{observation}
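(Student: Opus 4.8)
The plan is to combine the structural guarantee of Observation~\ref{obs:core-path} with a distance (domination) argument across the junction that glues consecutive cores. First I would fix the minimum \crds\ under consideration and invoke Observation~\ref{obs:core-path} together with the budget: each core gadget forces a path from $r_T$ ending in exactly one segment edge, and since spending two segment edges in a single core would exceed the per-core allotment, \emph{exactly one} segment is selected in each core, with both of its endpoints in the dominating tree. Writing $s_j$ for the segment selected in the $t^{th}$ core of $B^k_i$ and $s_{j'}$ for the one selected in the $t^{th}$ core of $B^{k+1}_i$, the statement to prove becomes the monotonicity $j'\le j$, i.e.\ the selected position cannot increase as the copy index grows.

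The heart of the argument is a reach computation along the two cores, which are identified at the junction vertex $v=a_{2r+3}$ of core $k$ $=a_1$ of core $k+1$. I would first check that the subdivided odd--odd and even--even edges give no shortcut: consecutive same-parity main-path vertices sit at distance $2$ along the main path \emph{and} along the subdivided path, while opposite-parity vertices are joined only along the main path, so all relevant distances are the ones read directly off $a_1,\ldots,a_{2r+3}$. The segment $s_{j'}$ in core $k+1$ then $r$-dominates exactly $a_{j'-r},\ldots,a_{j'+1+r}$ of that core's main path, leaving $a_1,\ldots,a_{j'-r-1}$ to be covered from outside; the only candidate is $s_j$ of core $k$, whose near endpoint $a_{j+1}$ lies at distance $2r+1-j+c$ from $a_c$ of core $k+1$ and hence $r$-dominates $a_c$ only for $c\le j-r-1$. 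Requiring $\{1,\ldots,j'-r-1\}\subseteq\{1,\ldots,j-r-1\}$ forces $j'\le j$.

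Assuming toward a contradiction that $j'>j$, I would exhibit an explicitly undominated main-path vertex. When $j\ge r+1$ the vertex $a_{j-r}$ of core $k+1$ lies beyond the reach of both selected segments; when $j\le r+1$ the symmetric vertex $a_{j+r+2}$ of core $k$ plays the same role (its distance to $a_{j'}$ of core $k+1$ across the junction is $j'+r-j>r$), and these two ranges overlap and cover every $j\in\{1,\ldots,2r+2\}$. To make either contradiction airtight I must rule out domination of the gap vertex by anything other than the two segments: a more distant core is strictly farther away across the chain, and by minimality the length-$(r+1)$ path tying the vertex's even neighbour to $r_T$ is \emph{not} in the tree (only the even endpoint of the \emph{selected} segment uses its $r_T$-path), so no closer tree vertex exists. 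I expect the main obstacle to be precisely this bookkeeping under minimality---certifying that the candidate gap vertex has no alternate $r$-dominator through the subdivided edges, a neighbouring core, or an unused $r_T$-path---together with the parity and boundary case analysis; the pure distance inequality $2r+1-j+c\le r$ is the easy part and is what produces the $2r+1$ threshold matching the $2r+2$ segments per core.
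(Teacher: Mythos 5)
Your proposal is correct and follows the same route the paper intends: the paper states this as an unproved observation, justified only by the one-line remark that all odd- and even-indexed core vertices must be $r$-dominated while each core spends exactly one segment edge, and your reach computation across the junction $a_{2r+3}=a_1$ (gap vertex $a_{j'-r-1}$ in core $k+1$ when $j\ge r+1$, gap vertex $a_{j+r+2}$ in core $k$ when $j\le r+1$) is precisely the calculation that remark elides. The bookkeeping you flag as the remaining obstacle is handled by the budget count of Lemma~\ref{lm:r-satisfying-assignment} together with Observation~\ref{obs:pattern} (pattern leaves do not $r$-dominate other leaves), so no alternate dominator for the gap vertex exists.
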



\paragraph{Representing clauses} 
For each clause $C_j$ of $\phi$, we introduce $((2r+1)pt+1)$ clause vertices $c_j^{\ell}$, $0 \leq \ell \leq (2r+1)pt$. (There are $m((2r+1)pt+1)$ clause vertices in total.)  For a fixed $i$ ($1\leq i \leq t$) and for each $c_j^{\ell}$, ($1 \leq j \leq m, 0 \leq \ell\leq (2r+1)pt$), we connect $c_j^{\ell}$ to $B_i^{m\ell + j}$  by connecting it directly to the subset of vertices in the set $\{\bar{x}_{S}| S \in {\cal S}\}$ of $B_i^{m\ell + j}$ such that the truth assignments of the corresponding subsets in the collection ${\cal S}$ satisfy the clause $C_j$.  Each clause vertex is attached to a path of length $r-1$

Denote the final constructed graph as $G$.


\begin{lemma} \label{thm:r-domiating-number}
If $\phi$ has a satisfying assignment, $G$ has a connected $r$-dominating set of $((r+2)p+r+1)tm((2r+1)tp + 1) + 1$ vertices. 
\end{lemma}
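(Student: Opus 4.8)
The plan is to exhibit the connected $r$-dominating set $D$ explicitly from a fixed satisfying assignment, in direct analogy with Lemma~\ref{lm:sat-graph} but now selecting entire \emph{paths} rooted at $r_T$ rather than single vertices, so that connectivity is obtained for free. First I would put $r_T \in D$ (this accounts for the additive $+1$). For each group $F_i$, the assignment restricted to $F_i$ determines, through the injective map, a set $S_i \in {\cal S}$ consisting of one segment per core. In \emph{every} copy $B_i^j$ of the group gadget I would select the same data: for each core $C_k$, the path guaranteed by Observation~\ref{obs:core-path} from $r_T$ to the odd endpoint of the segment $S_i \cap C_k$, which runs to an even-indexed vertex (length $r+1$) and then crosses one segment edge, contributing exactly $r+2$ new vertices; and the length-$(r+1)$ path from $r_T$ to $\bar{x}_{S_i}$ furnished by Observation~\ref{obs:group-path}, contributing $r+1$ new vertices (including $\bar{x}_{S_i}$). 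Each group gadget therefore contributes $(r+2)p + (r+1)$ vertices, and summing over the $tm((2r+1)tp+1)$ gadgets and adding $r_T$ gives exactly $((r+2)p + r+1)\,tm((2r+1)tp+1)+1$.

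Connectivity is immediate: each selected path contains $r_T$ and is included in full, so the subgraph induced by $D$ is a union of $r_T$-rooted paths, hence connected. For domination I would argue by vertex type. The vertex $x_{S_i}$ is at distance $r-1$ from $\bar{x}_{S_i}\in D$, hence dominated, and the common vertex $x$, its pendant path, and the pendant paths off every $\bar{x}_S$ all lie within $r$ of $\bar{x}_{S_i}$. For $S'\neq S_i$ the root $x_{S'}$ is dominated by a selected segment endpoint: since $S'$ differs from $S_i$ in some core $C_k$, at least one endpoint of $S_i\cap C_k$ is \emph{not} excluded from the leaf set of the pattern $P_r(2rp)$ of $S'$, and by Observation~\ref{obs:pattern} that leaf $r$-dominates $x_{S'}$; the roots of the path-forcing patterns and of the two core-gadget patterns are dominated for the same reason, because the selected $r_T$-paths meet a vertex at every level. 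Finally, each clause vertex $c_j^\ell$ and its length-$(r-1)$ path are handled by satisfiability: the assignment satisfies $C_j$ via some variable in a group $F_i$, so $c_j^\ell$ is joined directly to $\bar{x}_{S_i}$ in $B_i^{m\ell+j}$ and lies at distance $1$.

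The step I expect to be the main obstacle is verifying that \emph{all} core vertices are $r$-dominated along the entire super-path: the path vertices $a_1,\dots,a_{2r+3}$ and the subdivision vertices of the odd--odd and even--even links. Within one core the two selected endpoints $a_j,a_{j+1}$ cover the indices in $[\,j-r,\,j+1+r\,]$, so one end is left uncovered; the key point is that this tail is absorbed by the neighbouring core. Consecutive cores are identified at $a_{2r+3}=a_1$, and since the assignment (hence $S_i$) is identical in every copy, the same segment index $j$ is selected in each core. A short telescoping bound then finishes it: an uncovered far-end vertex $a_i$ with $i\ge j+r+2$ lies at distance $(2r+3-i)+(j-1)=2r+2-i+j\le r$ from the vertex $a_j$ selected in the adjacent core, and symmetrically at the low end, exactly the mechanism used for consecutive gadgets in Section~\ref{sec:lb-ds} (Figure~\ref{fig:consecutive}). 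I would reach the subdivision vertices within the same budget by routing through the odd--odd and even--even links, and check separately the two boundary cores $B_i^1,B_i^X$ (whose extreme vertices $a_1,a_{2r+3}$ attach directly to $r_T$) and the special case $s_{2r+2}\in S$, which uses the alternate leaf set $a_2,\dots,a_{2r+1}$.
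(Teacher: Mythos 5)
Your construction is essentially the paper's own proof: the same $r_T$-rooted paths of $r+2$ vertices ending in each selected segment and of $r+1$ vertices to $\bar{x}_{S_i}$ in every gadget copy, the same vertex count, and the same domination arguments for the pattern roots, $x$, the $\bar{x}_{S'}$ paths, and the clause vertices via Observation~\ref{obs:pattern}. The one point where you go beyond the paper---explicitly verifying that the core vertices $a_1,\ldots,a_{2r+3}$ and the subdivision vertices are covered by appealing to the identically-chosen segment in the adjacent core---is a detail the paper's proof leaves implicit, and your telescoping distance bound handles it correctly.
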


\begin{proof}
Given a satisfying assignment of $\phi$, we construct an $r$-dominating tree $T$ as follows.  For group $i$, let $S_i$ be the set of $p$ segments  which corresponds to the truth assignment of variables of $F_i$.  In addition to the root, $T$ contains:
\begin{itemize}[nolistsep,noitemsep]
\item The path of length $r+2$ from $r_T$ that ends in each segment of $S_i$ for every group in the construction.  Each such path contains $r+2$ vertices in addition to the root.  As there are $tm((2r+1)tp + 1)$ groups and $p$ cores per group, this takes $(r+2)ptm((2r+1)tp + 1)$ vertices.  By Observation~\ref{obs:pattern}, this set of vertices will $r$-dominate all of the non-leaf vertices of all the patterns in the core gadget, since all these patterns include a leaf in one of these paths.  This set of vertices will also dominate $x_S$ for every $S \ne S_i$ since $S$ will connect to the endpoints of at least one segment edge that is not in $S_i$.
\item For each group, the path of length $r+1$ from $r_T$ to $\bar x_{S_i}$.  Each such path contains $r+1$ vertices (not including the root).  As there are $tm((2r+1)tp + 1)$ groups, this takes $(r+1)tm((2r+1)tp + 1)$ vertices (not including the root).  The vertex $\bar{x}_{S_i}$ $r$-dominates:
  \begin{itemize}[nolistsep,noitemsep]
  \item The vertices on the path from $x_{S_i}$ to $\bar{x}_{S_i}$.
  \item The vertex $x$ and the path attached to it.
  \item The clause vertex connected to $\bar{x}_{S_i}$ and its attached path.
  \item The vertices on the path from $x_{S}$ to $\bar{x}_{S}$, {\em not including} $x_S$, for every $S \ne S_i$.\qedhere
  \end{itemize}
\end{itemize}

\end{proof}

\begin{lemma} \label{lm:r-satisfying-assignment}
If $G$ has an $r$-dominating tree of $((r+2)p+r+1)tm((2r+1)tp + 1) + 1$ vertices, then $\phi$ has a satisfying assignment.
\end{lemma}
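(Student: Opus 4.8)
The plan is to mirror the argument of Lemma~\ref{lm:graph-sat}, replacing the combinatorial observations of the $r$-dominating set construction with their connected analogues. Write $X = m((2r+1)pt+1)$ for the number of group gadgets in each super-path, so that the construction contains $tX$ group gadgets and $ptX$ cores in total. Let $T$ be a connected $r$-dominating set of the claimed size. First I would normalize $T$: since the paths of length $r$ hanging off $r_T$ force $r_T\in T$, we may assume $r_T\in T$, and by Observations~\ref{obs:core-path} and~\ref{obs:group-path} we may repeatedly modify $T$, without increasing its size, so that inside every core gadget $T$ contains a path from $r_T$ to an odd-indexed vertex whose last edge is a segment edge, and inside every group gadget $T$ contains a path from $r_T$ to some vertex $\bar{x}_S$. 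A path of the first kind costs $r+2$ vertices besides $r_T$ (an $(r+1)$-path to an even-indexed vertex plus one segment edge) and a path of the second kind costs $r+1$ vertices; these paths live in disjoint regions of the construction apart from the shared root. Summing over the $ptX$ cores and the $tX$ group gadgets and adding $r_T$ gives exactly $((r+2)p+r+1)tX+1$ vertices, precisely the hypothesized budget. Hence the bound is tight: $T$ uses exactly one segment edge in each core, exactly one set vertex $\bar{x}_S$ in each group gadget, and has no vertices to spare. In particular no vertex of $T$ lies on any of the length-$(r-1)$ paths attached to the clause vertices.

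Next I would extract a consistent assignment. As in Lemma~\ref{lm:graph-sat}, call $T$ \emph{consistent} on a family of group gadgets if the segment it selects in each core is the same throughout the family. Fix a group $F_i$ and a core index $k$; by Observation~\ref{obs:r-super-path}, as we walk along the super-path the index of the selected segment in the $k$-th core is non-increasing from $B_i^j$ to $B_i^{j+1}$. Since the index ranges over $\{1,\ldots,2r+2\}$ it can strictly decrease at most $2r+1$ times, so the number of \emph{bad} consecutive pairs (those in which some core's selected index strictly decreases) is at most $(2r+1)p$ per super-path, hence at most $(2r+1)pt$ in total over all $t$ groups. Partition the super-paths into the $(2r+1)pt+1$ blocks $\{B_i^{m\ell+j} : 1\le j\le m\}$, $\ell=0,\ldots,(2r+1)pt$. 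Because there are at most $(2r+1)pt$ bad pairs altogether and $(2r+1)pt+1$ blocks, the pigeonhole principle yields an index $\ell^\ast$ whose block contains no bad pair in any group; on this block $T$ is consistent for every group simultaneously, selecting a fixed segment set $S_i$ in $F_i$.

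Finally I would read off a satisfying assignment. Assign to the variables of group $F_i$ the truth values that $S_i$ is injectively mapped to. Consider any clause $C_j$ and its clause vertex $c_j^{\ell^\ast}$, which carries an attached path of length $r-1$ and is connected only to those $\bar{x}_S$ of the gadgets $B_i^{m\ell^\ast+j}$ ($1\le i\le t$) whose sets correspond to assignments satisfying $C_j$. The far endpoint of the attached path is at distance $r-1$ from $c_j^{\ell^\ast}$, and by tightness no vertex of $T$ lies on this path nor at $c_j^{\ell^\ast}$; thus it can only be $r$-dominated by a vertex $\bar{x}_S$ adjacent to $c_j^{\ell^\ast}$ that belongs to $T$. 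Such a $\bar{x}_S$ lies in block $\ell^\ast$, where the selected set in group $i$ is exactly $S_i$; hence $S_i$ satisfies $C_j$, and the assignment of group $F_i$ satisfies $C_j$. As $C_j$ was arbitrary, the assignment satisfies $\phi$.

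The main obstacle, as in the lower bound of Section~\ref{sec:lb-ds}, is the tightness bookkeeping of the first paragraph: one must verify that the forced paths genuinely contribute disjoint vertices so that the minimum cost equals the budget with equality, for only then is $T$ forced to be free of spare vertices, which is the fact that is crucial for concluding that each clause vertex must be dominated by a \emph{selected} $\bar{x}_S$ rather than by some auxiliary vertex. A secondary delicate point is justifying the monotonicity of the selected segment index via Observation~\ref{obs:r-super-path} uniformly across cores, since the cores of consecutive group gadgets are identified along their endpoints; handling the connectivity (tree) constraint correctly when modifying $T$ in the normalization step also requires care, but this is exactly what Observations~\ref{obs:core-path} and~\ref{obs:group-path} were designed to supply.
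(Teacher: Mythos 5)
Your proposal follows essentially the same route as the paper's proof: force exactly $(r+2)p+r+1$ vertices per group gadget via Observations~\ref{obs:core-path} and~\ref{obs:group-path}, bound the number of bad consecutive pairs by $(2r+1)pt$ using Observation~\ref{obs:r-super-path}, apply pigeonhole over the $(2r+1)pt+1$ blocks to find a consistent block, and read off the assignment from the selected segment sets. Your final step is in fact slightly more explicit than the paper's (which simply asserts that $r$-domination of all clause vertices yields satisfaction), since you spell out why the length-$(r-1)$ tail of $c_j^{\ell^\ast}$ forces a selected $\bar{x}_S$ adjacent to it to lie in $T$; this is a welcome clarification but not a different argument.
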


\begin{proof}
Let $T$ be the $r$-dominating tree;  $T$ contains the root $r_T$. By Observations~\ref{obs:core-path} and~\ref{obs:group-path}, each group gadget requires at least $(r+2)p + r+1$ vertices (not including the root) in the dominating tree. Since the number of copies of group gadget is $tm((2r+1)tp + 1)$, this implies exactly $(r+2)p + r+1$ vertices of each group gadget will be selected in which:
\begin{itemize}[nolistsep,noitemsep]
\item For each core gadget, exactly one segment $s_i$ in the set $\{s_1,s_2,\ldots,s_{2r+2}\}$ and a path connecting it to the root $r_T$ are selected which totals $(r+2)p$ vertices for $p$ cores. Denote the set of $p$ selected segments by $S$.
\item $r+1$ vertices on the path from $\bar{x}_S$ to the root $r_T$.
\end{itemize}
We say that $T$ is consistent with a set of group gadgets iff the set of segments in $T$ are the same for every group gadget.  If two segments $s_a$ and $s_b$ of two consecutive cores in group gadgets $B^q_i$ and $B_i^{q+1}$, respectively, are in $T$, by Observation~\ref{obs:r-super-path}, we have $b \leq a$. If $b < a$, we call $s_a$ and $s_b$ a \emph{bad pair}. Since there are $p$ cores in which there can be a bad pair, and each core has $2r+2$ segments, for each super-path, there can be at most $(2r+1)p$ consecutive bad pairs. Since we have $t$ super-paths, there are at most $tp(2r+1)$ bad pairs.  By the pigeonhole principle, there exists a number $\ell \in \{0,1,\ldots,tp(2r+1)\}$ such that $T$ is consistent with the set of gadgets $\{B_i^{m\ell+j} | 1 \leq i \leq t, 1 \leq j \leq m\}$.

Let $\{B_i^{m\ell+j} | 1 \leq i \leq t, 1 \leq j \leq m\}$ be the set of group gadgets which is consistent with $T$. For each group gadget $B_i^{m\ell+j}$, we assign the truth assignment corresponding to the set of segments $S \in T \cap B_i^{m\ell+j}$ to variables in the group $F_i$. The assignment of variables in all groups $F_i$ makes up a satisfying assignment of $\phi$, since all clause vertices are $r$-dominated by $T$.
\end{proof}

\begin{lemma} \label{lm:pw-cds}
$\pw(G) \leq tp + O((2r+2)^{2p})$.
\end{lemma}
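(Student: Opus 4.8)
The plan is to bound the pathwidth through the mixed search game of Takahashi, Ueno and Kajitani~\cite{TUK95}, using the standard relation between the mixed search number and pathwidth: a mixed search strategy that cleans $G$ with $k$ searchers certifies $\pw(G) \le k$. So it suffices to exhibit a strategy using $tp + O((2r+2)^{2p})$ searchers, exactly as in the analogous bound of Lokshtanov et al.~\cite{LMS11}. The strategy sweeps the construction copy-by-copy along the super-paths. View $G$ as a sequence of $X = m((2r+1)pt+1)$ \emph{slabs}, where slab $k$ consists of the $t$ group gadgets $\{B^k_i : 1 \le i \le t\}$ (one per super-path) together with the unique clause vertex $c^\ell_j$ for which $m\ell+j = k$ and its attached path. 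Throughout the entire search I keep one searcher on the global root $r_T$; this guard contributes a harmless additive constant and lets every length-$(r+1)$ path hanging off $r_T$ be cleaned by sliding one auxiliary searcher while $r_T$ stays occupied. Moving from slab $k$ to slab $k+1$, I maintain $tp$ searchers on the \emph{interface} vertices identified between consecutive copies: for each of the $t$ super-paths and each of its $p$ cores, this is the vertex $a_{2r+3}$ of $B^k_i$ (identified with $a_1$ of $B^{k+1}_i$). These $tp$ searchers form the separator between the cleaned and uncleaned portions and give the leading $tp$ term.

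Within a single slab I process the $t$ group gadgets one at a time. The crucial structural point is that, apart from the super-path interfaces (held by the $tp$ interface searchers) and the root $r_T$ (permanently guarded), two group gadgets $B^k_i$ and $B^k_{i'}$ in the same slab communicate \emph{only} through the single clause vertex $c^\ell_j$: the cores, the set patterns with roots $x_S$, the vertices $\bar x_S$ and $x$, and all path-forcing patterns are private to one group gadget. Hence I place one more searcher on $c^\ell_j$, keep it there for the whole slab, and clean $B^k_1, \ldots, B^k_t$ sequentially, reusing one pool of $O((2r+2)^{2p})$ searchers each time. When cleaning $B^k_i$, its left interface ($p$ vertices) is already guarded; after finishing I advance those $p$ searchers to its right interface, while the interface searchers of the other $t-1$ super-paths stay put. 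By Observation~\ref{obs:r-super-path} this sweeping order is consistent with the gadget's connectivity.

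To clean a single group gadget with $O((2r+2)^{2p})$ searchers, note that its only \emph{wide} parts are the $|{\cal S}| = (2r+2)^p$ bundled levels and the $(2r+2)^p$ set patterns. The vertices $\{\bar x_S : S \in {\cal S}\}$ are all tied to $x$, to $c^\ell_j$, and to a common path-forcing pattern at each of the $O(r)$ levels of the $r_T$-to-$\bar x_S$ paths, so I hold all $O((2r+2)^p)$ level-vertices at once and sweep these patterns level by level, cleaning each pattern's $O(r(2r+2)^p)$ internal vertices before releasing. Since the set patterns share only the $O(rp)$ core vertices (kept guarded) and their roots $x_S$, I clean them one at a time, holding just one set pattern's $O(r^2 p)$ internal vertices simultaneously. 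Each core, with its two $P_r(r+1)$ end-patterns and its $r$ path-forcing patterns, is cleaned with $O(\mathrm{poly}(r))$ searchers. Thus the number of searchers active inside one group gadget is $O((2r+2)^p) + O(\mathrm{poly}(r,p)) = O((2r+2)^{2p})$, the last equality holding for all $p \ge 1$ and $r \ge 2$.

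Summing the contributions — $1$ for $r_T$, $tp$ for the interfaces, $1$ for the current clause vertex, and $O((2r+2)^{2p})$ for the group gadget being cleaned — yields a mixed search strategy with $tp + O((2r+2)^{2p})$ searchers, so $\pw(G) \le tp + O((2r+2)^{2p})$. The main obstacle, and the step demanding the most care, is the structural claim that within a slab the group gadgets interact only through $r_T$ and the single clause vertex; this is precisely what lets the $O((2r+2)^{2p})$-sized \emph{bulk} be reused across the $t$ groups rather than multiplied by $t$, and it requires checking that the set patterns, the path-forcing patterns, and the $\bar x_S$--$x$ star are genuinely local to one group gadget. A secondary bookkeeping point is confirming that the bundled width $(2r+2)^p$ together with the per-pattern polynomial overhead is absorbed into $O((2r+2)^{2p})$.
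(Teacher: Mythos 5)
Your proposal is correct and follows essentially the same strategy as the paper's proof: a mixed search sweep along the super-paths, round by round, keeping $tp$ searchers on the $p$ interface vertices of each of the $t$ super-paths, one searcher on $r_T$ and one on the current clause vertex, and reusing a pool of $O((2r+2)^{2p})$ searchers to clean the $t$ group gadgets of a slab one at a time (with the wide part coming from the $(2r+2)^p$ vertices $\bar x_S$ and their path-forcing patterns, handled via Observation~\ref{obs:searcher-patterns}). The paper's accounting of the per-gadget searcher count differs only in bookkeeping details, not in substance.
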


The proof of Lemma~\ref{lm:pw-cds} is given in the Appendix~\ref{App:AppendixD} which completes the proof of Theorem~\ref{thm:lb-cds}.

\paragraph{Acknowledgments.} This material is based upon work supported by the National Science Foundation under Grant Nos.\ CCF-0963921 and CCF-1252833.

\newpage
\nocite{DEBF05,CNPPRW11,FLS14,LW10,DH2005,LMPRS13,RBR09,FT03,DFHT03,Kloks94,AN02,CK87,TP93,RS91,BA01, WL99, VV86}
 \bibliographystyle{plain}
 \bibliography{hung}
\newpage
\appendix
\begin{center} {\LARGE \bf Appendices}
\end{center}

\section{Notation}
We denote the input graph with vertex set $V$ and edge set $E$ by $G =
(V,E)$ and use $n$ to denote the number of vertices. Edges of the graph are undirected and unweighted. For two vertices $u,v$, we denote the shortest distance and path between them by $d_G(u,v)$ and $P_G(u,v)$. Given a subset of vertices $S$ and $u$ a vertex of $G$, we define $d_G(u,S) = \min_{v \in S}\{d_G(u,v)\}$ and $P_G(u,S) = P_G(u,v)$ for $v = \arg \min_{v \in S}\{d_G(u,v)\}$. We omit the subscript $G$ when $G$ is clear from context. $G[S]$ is the subgraph induced by $S$. For a vertex $u$ (or a set $S$), an open $r$-neighborhood of $u$ ($S$) is the set $N_G^r(u) =\{v \in G| d_G(u,v) < r \}$ ($N_G^r(S) = \{u | d_G(u,S) < r\}$) and a closed $r$-neighborhood of $u$ ($S$) is the set $N_G^r[u] =\{v \in G| d_G(u,v) \leq r \}$ ($N_G^r[S] = \{u | d_G(u,S) \leq r\}$) . A set of vertices $D$ is an $r$-dominating set if $N_G^r[D] = V$.

\begin{define}[Tree decomposition] \emph{A tree decomposition} of $G$ is a tree $T$ whose nodes are subsets $X_i$ (so-called bags) of $V$ satisfying the following conditions:
  \begin{enumerate} [noitemsep,nolistsep]
  \item The union of all sets $X_i$ is $V$.
  \item For each edge $(u,v) \in E$, there is a bag $X_i$ containing both $u,v$.
  \item For a vertex $v \in V$, all the bags containing $v$ make up a subtree of $T$.  
  \end{enumerate}
\end{define}

We denote the size of the bag $X_i$ by $n_i$. We use $V_i$ to denote the set of vertices in descendant bags of $X_i: V_i = \cup_{j: X_j \mbox{is a descendant of } X_i}X_j$.
The \emph{width} of a tree decomposition $T$ is $\max_{i \in T}|X_i| -1$ and the treewidth of $G$ is the minimum width among all possible tree decompositions of $G$. We will assume throughout that graph $G$ has treewidth $\tw$ and that we are given a tree decomposition of $G$ of width $\tw$.

\section{Algorithm for $r$-dominating Set} \label{App:AppendixA}
To simplify the dynamic program, we will use a {\em nice tree decomposition}.  Kloks shows how to make a tree decomposition {\em nice} in linear time with only a constant factor increase in space (Lemma 13.1.2~\cite{Kloks94}).

\begin{define}
(\emph{Nice tree decomposition}) A tree decomposition $T$ of $G$ is \emph{nice} if the following conditions hold:
\begin{itemize}[noitemsep,nolistsep]
	\item $T$ is rooted at node $X_0$.
	\item Every node has at most two children.
	\item Any node $X_i$ of $T$ is one of four following types:	
		\begin{description}
		\item{\bf {leaf node}} $X_i$ is a leaf of $T$,
		\item{\bf {forget node}} $X_i$ has only one child $X_j$ and  $X_i = X_j \backslash \{v\}$ ,
		\item{\bf {introduce node}} $X_i$ has only one child $X_j$ and  $X_j = X_i \backslash \{v\}$,
		\item{\bf {join node}} $X_i$  has two children $X_j, X_k$ and $X_i = X_j = X_k$.
		\end{description}
	\end{itemize}
\end{define}
The dynamic programming table $A_i$ for a node $X_i$ of the tree decomposition is indexed by bags of the tree decomposition and all possible {\em distance-labelings} of the vertices in that bag.  For a vertex $v$ in bag $X_i$, a positive distance label for $v$ indicates that $v$ is $r$-dominated at that distance by a vertex in $V_i$, and a negative distance label for $v$ indicates that $v$ should be $r$-dominated at that distance via a vertex in $V \setminus V_i$.  

For an $r$-dominating set $D$, we say that $D$ {\em induces} the labeling $c: X_i \rightarrow [-r,r]$ for bag $X_i$ such that:
\[
c(u) = \left\{
  \begin{array}[c]{ll}
    d_G(u,D) & \mbox{if }P_G(u,D) \subseteq G[V_i] \\
    -d_G(u,D) & \mbox{otherwise}
  \end{array}
\right.
\]

If $D$ induces the labeling $c$, the set $D \cap V_i$ is the partial solution associated with $c$. We limit ourselves to labelings that are locally valid; $c$ is {\em locally valid} if
$|c(u) - c(v)| \leq 1$ for any two adjacent vertices $u,v \in X_i$.  If a labeling $c$ is not locally valid, we define $A_i[c] = -\infty$.

We show how to populate $A_i$ from the populated tables for the child/children of $X_i$. 
Over the course of the dynamic programming, we maintain the following correctness invariant at all bags of the of the tree $T$:\\
{\bf Correctness Invariant:} For any locally valid labeling $c$
  of $X_i$, $A_i[c]$ is the minimum size of the partial solution associated with labeling $c$.

From the root bag $X_0$, we can extract the minimum size of an $r$-dominating set from the root's table.
This is the optimal answer by the correctness invariant and the definition of {\em induces}.

We will show how to handle each of the four types of nodes (leaf, forget, introduce and join) in turn.  We use $\# _0(X_i,c)$ to denote the number of vertices in $X_i$ that are assigned label $0$ in $c$ in populating the tables for leaf and join nodes.

We say $v$ \emph{positively resolves} $u$ if $c(v) = c(u)-1$ when $c(u) > 0$; we use this definition for leaf and introduce nodes.

We will use the following Ordering Lemma to reduce the number of cases
we need to consider in populating the table of an introduce node and
join node.  We define an ordering $\preceq$ on labels for single
vertices: $\ell_1\preceq \ell_2$ if $\ell_1 = \ell_2$ or $\ell_1 = -t,
\ell_2 = t$ for $t \ge 0$.  We extend this ordering to labelings $c,
c'$ for a bag of vertices $X_i$ by saying $c \preceq c'$ if $c(u)
\preceq c'(u)$ for all $u \in X_i$.

\begin{lemma}[Ordering Lemma] \label{lm:mono-lm}
If two labelings $c$ and $c'$ of $X$ satisfy $c' \preceq c$, then $A_i[c'] \leq A_i[c]$.
\end{lemma}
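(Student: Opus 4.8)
The plan is to argue directly from the operational meaning of the table entry $A_i[c]$, namely that it is the minimum size of a partial solution $D_i = D \cap V_i \subseteq V_i$ that is \emph{consistent} with the labeling $c$, in the sense that: (i) every vertex of $V_i \setminus X_i$ is within distance $r$ of $D_i$ in $G[V_i]$; (ii) every $u \in X_i$ with $c(u) = j \ge 0$ satisfies $d_{G[V_i]}(u,D_i) = j$; and (iii) every $u \in X_i$ with $c(u) = -j < 0$ satisfies $d_{G[V_i]}(u,D_i) \ge j$. The whole argument will rest on the single observation that a negative label imposes a strictly weaker requirement on $D_i$ than the positive label of the same magnitude: the positive label $+j$ certifies that $u$ is already dominated from inside $V_i$ at distance exactly $j$, whereas the matching negative label $-j$ merely reserves $u$ to be dominated from outside at distance $j$ and hence only forbids it from being dominated from inside at distance smaller than $j$.

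First I would unwind the hypothesis $c' \preceq c$ pointwise using the definition of $\preceq$: for each $u \in X_i$ either $c'(u) = c(u)$, or else $c'(u) = -j$ and $c(u) = +j$ for some $j \ge 1$ (the case $j=0$ collapses to equality since $-0 = 0$). Next I would fix a partial solution $D_i$ witnessing $A_i[c]$, so that $|D_i| = A_i[c]$ and $D_i$ satisfies (i)--(iii) with respect to $c$. The key step is then to verify that this \emph{same} set $D_i$ is also consistent with $c'$, which immediately gives $A_i[c'] \le |D_i| = A_i[c]$. Constraint (i) refers only to $D_i$ and $G[V_i]$ and is untouched by the relabeling; at a vertex with $c'(u) = c(u)$ the requirement is literally identical; and at a sign-flip vertex the positive constraint (ii) for $c$ gives $d_{G[V_i]}(u,D_i) = j$, which in particular yields $d_{G[V_i]}(u,D_i) \ge j$, and that is exactly the negative constraint (iii) demanded by $c'$. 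Since no vertex is added to or removed from $D_i$, neither its size nor the forced domination of $V_i \setminus X_i$ changes.

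The step I expect to be the crux is justifying that passing from $+j$ to $-j$ genuinely relaxes the constraint on the partial solution, i.e.\ that the exact-distance requirement of a positive label subsumes the lower-bound requirement of the corresponding negative label. Conceptually this is precisely the content built into the ordering $\preceq$, but making it airtight requires care about the semantics of the two label signs; in particular, if one insists on reading $A_i[c]$ through the \emph{induces} definition (where the sign of $c(u)$ depends on whether the witnessing shortest $u$-to-$D$ path stays inside $V_i$) rather than through the operational characterization above, one must check that keeping $D_i$ fixed neither creates nor destroys a shortest path inside $V_i$, so that the label magnitude is preserved while only its sign is allowed to flip downward. Once the operational reading (i)--(iii) is fixed, however, the verification is a routine pointwise check and the inequality follows.
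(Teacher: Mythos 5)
Your core intuition---that a negative label is a weaker demand than the positive label of the same magnitude, so the optimal witness for $c$ should remain a witness for $c'$---is exactly what makes the lemma true, but your proof rests on an operational characterization of $A_i[c]$ (your conditions (i)--(iii), a purely local feasibility notion for a set $D_i \subseteq V_i$) that is not the paper's definition and is never shown equivalent to it. In the paper, $A_i[c]$ is defined through the correctness invariant as the minimum of $|D \cap V_i|$ over $r$-dominating sets $D$ \emph{of the whole graph} $G$ that induce $c$ on $X_i$. Under that definition, checking that your fixed $D_i$ satisfies the pointwise-relaxed constraints of $c'$ does not yet yield $A_i[c'] \le |D_i|$: you must exhibit a \emph{global} $r$-dominating set inducing $c'$ whose restriction to $V_i$ has size at most $|D_i|$. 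That existence claim is the actual content of the lemma, and it is the step your proof omits. (Your condition (i) is also not faithful to the semantics: a vertex of $V_i \setminus X_i$ need not be dominated within $G[V_i]$ at all---it may be dominated from outside $V_i$ through the bag, which is precisely what the negative labels and the first case of the forget rule are tracking.)

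The paper closes this gap with a short exchange argument: take minimum solutions $D$ inducing $c$ and $D'$ inducing $c'$, with $D_i = D \cap V_i$ and $D'_i = D' \cap V_i$, and observe that the hybrid $(D' \setminus D'_i) \cup D_i$ is still an $r$-dominating set of $G$ (this is where $c' \preceq c$ and the definition of \emph{induces} are used, since $X_i$ separates $V_i$ from the rest); minimality of $D'$ then gives $|D'_i| \le |D_i|$, i.e.\ $A_i[c'] \le A_i[c]$. You flag the right worry at the end of your write-up (reconciling your reading with the \emph{induces} definition), but resolving it is not a routine check about shortest paths staying inside $V_i$---it is the hybridization step itself. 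To repair your proof, either prove that your local characterization (with (i) corrected) coincides with the paper's global one, or replace your final step with the exchange construction above.
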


\begin{proof} Let $D$ and $D'$ be minimum $r$-dominating sets that induce labelings $c$ and $c'$ on $X_i$, respectively. Let $D_i =  D \cap V_i$ and $D_i' = D' \cap V_i$.  By the definition of $\preceq$ and inducing, $(D' \setminus D'_i)\cup D_i$ is also an $r-$dominating set of $G$. Since $D'$ is the minimum $r$-dominating set that induces $c'$ on $X_i$, $|D'| \le |(D' \setminus D'_i)\cup D_i|$, and so $|D_i'| \le |D_i|$, or equivalently, $A_i[c'] \leq A_i[c]$.
\qed \end{proof}

\subsection{Leaf Node} 

We populate the table $A_i$ for a leaf node $X_i$ as follows:
\begin{equation}
  \label{eq:leaf}
  A_i[c] = \left\{
    \begin{array}[c]{ll}
      0 & \mbox{if $c$ is locally valid and all negative}\\
      \# _0(X_i,c) &  \mbox{if $c$ is locally valid and all positive labels are positively resolved}\\
      \infty & \mbox{otherwise}
    \end{array}
  \right.
\end{equation}

Since we can check local validity and positive resolution in time proportional to the number of edges in $G[X_i]$ ($O(n_i^2)$) and there are $(2r+1)^{n_i}$ labelings of $X_i$, we get:

\begin{observation}
  The time to populate the table for a leaf node is $O(n_i^2(2r+1)^{n_i})$.
\end{observation}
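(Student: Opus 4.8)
The plan is to bound the running time by the product of two quantities: the number of labelings we must enumerate, and the per-labeling cost of evaluating the recurrence~(\ref{eq:leaf}). First I would count the labelings. Since each of the $n_i$ vertices of $X_i$ independently receives one of the $2r+1$ labels in $[-r,r]$, there are exactly $(2r+1)^{n_i}$ candidate labelings $c$, and the table $A_i$ is filled by processing each of them in turn.

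Next I would bound the work done for a single fixed labeling $c$. Evaluating~(\ref{eq:leaf}) amounts to three tests, each of which I would implement as a constant number of passes over $G[X_i]$. (i) To decide local validity, check that $|c(u)-c(v)|\le 1$ for every edge $uv$ of $G[X_i]$; if this fails we immediately set $A_i[c]=-\infty$. (ii) If $c$ is not all-negative, check whether every vertex $u$ with $c(u)>0$ is positively resolved, i.e.\ has a neighbor $v$ with $c(v)=c(u)-1$; since this only inspects adjacencies, it is again a single scan over the edges of $G[X_i]$. (iii) Finally, compute $\# _0(X_i,c)$ by counting the vertices labelled $0$. Because $G[X_i]$ has at most $\binom{n_i}{2}=O(n_i^2)$ edges, tests (i) and (ii) each cost $O(n_i^2)$, while (iii) costs $O(n_i)$; hence the per-labeling cost is $O(n_i^2)$.

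Multiplying the $(2r+1)^{n_i}$ labelings by the $O(n_i^2)$ per-labeling cost yields the claimed bound $O(n_i^2(2r+1)^{n_i})$. There is no substantive obstacle here: the statement is a routine enumeration bound. The only point requiring a little care is confirming that \emph{positive resolution}, as defined for leaf nodes, can be verified within the same edge budget as local validity. Because it merely asks, for each positively-labelled vertex, whether some incident vertex carries a label exactly one smaller, it reduces to examining each edge of $G[X_i]$ a constant number of times, and so stays within the $O(n_i^2)$ bound that also governs the local-validity test. This confirms that every one of the $(2r+1)^{n_i}$ table entries is computed in $O(n_i^2)$ time, giving the observation.
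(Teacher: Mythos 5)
Your proposal is correct and follows exactly the paper's argument: the paper also multiplies the $(2r+1)^{n_i}$ labelings by an $O(n_i^2)$ per-labeling cost for checking local validity and positive resolution over the edges of $G[X_i]$. You simply spell out the edge-scanning details that the paper leaves implicit.
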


\begin{lemma}
The correctness invariant is maintained at leaf nodes.
\end{lemma}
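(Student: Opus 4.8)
The plan is to exploit that a leaf node $X_i$ has no descendant bags, so $V_i = X_i$ and $G[V_i] = G[X_i]$; hence a partial solution is simply a subset $D_i \subseteq X_i$, and since there are no already-forgotten vertices in $V_i \setminus X_i$, its only obligation is to realize the labeling $c$ on $X_i$ itself. First I would pin down $D_i$ completely. A vertex $u$ gets label $0$ exactly when $d_G(u,D)=0$ via the trivial (hence in-$V_i$) path, i.e. when $u \in D \cap V_i$; conversely any vertex of $D\cap V_i$ receives label $0$. Thus every partial solution consistent with $c$ must equal $c^{-1}(0)$, so its size is forced to be $\#_0(X_i,c)$. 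The lemma therefore reduces to deciding for which locally valid $c$ a consistent solution exists at all (otherwise the minimum is taken over the empty set and the table must read $\infty$).

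Second, I would isolate the one inequality that drives everything: for $D_i = c^{-1}(0)$ and any $u \in X_i$, local validity yields $d_{G[X_i]}(u,D_i) \ge |c(u)|$. This follows by telescoping $|c(\cdot)|$ along a shortest $u$--$D_i$ path $u = w_0, w_1, \ldots, w_d$ in $X_i$: consecutive terms satisfy $|c(w_j) - c(w_{j+1})| \le 1$ and $c(w_d) = 0$, so $|c(u)| \le d$. This lower bound does double duty, guaranteeing that negatively labeled vertices are not under-dominated inside $V_i$ (the requirement $d_{G[X_i]}(u,D_i) \ge |c(u)|$) and that positive labels cannot be realized at a distance smaller than advertised.

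Third, for the matching upper bound on positive labels I would invoke positive resolution. If $c(u) = t > 0$ and $u$ is positively resolved, a neighbour carries label $t-1$; iterating — each intermediate label remains positive and is therefore itself positively resolved — produces a length-$t$ path in $X_i$ ending at a label-$0$ vertex, so $d_{G[X_i]}(u,D_i) \le t$. Combined with the lower bound this forces equality, $D_i = c^{-1}(0)$ realizes all labels, and $A_i[c] = \#_0(X_i,c)$, matching the formula's two finite cases (the all-negative case being the degenerate instance $D_i = \emptyset$ of value $0$). Conversely, if some positive label $c(u)=t$ is not positively resolved, I would show infeasibility: on a shortest $u$--$D_i$ path the lower bound gives $d_{G[X_i]}(u,D_i) \ge t$, and were it exactly $t$, the first internal vertex $w_1$ would satisfy $d_{G[X_i]}(w_1,D_i)=t-1$, whence $|c(w_1)| \le t-1$; local validity against $c(u)=t$ gives $c(w_1) \ge t-1$, squeezing $c(w_1)=t-1$ and positively resolving $u$, a contradiction. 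Hence the distance strictly exceeds $t$, no consistent $D_i$ exists, and $A_i[c]=\infty$ is correct.

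The main obstacle is the sufficiency direction, specifically making the realized distances \emph{exact} rather than merely bounded. Positive resolution only delivers an upper bound $d_{G[X_i]}(u,D_i) \le t$; it is the telescoping lower bound from local validity that certifies exactness for positive labels and simultaneously enforces the negative-label constraints, so the careful sign bookkeeping in that inequality is where the real work lies.
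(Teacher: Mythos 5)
Your proof is correct and takes essentially the same route as the paper's: iterating positive resolution from a positively labeled vertex down through labels $t-1, t-2, \ldots$ to a $0$-labeled vertex produces the in-bag dominating path, which is exactly the paper's induction on label values. You are in fact somewhat more careful than the paper, which only exhibits the path (an upper bound on the distance) and asserts equality; your telescoping lower bound from local validity, and your explicit argument that an unresolved positive label forces infeasibility (the $\infty$ case), fill in steps the paper leaves implicit.
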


\begin{proof}
Clearly the correctness invariant is maintained for labels resulting in the first and third cases of Equation~(\ref{eq:leaf}).  For the second case, we argue the correctness invariant by induction on the label of vertices; let $S$ be the set of vertices labeled 0 by $c$.  For a vertex $u$ with label $c(u) = 1$, by the local validity, $u$ must have a neighbor $v$ in $X_i$ such that $c(v) = c(u) - 1 = 0$. Therefore, 
we have $d_G(u,S) = c_i(u)$ and $P_G(u,S) \subseteq G[V_i] = G[X_i]$. Suppose that for all vertices $v \in X_i$ which have label $t > 0$, we have $d_{G}(v,S) = t$ and $P_G(u,S) \subseteq G[V_i] = G[X_i]$. For a vertex $u$ that has label $t+1$, we prove that $d_{G}(u,S) = t + 1$ and $P_G(u,S) \subseteq G[V_i] = G[X_i]$ since, by the local validity of $c$, $u$ has a neighbor $v$ such that $c(v) = t = d_{G}(v,S)$. 
\qed \end{proof}

\subsection{Forget Node} Let $X_i$ be a forget node with child $X_j$ and $X_i = X_j \cup \{u\}$. Let $c_i$ be a labeling of $X_i$.  We consider extensions of $c_i$ to labelings $c_j = c_i \times d$ of $X_j$ as follows:
\[
c_j(v) = \left\{
\begin{array}[c]{ll}
  c_i(v) & \mbox{if } v \in X_i \\
  d & \mbox{otherwise}
\end{array}
\right.
\]
We populate the table $A_i$ for forget node $X_i$ as follows:
\begin{equation}
A_i[c_i] = \min
\begin{cases}
  A_j[c_i \times d] & \forall d < 0, \exists \mbox{ a $v$ in $X_i$ s.t. $c_i(v) = d+ d_G(u,v)$}\\
  A_j[c_i \times d] & \forall d \ge 0 
\end{cases}\label{eq:forget}
\end{equation}

In the first case above, we are considering those solutions in which $u$ will be dominated via a node in $V \setminus V_j$; in order to track feasibility, we must handle this constraint through vertices in $X_i$ whose label is closer to 0 than $u$'s.  Since the positive labels are directly inherited from the child table $A_j$, and we assume the correctness invariant for $A_j$ we get:
\begin{lemma}
The correctness invariant is maintained for forget nodes.
\end{lemma}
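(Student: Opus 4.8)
The plan is to establish the two inequalities that together show $A_i[c_i]$ equals $\mathrm{OPT}_i(c_i)$, the minimum size of a partial solution $D\cap V_i$ over all $r$-dominating sets $D$ inducing $c_i$ on $X_i$, assuming the correctness invariant already holds at the child $X_j$. The single structural fact that drives everything is that forgetting a vertex does not change the set of vertices below the node: since the forgotten vertex $u$ satisfies $X_j = X_i\cup\{u\}$ and $X_j$ is the only child of $X_i$, we have $V_i = V_j$. Consequently, for any set $D$ and any $v\in X_i\subseteq X_j$, the sign test $P_G(v,D)\subseteq G[V_i]$ is identical to $P_G(v,D)\subseteq G[V_j]$, so the label $D$ induces on $v$ is the same whether computed relative to $X_i$ or to $X_j$. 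In particular, if $D$ induces a labeling $c_j = c_i\times d$ on $X_j$, then $D$ induces its restriction $c_i$ on $X_i$, and $D\cap V_i = D\cap V_j$.

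For the direction $A_i[c_i]\ge \mathrm{OPT}_i(c_i)$, I would let $d^*$ be the label attaining the minimum in Equation~(\ref{eq:forget}), so $A_i[c_i] = A_j[c_i\times d^*]$. By the child invariant this value is realized by some $r$-dominating set $D'$ inducing $c_i\times d^*$ on $X_j$ with $|D'\cap V_j| = A_j[c_i\times d^*]$. By the structural fact above, $D'$ induces $c_i$ on $X_i$, so $\mathrm{OPT}_i(c_i)\le |D'\cap V_i| = |D'\cap V_j| = A_i[c_i]$. Note that this direction does not use the side condition in the first case of Equation~(\ref{eq:forget}); that condition merely restricts which extensions are admitted and can only raise $A_i[c_i]$.

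For the direction $A_i[c_i]\le \mathrm{OPT}_i(c_i)$, I would take an optimal $D$ inducing $c_i$ and let $d$ be the label $D$ induces on $u$; then $D$ induces $c_i\times d$ on $X_j$. The crux is to verify that $c_i\times d$ is an admissible extension. If $d\ge 0$ this is immediate. If $d<0$, I must exhibit a witness $v\in X_i$ with $c_i(v) = d + d_G(u,v)$, and here I would invoke the separator property of tree decompositions: since $u\in V_i\setminus X_i$ and every shortest $u$–$D$ path leaves $V_i$, such a path must cross the separator $X_i$; let $v$ be its first vertex in $X_i$. Because $v$ lies on a shortest $u$–$D$ path, $d_G(v,D) = d_G(u,D) - d_G(u,v)$: any dominator of $v$ strictly closer would contradict the value $d_G(u,D)$, and one equally close but reachable within $V_i$ would splice to a shortest $u$–$D$ path inside $V_i$, contradicting $c_i(u)=d<0$. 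Hence $v$ is dominated from outside $V_i$ at distance $-d-d_G(u,v)$, giving $c_i(v) = d + d_G(u,v)$ as required. Then $A_i[c_i]\le A_j[c_i\times d] = \mathrm{OPT}_j(c_i\times d)\le |D\cap V_j| = |D\cap V_i| = \mathrm{OPT}_i(c_i)$.

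The main obstacle is precisely this last witness argument: it requires combining the separator property with a careful ``no strictly-shorter and no equally-short inside'' analysis to guarantee that the crossing vertex $v$ inherits a negative label of exactly the predicted value, rather than being dominated from within $V_i$. The remaining steps—the identity $V_i = V_j$, the resulting preservation of induced labels, and the assembly of the two inequalities into the invariant—are routine once this representative-vertex claim is established.
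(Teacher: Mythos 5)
Your proof is correct and is essentially the argument the paper intends: the paper's own justification is a one-sentence sketch (the first case of Equation~(\ref{eq:forget}) tracks domination of $u$ from outside $V_j$ through a vertex of $X_i$ closer to zero, and positive labels are inherited from the child), and your two inequalities plus the separator-property witness for the $d<0$ case are exactly the details it leaves implicit. The only caveat is definitional rather than a gap in your reasoning: the paper defines the sign of a label via ``the'' shortest path $P_G(u,D)$, and your step ``an equally close dominator reachable within $V_i$ would splice to a shortest $u$--$D$ path inside $V_i$, contradicting $c_i(u)<0$'' implicitly reads a negative label as ``no shortest path lies in $G[V_i]$,'' which is the reading the dynamic program needs anyway.
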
 

For a given $c_i$ and negative values of $d$, we can check the condition for the first case of Equation~(\ref{eq:forget}) in time proportional to the degree of $u$ in $X_j$ ($O(n_i)$).  Therefore:
\begin{observation} \label{lm:lm-forget}
The time to populate the table for a forget node $X_i$ with child $X_j$ is $O(n_i(2r+1)^{n_j})$.
\end{observation}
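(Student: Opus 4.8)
The plan is a straightforward counting argument built on Equation~(\ref{eq:forget}). First I would note that $A_i$ is indexed by the labelings of $X_i$, and since each of the $n_i$ vertices of $X_i$ takes a label in $\{-r,\ldots,r\}$, there are exactly $(2r+1)^{n_i}$ entries to fill. For a fixed labeling $c_i$, Equation~(\ref{eq:forget}) computes $A_i[c_i]$ as a minimum over the extensions $c_i\times d$ to $X_j = X_i\cup\{u\}$, where $d$ ranges over the $2r+1$ possible labels of the forgotten vertex $u$. Since the child table $A_j$ is already populated and correct, each candidate value $A_j[c_i\times d]$ is available by an $O(1)$ lookup, and forming the minimum over the $2r+1$ candidates is itself $O(2r+1)$.

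The only nonconstant cost per candidate is the feasibility test guarding the first case of Equation~(\ref{eq:forget}) when $d<0$, namely deciding whether some $v\in X_i$ satisfies $c_i(v)=d+d_G(u,v)$. Assuming the pairwise shortest-path distances are precomputed once as polynomial-time preprocessing, this test is settled by a single scan over the at most $n_i$ vertices of $X_i$, checking the equality for each $v$ in $O(1)$ time; hence each candidate is verified in $O(n_i)$ time and evaluating $A_i[c_i]$ costs $O\big((2r+1)\,n_i\big)$.

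Multiplying the per-entry cost by the number of entries yields
\[
(2r+1)^{n_i}\cdot O\big((2r+1)\,n_i\big)=O\big(n_i\,(2r+1)^{n_i+1}\big)=O\big(n_i\,(2r+1)^{n_j}\big),
\]
using $n_j=n_i+1$, which is the claimed bound. I expect the only point requiring care to be the $O(n_i)$ cost of the feasibility test: one must justify that the condition $c_i(v)=d+d_G(u,v)$ can be checked against all relevant $v$ in $O(n_i)$ time. This follows once the distances $d_G(u,v)$ for pairs co-occurring in a bag are precomputed, using that $X_i$ separates $u$ from $V\setminus V_j$, so any external-domination path of $u$ is witnessed by a labeled vertex of $X_i$; everything else is bookkeeping.
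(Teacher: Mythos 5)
Your argument is correct and is essentially the paper's own: the table has $(2r+1)^{n_i}$ entries, each computed as a minimum over the $2r+1$ extensions $c_i\times d$, with the only nonconstant work being the $O(n_i)$ feasibility scan for negative $d$, giving $O\bigl(n_i(2r+1)^{n_i+1}\bigr)=O\bigl(n_i(2r+1)^{n_j}\bigr)$. The assumption that the relevant distances are precomputed is also made explicitly by the paper (in the introduce-node analysis), so nothing is missing.
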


\subsection{Introduce Node}
Let $X_i$ be an introduce node with its child $X_j$ and $X_i = X_j \cup \{u\}$. 
We show how to compute $A_i[c_i]$ where $c_i = c_j \times d$ is the extension of a labeling $c_j$ for $X_j$ to $X_i$ where $u$ is labeled $d$. We define a map $\phi$ applied to the label $c_j(v)$ of a vertex $v$: 	
	\[
 	\phi(c_j(v)) =
  	\begin{cases}
    -c_j(v) & \mbox{ if } c_j(v) > 0 \mbox{ and } d_{G[V_i]}(v, u) = d_{G}(u,v) = c_j(v) - d \\
   	c_j(v)  &     \mbox{otherwise}
  	\end{cases}
	\]

We use $\phi(c_j)$ to define the natural extension this map to a full labeling of $X_j$.  Clearly $\phi(c_j) \preceq c_j$.  This map corresponds to the lowest ordering label that is $\preceq c_j$ that we use in conjunction with the Ordering Lemma. Note that $\phi$ will be used only for $d \geq 0$.

There are three cases for computing $A_i[c_j \times d]$, depending on the value of $d$:\\
{\bf $\mathbf d = 0:$}  In this case, $u$ is in the dominating set.  If a vertex $v \in X_j$ is to be $r$-dominated by $u$ via a path contained in $G[V_i]$, it will be represented by the table entry in which $c_j(v) = -d_G(u,v)$. Therefore $A_j[c_j']+1$ corresponds to the size of a subset of $V_i$ that induces the positive labels of $c_j \times 0$ for any $c_j' \preceq c_j$ and where $c_j(v) = d_G(u,v) = d_{G[V_i]}(v, u)$.  The Ordering Lemma tells us that the best solution is given by the rule:
$A_i[c_j \times 0] = A_i[\phi(c_j) \times 0] = A_j[\phi(c_j)]+1$.\\
$\mathbf d > 0:$ In this case, $u$ is $r$-dominated is to be dominated by a vertex in $V_i$ via a path  contained by $G[V_i]$. Therefore, we require there be a neighbor $v$ of $u$ in $X_j$ (with a label) that positively resolves $u$; otherwise, the labeling $c_j \times d$ is infeasible. 
 Further, for other vertices $v'$ of $X_j$ which are $r$-dominated by a vertex of $V_i$ by a path through $u$ and contained by $G[V_i]$, the condition of the mapping $\phi$ must hold: 
$d_{G}(v',u) = d_{G[V_i]}(v',u) = c_j(v') - d$
As for the previous case, the Ordering Lemma tells us that the best solution is given by the rule:
 \[
 A_i[c \times \{t\}] =
 \begin{cases}
   A_j[\phi(c)]  & \mbox{ if } \exists v \in X_i \mbox{ s.t } v \mbox{ positively resolves } u  \\
   +\infty       & \mbox{otherwise}
 \end{cases}
 \] 
$\mathbf d < 0:$ In this case, $u$ is not $r$-dominated by a path contained entirely in $G[V_i]$.  Therefore, the table entries for $X_i$ are simply inherited from $X_j$ (as long as $c_j \times d$ is locally valid).  We get:
\[
A_i[c \times \{c_i(u)\}] =
\begin{cases}
  A_j[c]  & c \times \{c_i(u)\} \mbox{ is a locally valid labeling of } X_i \\
  +\infty     &   \mbox{otherwise}
\end{cases}
\] 	   
Since the correctness invariant holds for $X_j$, and by the arguments above (using the Ordering Lemma), we get:

\begin{lemma}
The correctness invariant is maintained for introduce nodes.
\end{lemma}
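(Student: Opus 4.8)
The plan is to prove the correctness invariant for the introduce node $X_i$ by structural induction, assuming it already holds for the unique child $X_j$, where $X_i = X_j \cup \{u\}$. The foundation is the separator property of an introduce node: since $u$ occurs in $X_i$ but in no descendant bag, we have $V_i = V_j \cup \{u\}$ and every neighbor of $u$ inside $G[V_i]$ lies in $X_j$, so $X_j$ separates $u$ from $V_j \setminus X_j$ in $G[V_i]$. I would record this first, because it guarantees that introducing $u$ does not alter the global distances $d_G(\cdot,\cdot)$ that the labels encode; the only change relative to the child is whether the shortest dominating path of a vertex of $X_j$ is now routed through $u$ and therefore stays inside $V_i$. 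Any path newly made available by $u$ must exit $V_j$ through $u$ itself, which is the key to reconciling the two tables.

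Next I would verify the recurrence case by case on $d = c_i(u)$, giving the two inequalities that together yield the claimed equality. The case $d<0$ is the easy base: $c_i(u)<0$ forces $u \notin D_i$, so inducing solutions for $X_i$ and $X_j$ coincide set-for-set, the positive labels are inherited verbatim, and $A_i[c_j\times d] = A_j[c_j]$ follows once local validity is checked. The substantive cases are $d=0$ (with $u$ in the solution, contributing the $+1$) and $d>0$ (with $u$ dominated from inside $V_i$, hence no $+1$). For both I would argue that passing from $X_i$ to $X_j$ is exactly the effect of $\phi$: a vertex $v\in X_j$ whose recorded domination is realized through the freshly introduced $u$ must, in the child's view where $u\notin V_j$, have been dominated from outside $V_j$, so its positive label $c_j(v)$ becomes $-c_j(v)$; the guard $d_{G[V_i]}(v,u)=d_G(u,v)=c_j(v)-d$ is precisely the test detecting such vertices. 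For $d>0$ I would additionally dispose of the feasibility condition that some neighbor of $u$ in $X_j$ positively resolve $u$, returning $+\infty$ when it fails.

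To close each substantive case I would invoke the Ordering Lemma. In the forward direction, given an optimal $D_i$ inducing $c_i$, I set $D_j = D_i\cap V_j$ and observe that $D_j$ induces a labeling $c_j'$ on $X_j$ with $c_j' \preceq c_j$ and $c_j'$ dominated by $\phi(c_j)$; by inducibility, $u$ is never strictly closer to a positively labeled $v$ than its recorded distance (since $u\in D_i$ in the $d=0$ case, and by the triangle inequality through $u$ in the $d>0$ case), so no label is accidentally shortened, and the Ordering Lemma gives $A_j[\phi(c_j)] \le |D_j|$, which is $|D_i|-1$ for $d=0$ and $|D_i|$ for $d>0$. In the reverse direction, an optimal partial solution realizing $A_j[\phi(c_j)]$, augmented by $u$ when $d=0$ and left unchanged when $d>0$, induces $c_i$ and attains the recurrence value.

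I expect the main obstacle to be the interface between $\phi$ and the Ordering Lemma in the $d\ge 0$ cases: one must show that the labeling induced on $X_j$ by the restricted solution is always $\preceq c_j$ and is captured by the single representative $\phi(c_j)$, so that a minimum over many admissible child labelings collapses to one table look-up. The delicate point is verifying that the guard $d_{G[V_i]}(v,u)=d_G(u,v)=c_j(v)-d$ correctly separates vertices whose domination genuinely routes through $u$ inside $V_i$ from those dominated along a path lying in $V_j$, using the separator property to guarantee that every dominating path newly enabled by $u$ exits $V_j$ through $u$ and is therefore legitimately reclassified from ``inside $V_i$'' (for $X_i$) to ``from outside $V_j$'' (for $X_j$).
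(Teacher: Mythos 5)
Your proposal is correct and follows essentially the same route as the paper: a case analysis on the label $d$ of the introduced vertex $u$, using the map $\phi$ to reclassify vertices whose domination is routed through $u$, and the Ordering Lemma to collapse the minimum over admissible child labelings to the single lookup $A_j[\phi(c_j)]$ (plus one when $d=0$, with the positive-resolution feasibility check when $d>0$ and verbatim inheritance when $d<0$). The only difference is that you make explicit the separator property justifying that any newly enabled dominating path must exit $V_j$ through $u$, which the paper leaves implicit.
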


As we can assume that we have already computed $d_{G[V_i]}(u,v)$ and $d_G(u,v)$ for all $u,v \in X_i$ and all $i$, the time to compute $A_i[c_i]$ is the time to check the condition $d_{G[V_i]}(v, u) = d_{G}(u,v) = c_j(v) - d$ ($O(\tw)$).  We get:

\begin{observation} \label{lm:lm-intro}
The time required to populate the dynamic programming table for an introduce bag is  $O((2r+1)^{\tw}\tw)$.\end{observation}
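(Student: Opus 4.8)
The plan is to bound the running time by the standard dynamic-programming accounting: count the number of table entries of $A_i$, bound the cost of computing a single entry, and multiply. The entries of $A_i$ are in bijection with the labelings $c_i$ of the introduce bag $X_i$, i.e.\ with the assignments of a label in $[-r,r]$ to each of the at most $\tw+1$ vertices of $X_i$; there are $(2r+1)^{|X_i|}$ such labelings. So it suffices to argue that each entry can be filled in $O(\tw)$ time, and the observation follows by multiplication (the $\pm 1$ discrepancy in the exponent being the same slack that feeds the overall $O((2r+1)^{\tw+1}n)$ bound of Theorem~\ref{thm:dp-thm-ds}).

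The heart of the argument is the per-entry bound, which I would obtain directly from the three populating rules derived just above the statement, split according to the label $d$ that $c_i$ assigns to the introduced vertex $u$. In every one of the cases $d<0$, $d=0$, and $d>0$, the rule reduces $A_i[c_i]$ to a \emph{single} lookup into the already-populated child table $A_j$: either $A_j[c_j]$ (when $d<0$) or $A_j[\phi(c_j)]$, possibly incremented by one (when $d\ge 0$). The only genuine computation is therefore (i) forming the relabeling $\phi(c_j)$, (ii) checking local validity of $c_j\times d$, and (iii) verifying that some neighbor of $u$ positively resolves it. Each of these inspects only the $O(\tw)$ vertices of the bag (or the neighbors of $u$ within it), and each individual test—most notably the condition $d_{G[V_i]}(v,u)=d_G(u,v)=c_j(v)-d$ that defines $\phi$—is $O(1)$ given the precomputed pairwise distances $d_{G[V_i]}(u,v)$ and $d_G(u,v)$ that we are assuming are available. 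Hence a single entry costs $O(\tw)$, and the total is $(2r+1)^{|X_i|}\cdot O(\tw)=O((2r+1)^{\tw}\tw)$.

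The step I expect to be the real content (rather than the routine multiplication) is justifying that each entry requires only one child lookup instead of a minimization over the whole child table. This is exactly what the Ordering Lemma (Lemma~\ref{lm:mono-lm}) buys: instead of ranging over all child labelings $c'\preceq c_j$ and taking a minimum—which would inflate the per-entry cost by a factor of $(2r+1)^{\tw}$ and destroy the bound—the lemma certifies that the minimizer is the single canonical labeling $\phi(c_j)$, the $\preceq$-least labeling consistent with $c_j$. So I would be careful to phrase the per-entry analysis as a direct appeal to the already-proven populating rules (which themselves invoke Lemma~\ref{lm:mono-lm}), and then simply observe that forming $\phi(c_j)$ and the validity/resolution checks are all $O(\tw)$ work using the precomputed distance tables. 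Everything else is bookkeeping.
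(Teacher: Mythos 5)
Your proposal is correct and matches the paper's own (very terse) justification: the paper likewise charges $O(\tw)$ per entry for checking the condition $d_{G[V_i]}(v,u)=d_G(u,v)=c_j(v)-d$ over the bag, assuming precomputed distances, and multiplies by the number of labelings; the single-lookup structure via $\phi$ and the Ordering Lemma is already built into the populating rules you cite. Your additional remark about the $\pm 1$ slack in the exponent is a fair observation about a minor inconsistency the paper itself glosses over, but the argument is the same.
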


\subsection{Join Nodes} In order to populate the table $A_i$ from populated tables $A_j$ and $A_k$, we use two types of intermediate tables (for all three nodes), called the \emph{indication table} ($N$) and the \emph{convolution table} ($\bar N$), such that these tables of the node $X_i$ can be efficiently populated from the tables of its children. In this section, we will refer to the tables of the previous sections as the original tables.  We will initialize $N_j$ and $N_k$ from $A_j$ and $A_k$, then compute $\bar N_j$ from $N_j$ and $\bar N_k$ from $N_k$, then combine $\bar N_j$ and $\bar N_k$ to give $\bar N_i$, then compute $N_i$ from $\bar N_i$ and finally $A_i$ from $N_i$.  The tables $\bar N_j$ can be used to count the number of $r$-dominating sets; we view our method as incorrectly counting so that we can more efficiently compute $A_i$ from $A_j$ and $A_k$ while still correctly computing $A_i$.

Let $X_i$ be a join node with two children $X_j$ and $X_k$ and $X_i = X_j = X_k$.  We say the labeling $c_i$ (for $X_i)$ is \emph{consistent} with labelings $c_j$ and $c_k$ (for $X_j$ and $X_k$, respectively) if for every $u \in X_i$:
	\begin{enumerate} [noitemsep, nolistsep]
	\item If $c_i(u) = 0$,  then $c_j(u) = 0$ and $c_k(u) = 0$.
	\item If $c_i(u) = t < 0$, then $c_j(u) = t$ and $c_k(u) = t$.
	\item If $c_i(u) = t > 0$, then $(c_j(u) = t)\wedge(c_k(u) = -t)$ or $(c_j(u) = -t)\wedge(c_k(u) = t)$ or $(c_j(u) = t)\wedge(c_k(u) = t)$.
	\end{enumerate}
\begin{equation} \label{eq: join-bag-cons}
	A_i[c_i] = min(A_j[c_j] + A_k[c_k] - \#_0(X_i,c_i) | c_i \mbox{ is consistent with } c_j \mbox{ and } c_k)
\end{equation}
\begin{proof}[Equation~(\ref{eq: join-bag-cons}) is correct]
For a node $u \in X_i$:
 	\begin{enumerate}[noitemsep, nolistsep]
 	\item If $c_i(u) = 0$, $u$ is in the dominating set. Therefore, $u$ must also assigned label $0$ in $X_j$ and $X_k$.
 	\item If $c_i(u) < 0$, $u$ is not $r$-dominated via a path contained in $G[V_i]$. Therefore, in $X_j$ and $X_k$, $u$ is also not $r$-dominated via a path in $G[V_j]$ and $G[V_k]$. Hence, $c_j(u) = c_k(u) = c_i(u)$   
 	\item If $c_i(u) > 0$, $u$ is $r$-dominated via a path in $G[V_i]$. There are two cases: (a) $u$ is $r$-dominated via a path in  $G[V_j]$ (b) $u$ is $r$-dominated via a path in $G[V_k]$.  The Ordering Lemma implies that the best result is given by $c_j(u) = c_i(u)$ and $c_k(u) = -c_i(u)$ (case (a)) or $c_j(u) = -c_i(u)$ and $c_k(u) = c_i(u)$ (case (b)).  That is the case $c_j(u) = c_i(u)$ and $c_k(u) = c_i(u)$ can be ignored.
 	\end{enumerate}  
\qed \end{proof}

The {\em indication table} indexes the solution $A_j[c]$ and is indexed by labellings and numbers from 0 to $n$.  We initialize the indication table $N_j$ for $X_j$ by: 
\begin{equation} \label{eq: eq-ind-N_i}
	N_j[c_j][x] = \left\{
  		\begin{array}{ll}
    		1 & \mbox{if} A_j[c_j] = x\\
    		0 & \mbox{otherwise}
  		\end{array}
		\right.	
	\end{equation}
We likewise initialize $N_k$.
 Using the indication table, Equation~\ref{eq: join-bag-cons} can be written as:
	\begin{equation} \label{eq:ind-converse}
	A_i[c_i] = \min\{\infty, \min \{x \ : \ N_i[c_i][x] > 0 \}\}
	\end{equation}
where we define
	\begin{equation}\label{eq:nsum}
	N_i[c_i][x] = \sum_{\substack{x_j, x_k : x_j + x_k - \#_0(X_i,c_i) = x\\
	c_i \text{ is consistent with } c_j \mbox{ and } c_k}}N_j[c_j][x_j]\cdot N_k[c_k][x_k]
	\end{equation}
We guarantee that $N_i[c_i][x]$ is non-zero only if there is a subset of $V_i$ of size $x$ that induces the positive labels of $c_i$.  This, along with the correctness invariant held for the children of $X_i$ gives us:
\begin{lemma}
The correctness invariant is maintained at join nodes.
\end{lemma}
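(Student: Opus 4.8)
The plan is to establish the correctness invariant at the join node $X_i$ by a two-sided comparison between the value $A_i[c_i]$ produced by the indication-table rule and the true optimum $\mathrm{OPT}(c_i)$, the minimum size of a partial solution associated with $c_i$. Throughout I assume, as the inductive hypothesis, that the invariant already holds at the children $X_j$ and $X_k$, and I use the standard join-node properties $V_i = V_j \cup V_k$ and $V_j \cap V_k = X_i$. The first reduction is to note that the direct formula of Equation~(\ref{eq: join-bag-cons}) already computes $\mathrm{OPT}(c_i)$: this is exactly the content of the displayed correctness argument for that equation, whose three cases mirror the consistency rules and whose appeal to the Ordering Lemma (Lemma~\ref{lm:mono-lm}) discards the redundant possibility $c_j(u)=c_k(u)=c_i(u)>0$. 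It therefore remains to argue that the indication-table computation of Equations~(\ref{eq:nsum}) and~(\ref{eq:ind-converse}) returns the same minimum.

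For the completeness direction $A_i[c_i] \le \mathrm{OPT}(c_i)$, I would take an optimal partial solution $D_i \subseteq V_i$ inducing $c_i$, split it as $D_j = D_i \cap V_j$ and $D_k = D_i \cap V_k$, and observe (as in the correctness argument for Equation~(\ref{eq: join-bag-cons})) that these induce labelings $c_j, c_k$ consistent with $c_i$. By the inductive hypothesis $A_j[c_j] \le |D_j|$ and $A_k[c_k] \le |D_k|$, and since the only vertices of $D_i$ counted in both halves are those of $D_i \cap X_i$, which are precisely the $\#_0(X_i,c_i)$ vertices labeled $0$, we obtain $A_j[c_j]+A_k[c_k]-\#_0(X_i,c_i) \le |D_i|$. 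By the initialization in Equation~(\ref{eq: eq-ind-N_i}), the entries $N_j[c_j][A_j[c_j]]$ and $N_k[c_k][A_k[c_k]]$ are nonzero, so their product contributes a nonnegative term to $N_i[c_i][x]$ with $x = A_j[c_j]+A_k[c_k]-\#_0(X_i,c_i) \le |D_i|$; because every summand in Equation~(\ref{eq:nsum}) is nonnegative this entry is nonzero, and Equation~(\ref{eq:ind-converse}) gives $A_i[c_i]\le x\le \mathrm{OPT}(c_i)$.

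For the soundness direction $A_i[c_i] \ge \mathrm{OPT}(c_i)$, I would invoke the guarantee that $N_i[c_i][x]>0$ only if some subset of $V_i$ of size $x$ induces the positive labels of $c_i$. The consistency rules pin down the remaining structure — vertices labeled $0$ lie in the set and negatively-labeled vertices are left undominated within $V_i$ — so such a subset is a genuine partial solution associated with $c_i$. Hence the smallest index $x$ with $N_i[c_i][x]>0$, which is $A_i[c_i]$, is at least $\mathrm{OPT}(c_i)$; combined with the previous paragraph this yields equality and the invariant.

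The main obstacle I expect lies in making the soundness direction honest once $N_i$ is produced through the efficient convolution tables $\bar N$ rather than the literal sum of Equation~(\ref{eq:nsum}). The transform that realizes the ``at least one side'' semantics of the positive-label rule~3 uses an inclusion–exclusion, so the resulting counts need not equal the true number of partial solutions, and the completeness argument above (which relied on nonnegativity and the absence of cancellation) must be re-examined. The delicate point is precisely to prove the quoted guarantee: that the transform neither fabricates a nonzero entry strictly below $\mathrm{OPT}(c_i)$ nor, through cancellation in the inclusion–exclusion, annihilates the true-minimum entry exhibited in the completeness step, so that the minimum nonzero index of $N_i[c_i][\cdot]$ is preserved even though the counts themselves are ``incorrectly'' computed.
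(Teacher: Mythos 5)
Your proof follows the paper's argument essentially step for step: the paper likewise first establishes the join recurrence $A_i[c_i]=\min\{A_j[c_j]+A_k[c_k]-\#_0(X_i,c_i)\}$ over consistent pairs by the same case analysis and appeal to the Ordering Lemma, and then observes that the indication-table formulation realizes that minimum because $N_i[c_i][x]$ is nonzero exactly when some consistent pair of child labelings yields a partial solution of size $x$. Your closing worry about cancellation in the convolution tables does not create a gap in this lemma: the invariant is stated for $N_i$ as given by the direct nonnegative sum over consistent pairs, and the subsequent conversion lemma shows the forward and backward transforms between $N$ and $\bar N$ are exact inverses (with the convolution identity holding for the true tables), so the $N_i$ recovered from $\bar N_i$ coincides with that sum and no spurious zero or nonzero entries can arise.
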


The {\em convolution table} $\bar N_i$ for $X_i$  (and likewise $\bar N_j$ and $\bar N_k$ for $X_j$ and $X_k$) is also indexed by labellings and numbers from 0 to $n$.  However, we use a different labeling scheme.  To distinguish between the labeling schemes, we use the {\em bar}-labels $[-r, \ldots, -1, 0, \bar 1, \ldots, \bar r]$ for the convolution table and $\bar c$ to represent a {\em bar}-labeling of the vertices in a bag.  We define the convolution table in terms of the indication tables as:
\begin{equation}
\bar{N}_i[\bar c][x] = \sum_{c\ :\ \overline{|c(u)|}= \bar c(u) } N_i[c][x]\label{eq:bar-no-bar}
\end{equation}

The following observation, which is a corollary of Equation~(\ref{eq: join-bag-cons}) and the definition of {\em consistent}, is the key to our algorithm:
\begin{observation} \label{obs: obs-encode}
If the vertex $u \in X_i$ has label $\bar{t}$, its label in $X_j$ and $X_k$ must also $\bar{t}$. 
\end{observation}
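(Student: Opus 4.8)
The plan is to read off the claim directly from the three \emph{consistency} conditions (the numbered list preceding Equation~(\ref{eq: join-bag-cons})) together with the definition of the convolution transform in Equation~(\ref{eq:bar-no-bar}). The only content of the observation is that the absolute-value transform $c \mapsto \overline{|c|}$ is engineered so that a positive bar-label behaves \emph{diagonally} across a join: it forces the same bar-label in both children. So I would first unwind what $\bar c_i(u) = \bar t$ means at the level of the original labelings summed in Equation~(\ref{eq:bar-no-bar}): by definition $\overline{|c_i(u)|} = \bar t$ holds exactly when $c_i(u) \in \{+t, -t\}$, i.e.\ the underlying label has absolute value $t$.

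Next I would split into these two underlying cases and invoke the conditions defining the sum in Equation~(\ref{eq:nsum}). If $c_i(u) = -t < 0$, condition~2 forces $c_j(u) = c_k(u) = -t$; if $c_i(u) = +t > 0$, condition~3 (after discarding the redundant $(+t,+t)$ alternative by the Ordering Lemma, Lemma~\ref{lm:mono-lm}) leaves exactly $(c_j(u), c_k(u)) \in \{(+t,-t),(-t,+t)\}$. In every one of these surviving possibilities $|c_j(u)| = |c_k(u)| = t$, so applying the transform gives $\bar c_j(u) = \overline{|c_j(u)|} = \bar t$ and $\bar c_k(u) = \overline{|c_k(u)|} = \bar t$, which is precisely the assertion. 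The remaining cases follow the same way: $\bar t = 0$ from condition~1 (where $c_i(u)=c_j(u)=c_k(u)=0$), and a negative bar-label vacuously, since no original label maps to it under $\overline{|\cdot|}$ and the sum in Equation~(\ref{eq:bar-no-bar}) is then empty.

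The step I expect to be the crux — and the reason the transform is defined with $|\cdot|$ rather than preserving signs — is that the two \emph{off-diagonal} child configurations $(+t,-t)$ and $(-t,+t)$, which in the original table encode the disjunction ``$u$ is $r$-dominated through $V_j$ \emph{or} through $V_k$,'' both collapse to the single pair $(\bar t, \bar t)$. This is exactly what converts the OR-structure of the positive case of the join into a clean pointwise relation, enabling the join to be evaluated as an entrywise product of $\bar N_j$ and $\bar N_k$. I would emphasize that this collapsing is deliberately lossy — it also sweeps in the cross terms $(+t,+t)$ and $(-t,-t)$ — so it overcounts, matching the remark that the method ``incorrectly counts''; but that overcounting is irrelevant here and is corrected later when $N_i$ is recovered from $\bar N_i$. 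Hence no estimate is needed: the observation is a bookkeeping consequence of the consistency rules under the absolute-value transform.
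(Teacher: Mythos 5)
Your proof is correct and matches the paper's (implicit) argument: the paper states this observation as an immediate corollary of the consistency conditions, and you simply spell out that in every consistent triple the children's labels have the same absolute value as the parent's, so the bar-transform $c \mapsto \overline{|c|}$ sends all of them to $\bar t$. The only quibble is peripheral: negative bar-labels are not vacuous (they correspond identically to negative original labels and are covered by condition~2), but that case lies outside the statement, which concerns only barred positive labels.
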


\subsection{Running time analysis}

\begin{lemma} \label{lm: conversion-table}
Convolution tables can be computed from indication tables {\em and vice versa} in time $O(nn_i(2r+1)^{n_i})$.
\end{lemma}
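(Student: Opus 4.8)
The plan is to show that the relation (\ref{eq:bar-no-bar}) between $N_i$ and $\bar N_i$ is, for each fixed size $x$, a linear transform of the $(2r+1)^{n_i}$-entry table that \emph{factors as a tensor product of identical single-vertex maps}, and then to compute it one vertex at a time in the style of a fast zeta/M\"obius transform. Because each single-vertex map is invertible, the same $n_i$-pass procedure, run with the inverse map, yields the ``vice versa'' direction at the same cost.

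First I would make the single-vertex map explicit. Fix a vertex $u \in X_i$, a labeling of $X_i \setminus \{u\}$, and a value $x$; as the label of $u$ ranges over $\{-r,\dots,r\}$ we obtain a vector of $2r+1$ entries of $N_i$, and (\ref{eq:bar-no-bar}) assigns to each bar-label of $u$ a single entry of $\bar N_i$. Concretely the non-positive bar-labels $-r,\dots,-1,0$ copy the corresponding entry of $N_i$, while each positive bar-label $\bar t$ receives $N_i[\dots,+t,\dots]+N_i[\dots,-t,\dots]$. Since the aggregation condition of (\ref{eq:bar-no-bar}) constrains each coordinate independently, the full transform is the composition over all $u \in X_i$ of these maps, and maps for distinct vertices commute, so the order of processing is irrelevant.

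Next I would record invertibility and read off the inverse. The single-vertex map is the identity on the non-positive labels and, on the pair of magnitude $t$, sends $(N[+t],N[-t]) \mapsto (N[+t]+N[-t],\,N[-t])$, an upper-triangular action with unit diagonal; its inverse is $N[-t]=\bar N[-t]$ and $N[+t]=\bar N[\bar t]-\bar N[-t]$. Since every single-vertex map is invertible and of the same additive form, computing $N_i$ from $\bar N_i$ is carried out by the identical $n_i$-pass sweep, replacing each addition by the corresponding subtraction, at identical cost.

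Finally I would bound the time. The transform never mixes different sizes, so I process the $n+1$ values of $x$ separately; for each $x$ I run $n_i$ passes, one per vertex. In the pass for $u$ I sweep all $(2r+1)^{n_i}$ labelings grouped by the labels of $X_i\setminus\{u\}$, and for each of the $(2r+1)^{n_i-1}$ groups I rewrite the $2r+1$ entries of $u$ in place using $O(r)$ additions, which is $O((2r+1)^{n_i})$ work per pass. Summing over the $n_i$ passes and the $n+1$ sizes gives $O(n\,n_i\,(2r+1)^{n_i})$. The one point requiring care---and the crux of the argument---is the tensor-product factorization: it is exactly what replaces the naive $\Theta((2r+1)^{2n_i})$ evaluation of (\ref{eq:bar-no-bar}) by a vertex-by-vertex sweep, and it holds precisely because each coordinate's bar-label depends only on that coordinate's label.
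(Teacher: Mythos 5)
Your proposal is correct and follows essentially the same route as the paper: the paper also computes the transform by an in-place, vertex-by-vertex sweep that initializes $\bar N_i$ by copying and then, for each vertex in a fixed order, adds the $-t$ entry into the $\bar t$ entry (and inverts by running the sweep with subtraction), yielding the same $O(nn_i(2r+1)^{n_i})$ bound. Your explicit tensor-product/upper-triangular framing is just a cleaner justification of the step the paper dismisses as ``easy to show.''
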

\begin{proof}
Consider the indicator table $N_i$ and convolution table $\bar N_i$ for bag $X_i$; we order the vertices of $X_i$ arbitrarily.  We calculate Equation~(\ref{eq:bar-no-bar}) by dynamic programming over the vertices in this order.

We first describe how to compute $\bar N_i$ from $N_i$.   We initialize $\bar N_i[c] = N_i[c']$ where $c'(u) = c(u)$ if $c(u) \le 0$ and $\overline{c'(u)} = c(u)$ if $c'(u) > 0$.  We then correct the table by considering the barred labels of the vertices according to their order from left to right.  In particular, suppose $c = c_1\times \{\bar{t}\}\times c_2$, for some $t > 0$, be a labeling in which $c_1$ is a bar-labeling of the first $\ell$ vertices of $X_i$ and $c_2$ is a bar-labeling of the last $n_i - \ell - 1$ vertices.   We update $\bar N_i[c]$ in order from $\ell = 0,\ldots, n_i$ according to:
	\begin{equation} \label{eq: eq-trans}
	\bar{N}_i[c_1\times \{\bar{t}\}\times c_2][x] := \bar{N}_i[c_1\times \{\bar t\} \times c_2][x] + \bar{N}_i[c_1\times \{-t\} \times c_2][x]   
	\end{equation} 
It is easy to show that this process results in the same table as Equation~(\ref{eq:bar-no-bar}).

We now describe how to compute $N_i$ from $\bar N_i$, which is the same process, but in reverse.  We initialize $N_i[c] = \bar N_i[c']$ where $c(u) = c'(u)$ if $c(u) \le 0$ and $\overline{c(u)} = c'(u)$ if $c(u) > 0$.    We then update $N_i[c]$ in reverse order of the vertices of $X_i$, i.e.\ for $\ell = n_i, n_i-1, \ldots, 0$ according to:
	\begin{equation} \label{eq: rev-eq-trans}
	N_i[c_1\times \{t\}\times c_2][x] := N_i[c_1\times \{t\} \times c_2][x] - N_i[c_1\times \{-t\} \times c_2][x]   
	\end{equation}
Since the labeling $c$ has length of $n_i$, the number of operations for the both forward and backward conversion is bounded by $O(nn_i(2r+1)^{n_i})$.
\qed \end{proof}

\begin{lemma} \label{lm: convolution} The convolution tables for $X_i, X_j$ and $X_k$ satisfy: 
 \begin{equation} \label{eq: convolution}
 \bar{N}_i[\bar{c}][x] = \sum_{x_i, x_j\ : \ x_i + x_j - \#_0(X_i,\bar{c}) = x} \bar{N}_j[\bar{c}][x_i] \cdot \bar{N}_k[\bar{c}][x_j]
 \end{equation}
\end{lemma}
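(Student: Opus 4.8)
The plan is to prove the product formula by unfolding both sides into the underlying indication tables and then reducing the identity to a purely combinatorial statement about index sets that factors vertex-by-vertex. First I would expand the left-hand side: by the definition of the convolution table (Equation~(\ref{eq:bar-no-bar})) we have $\bar N_i[\bar c][x]=\sum_{c_i}N_i[c_i][x]$, where $c_i$ ranges over the ordinary labelings sent to $\bar c$ (that is, $c_i(u)=0$ when $\bar c(u)=0$, $c_i(u)=-t$ when $\bar c(u)=-t$, and $c_i(u)\in\{+t,-t\}$ when $\bar c(u)=\bar t$). Substituting the join rule for $N_i$ (Equation~(\ref{eq:nsum})) and observing that this map preserves the number of zeros, so that $\#_0(X_i,c_i)=\#_0(X_i,\bar c)$ for every such $c_i$, turns the left-hand side into a triple sum over $(c_i,c_j,c_k)$ with the two size indices summing to $x+\#_0(X_i,\bar c)$ and with $c_i$ consistent with $(c_j,c_k)$.

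For the right-hand side I would expand each child convolution table by the same Equation~(\ref{eq:bar-no-bar}), obtaining a sum over pairs $(c_j,c_k)$ where each of $c_j,c_k$ is individually a preimage of $\bar c$. Since the summand $N_j[c_j][x_j]\,N_k[c_k][x_k]$ and the size constraint (matching the lemma's $x_j+x_k-\#_0(X_i,\bar c)=x$) are identical on both sides and depend only on $(c_j,c_k)$, it suffices to prove that the forgetful map $(c_i,c_j,c_k)\mapsto(c_j,c_k)$ is a bijection from the left-hand index set onto the right-hand index set. Both domain and codomain factor as products over the vertices $u\in X_i$, since the consistency conditions and the preimage conditions are vertex-local, so the whole claim reduces to checking, for each fixed value of $\bar c(u)$, that the projection $(c_i(u),c_j(u),c_k(u))\mapsto(c_j(u),c_k(u))$ is a bijection onto the admissible pairs.

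The verification splits into three cases according to $\bar c(u)$. When $\bar c(u)=0$ consistency rule~1 forces all three labels to be $0$, and when $\bar c(u)=-t$ rule~2 forces $c_i(u)=c_j(u)=c_k(u)=-t$; both give a single triple matching the single admissible pair, so the map is trivially bijective there. The substantive case is $\bar c(u)=\bar t$, where $c_j(u)$ and $c_k(u)$ each range over $\{+t,-t\}$, giving the four pairs $(t,t),(t,-t),(-t,t),(-t,-t)$; here I would invoke Observation~\ref{obs: obs-encode} together with rules~2 and~3 to check that each of the three pairs containing a $+t$ has the unique consistent value $c_i(u)=t$ (via rule~3's three-way disjunction, including the ``$(t,t)$'' branch that the correctness argument discards in the minimization but which must be retained in the count), while $(-t,-t)$ has the unique consistent value $c_i(u)=-t$ (via rule~2). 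This four-pairs-to-two-values matching is exactly what lets the product over the two children reproduce the coupled join constraint, and it is the step I expect to be the main obstacle: one must confirm that the three-way disjunction in rule~3 makes the projection simultaneously onto and single-valued, so that no configuration is double counted and none is omitted. Multiplying the per-vertex bijections over all $u$ yields the global bijection, and because it preserves the summand the two expanded sums coincide, establishing Equation~(\ref{eq: convolution}).
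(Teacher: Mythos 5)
Your proof is correct and follows essentially the same route as the paper's: both expand $\bar N_i$, $\bar N_j$, $\bar N_k$ back into indication tables via Equation~(\ref{eq:bar-no-bar}), substitute the join recurrence~(\ref{eq:nsum}), and reduce the identity to the fact that summing over triples $(c,c_j,c_k)$ with $c$ consistent with $(c_j,c_k)$ and $c$ a preimage of $\bar c$ is the same as summing independently over preimages $c_j$ and $c_k$ of $\bar c$. The paper simply asserts this last equivalence in one sentence, whereas you verify it by the vertex-local case analysis (correctly noting that the $(t,t)$ branch of consistency rule~3 must be kept for the count even though it is discarded in the minimization, and that $\#_0$ is preserved under the bar map); this is the same argument, just spelled out.
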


\begin{proof}
\begin{equation*}
	\begin{split}
	\bar{N}_i[\bar{c}][x] &= \sum_{c\ :\ \overline{|c(u)|}= \bar c(u) } N_i[c][x] \qquad \mbox{by Equation~(\ref{eq:bar-no-bar})} \\
	&= \sum_{c\ :\ \overline{|c(u)|}= \bar c(u) } \sum_{\substack{x_j, x_k\ : \ x_j + x_k - \#_0(X_i,c) = x\\
	c_j, c_k\ :\ c \text{ is consistent with } c_j \text{ and } c_k}}(N_j[c_j][x_j]\cdot N_k[c_k][x_k])  \qquad  \mbox{by Equation~(\ref{eq:nsum})}\\
	&= \sum_{x_j, x_k\ : \ x_j + x_k - \#_0(X_i,c) = x} \sum_{\substack{c\ :\ \overline{|c(u)|}= \bar c(u) \\
	c_j, c_k\ :\ c \mbox{ is consistent with } c_j \mbox{ and } c_k}}(N_j[c_j][x_j]\cdot N_k[c_k][x_k])  \\	&=  \sum_{x_j, x_k\ : \ x_j + x_k - \#_0(X_i,c) = x} \sum_{\substack{c_j\ :\ \overline{|c_j(u)|}= \bar c(u) \\c_k\ :\ \overline{|c_k(u)|}= \bar c(u) }}(N_j[c_j][x_j]\cdot N_k[c_k][x_k]) \qquad \text{see explanation below}\\
&=  \sum_{x_j, x_k\ : \ x_j + x_k - \#_0(X_i,c) = x} \sum_{c_j\ :\ \overline{|c_j(u)|}= \bar c(u)}\bar{N}_j[\bar{c}][x_1] 
 \sum_{c_k\ :\ \overline{|c_k(u)|}= \bar c(u) }N_k[c_k][x_k] \\
	&= \sum_{x_j, x_k\ : \ x_j + x_k - \#_0(X_i,c) = x} \bar{N}_j[\bar{c}][x_1]\cdot\bar{N}_k[\bar{c}][x_2] \qquad \mbox{by Equation~(\ref{eq:bar-no-bar})}
	\end{split}
	\end{equation*}
        To explain the conversion between a sum over all $c$ such that $\overline{|c(u)|}= \bar c(u)$ and all $c_j, c_k$ such that $c \text{ is consistent with } c_j \text{ and } c_k$ to a sum over all $c_j$ such that $\overline{|c_j(u)|}= \bar c(u)$ and all $c_k$ such that $\overline{|c_k(u)|}= \bar c(u)$, we note that the constraint $\overline{|c(u)|}= \bar c(u)$ is the same as {\em consistent with}.
\qed \end{proof}

\begin{lemma} \label{lm:lm-join}
The time required to populate the dynamic programming table for join node $X_i$ is $O(n^2(2r+1)^{n_i})$.
\end{lemma}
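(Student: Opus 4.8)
The plan is to account for the running time of each stage of the pipeline that produces $A_i$ from the children's tables $A_j$ and $A_k$, and then to observe that one stage dominates all the others. Recall the pipeline: initialize the indication tables $N_j, N_k$ from $A_j, A_k$ via Equation~(\ref{eq: eq-ind-N_i}); convert each into a convolution table $\bar N_j, \bar N_k$; combine these into $\bar N_i$ using Equation~(\ref{eq: convolution}); convert $\bar N_i$ back into the indication table $N_i$; and finally extract $A_i$ from $N_i$ via Equation~(\ref{eq:ind-converse}). Since correctness is already established by Lemmas~\ref{lm: conversion-table} and~\ref{lm: convolution}, it remains only to bound the time of each stage and add them up.

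First I would dispose of the cheap stages. Each table is indexed by one of the $(2r+1)^{n_i}$ labelings of $X_i$ together with one of the $n+1$ size-values $x \in \{0,\dots,n\}$, so initializing $N_j$ and $N_k$ by Equation~(\ref{eq: eq-ind-N_i}), as well as extracting $A_i$ from $N_i$ by the single pass over $x$ demanded by Equation~(\ref{eq:ind-converse}), each cost $O(n(2r+1)^{n_i})$. The two forward conversions $N_j \to \bar N_j$, $N_k \to \bar N_k$ and the reverse conversion $\bar N_i \to N_i$ are each bounded by $O(n n_i (2r+1)^{n_i})$ by Lemma~\ref{lm: conversion-table}.

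The main step, and the one I expect to be the bottleneck, is computing $\bar N_i$ from $\bar N_j$ and $\bar N_k$ via Equation~(\ref{eq: convolution}). The whole payoff of passing to the bar-labeling scheme is precisely that the per-vertex three-way consistency branching of Equation~(\ref{eq: join-bag-cons}) has collapsed: by Observation~\ref{obs: obs-encode}, for a fixed bar-labeling $\bar c$ the relevant entries of $\bar N_j$ and $\bar N_k$ carry the \emph{same} index $\bar c$, so no summation over matching pairs of labelings survives. For each of the $(2r+1)^{n_i}$ bar-labelings $\bar c$ I then perform a single one-dimensional convolution over the size variable: for each target $x$, sum $\bar N_j[\bar c][x_i]\cdot\bar N_k[\bar c][x_j]$ over the $O(n)$ pairs with $x_i + x_j - \#_0(X_i,\bar c) = x$. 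Carried out by the naive schoolbook method, this is $O(n^2)$ work per bar-labeling, hence $O(n^2(2r+1)^{n_i})$ in total.

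Summing the stages, every term is dominated by this $O(n^2(2r+1)^{n_i})$ convolution term, since $n_i \le \tw+1 \le n$ lets the $n_i$ factor of the conversion stages be absorbed into one of the factors of $n^2$. This gives the claimed bound. The only subtlety I would be careful to emphasize is that the efficiency hinges entirely on the decoupling of Observation~\ref{obs: obs-encode}: what makes the combination a pointwise product of two tables followed by a scalar convolution, rather than a sum over an exponential number of consistent labeling pairs, is exactly the content of Lemma~\ref{lm: convolution}, which is why the convolution tables, and not the original tables $A_j, A_k$, are the objects we combine.
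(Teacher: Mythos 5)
Your proposal is correct and follows essentially the same five-stage pipeline as the paper's proof (initialize indication tables, convert to convolution tables, combine via Lemma~\ref{lm: convolution}, convert back, extract $A_i$), with the same identification of the $O(n^2(2r+1)^{n_i})$ convolution step as the bottleneck. The additional remarks on why Observation~\ref{obs: obs-encode} collapses the consistency branching are accurate but only elaborate on what the paper's proof already relies on implicitly.
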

\begin{proof}
We update the table $A_i$ of the join node $X_i$ by following steps:
	\begin{enumerate} [nolistsep, noitemsep]
	\item Computing the indication tables $N_j[c_j][x]$ and $N_k[c_k][x]$ for all possible $c_j,c_k$ and $x$ by Equation~\eqref{eq: eq-ind-N_i} takes $O(n(2r+1)^{n_i})$ time.
	\item Computing the convolution tables $\bar{N}_j[\bar c][x]$ and $\bar{N}_k[\bar c][x]$ via  Lemma~\ref{lm: conversion-table} takes $O(nn_i(2r+1)^{n_i})$ time.
	\item Computing the table $\bar{N}_i[\bar c][x]$ of the join node $X_i$ via Lemma~\ref{lm: convolution} takes $O(n^2(2r+1)^{n_i})$ time.
	\item Computing the indication table $N_i[c_i][x]$ of the join node $X_i$ via Lemma~\ref{lm: conversion-table} takes $O(nn_i(2r+1)^{n_i})$ time.
	\item Computing the table $A_i[c_i]$ of the join node $X_i$ by Equation~\eqref{eq:ind-converse} takes $O(n(2r+1)^{n_i})$ time.
	\end{enumerate}
\qed \end{proof}

\paragraph{Proof of Theorem~\ref{thm:dp-thm-ds}} Theorem~\ref{thm:dp-thm-ds} follows from the correctness and running time analyses for each of the types of nodes of the nice tree decomposition.  Using the \emph{finte integer index} property~\cite{BA01,RBR09}, we can reduce the running time of Theorem~\ref{thm:dp-thm-ds} to $O((2r+1)^{\tw + 1}\tw^2n)$. 

For a given bag $X_i$, let $S_c$ be the minimum partial solution that is associated with a labeling $c$ of $X_i$; $|S_c| = A[c]$. Let $S_1$ be the minimum partial solution that is associated with the labeling $c_1 = \{1,1,\ldots,1\}$ of $X_i$  
\begin{lemma}[Claim 5.4~\cite{BA01} \textbf{finite integer index property}] \label{lm:finit-int-index}
For a given bag $X_i$, if the minimum partial solution $S_c$ can lead to an optimal solution of $G$, we have:	
	\begin{center}
	$||S_c| - |S_1|| \leq n_i + 1$
	\end{center}
\end{lemma}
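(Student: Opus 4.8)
The plan is to prove the two inequalities $|S_c| \le |S_1| + n_i$ and $|S_1| \le |S_c| + n_i + 1$ separately, each by an exchange argument that swaps the inside-$V_i$ part of a global solution across the interface $X_i$. Two structural facts drive both arguments. First, $X_i$ separates the forgotten vertices $V_i \setminus X_i$ from the outside $V \setminus V_i$, so every shortest dominating path between the two sides crosses some vertex of $X_i$. Second, by the meaning of an induced labeling, any partial solution $S$ realizing a labeling $c$ already $r$-dominates all of $V_i \setminus X_i$ from within $V_i$, and for each $u \in X_i$ with $c(u) = t \ge 0$ it achieves $d_{G[V_i]}(u, S) = t$, while a negative label records that $u$ is to be dominated from outside. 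The labeling $c_1$ is special: $S_1$ dominates every interface vertex at distance exactly $1$ inside $V_i$ (the smallest nonzero interface distance) and defers nothing to the outside.

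First I would prove $|S_c| \le |S_1| + n_i$. Let $D$ be a global optimum with $D \cap V_i = S_c$ inducing $c$, and set $\tilde D = (D \setminus V_i) \cup S_1 \cup Y$ where $Y = \{u \in X_i : c(u) \le 0\}$. I claim $\tilde D$ is an $r$-dominating set. Forgotten vertices are dominated because $S_1$ is a partial solution (and $c_1$ has no negative labels, so nothing was deferred outward); interface vertices are dominated at distance $1$ by $S_1$. For an outside vertex $w$ dominated in $D$ through some $s \in S_c$, take the interface vertex $u$ where the witnessing path crosses $X_i$, so that $d_G(w,u) + d_{G[V_i]}(u,s) \le r$. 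If $c(u) \ge 1$ then $d_{G[V_i]}(u,s) \ge c(u) \ge 1$ and $S_1$ gives $u$ a dominator at distance $1$, whence $d_G(w,S_1) \le d_G(w,u)+1 \le r$; if $c(u) \le 0$ then $u \in Y \subseteq \tilde D$ and $d_G(w,u) \le r$. Hence $\tilde D$ is valid, and optimality of $D$ gives $|D| \le |\tilde D|$, i.e. $|S_c| \le |S_1 \cup Y| \le |S_1| + n_i$.

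Next I would prove the reverse bound $|S_1| \le |S_c| + n_i + 1$ by exhibiting an $r$-dominating set that induces $c_1$ whose $V_i$-part has size at most $|S_c| + n_i + 1$; since $|S_1| = A_i[c_1]$ is the minimum such size, this suffices. Starting from $D$ above, I would adjoin to $S_c$ one neighbor $\nu(u) \in V_i$ of each interface vertex $u$ (so that $u$ acquires an inside dominator at distance $1$), and delete from the solution those interface vertices carrying label $0$ so that their label becomes exactly $1$; the only repair needed is for the few forgotten or outside vertices that had been dominated by a deleted interface vertex, and these are recovered through the freshly added neighbors. Counting one added vertex per interface vertex together with a single extra vertex to absorb the boundary case yields a $c_1$-inducing solution with inside-part at most $|S_c| + n_i + 1$. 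The Ordering Lemma (Lemma~\ref{lm:mono-lm}) and the correctness invariant guarantee that $S_c$ and $S_1$ are attained by genuine $r$-dominating sets, so these exchanges are legitimate. Combining the two inequalities gives $\bigl||S_c| - |S_1|\bigr| \le n_i + 1$.

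The main obstacle is the reverse direction: realizing the all-ones labeling \emph{exactly} while keeping the solution a valid $r$-dominating set. Unlike the forward swap---where replacing $S_c$ by $S_1$ can only help, and a patch of interface vertices covers the one harmful case---forcing every interface vertex to distance exactly $1$ requires removing interface vertices that were themselves dominators, which can leave nearby forgotten vertices undominated by up to one hop. The delicate point is to show that re-dominating these vertices never exceeds the $n_i+1$ budget (and that $c_1$ is feasible at all, i.e. each $u \in X_i$ has a neighbor inside $V_i$); I expect to resolve it by charging each repair to the corresponding added neighbor $\nu(u)$ and verifying that no vertex is charged twice.
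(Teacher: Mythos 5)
The paper does not actually prove this lemma: it is imported verbatim as Claim~5.4 of~\cite{BA01}, so there is no in-paper argument to compare yours against, and I can only judge the proposal on its own terms. Your first direction, $|S_c| \le |S_1| + n_i$, is a sound exchange argument: replacing $S_c$ by $S_1$ together with the at most $n_i$ bag vertices carrying non-positive labels yields a valid $r$-dominating set, because the separator property of $X_i$ lets every domination path that formerly entered $V_i$ be re-routed through a distance-$1$ dominator of its crossing vertex (you only need $d_G(u,s)\ge 1$ there, not the stronger $d_{G[V_i]}(u,s)$ you wrote). You do, however, need $A_i[c_1]<\infty$; as you note, the lemma is vacuous or false for bags where some vertex admits no solution vertex at distance exactly $1$ inside $G[V_i]$, and a complete proof must confront this rather than set it aside.

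The genuine gap is the reverse inequality. To bound $|S_1|$ you must exhibit a global $r$-dominating set inducing $c_1$ \emph{exactly}, which forces you to delete from the solution every bag vertex labeled $0$ by $c$. Such a vertex $u$ may be the unique dominator of some vertex $v$ (forgotten or outside) at distance exactly $r$; after deleting $u$ and inserting a neighbor $\nu(u)$, the distance from $v$ to the nearest solution vertex can become $r+1$, and $v$ is no longer dominated. Your proposed repair --- that these vertices are ``recovered through the freshly added neighbors,'' backed by a charging scheme --- controls the \emph{size} of the patch but not its \emph{feasibility}: no charging argument makes $\nu(u)$ reach a vertex that only $u$ covered at distance exactly $r$. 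Since keeping $u$ in the solution is not an option (it would induce label $0$, not $1$), one would need a genuinely different idea to re-dominate such vertices within the $n_i+1$ budget, and you have not supplied one. As it stands, the inequality $|S_1| \le |S_c| + n_i + 1$ is asserted rather than proved, so the proposal does not establish the lemma.
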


Given Lemma~\ref{lm:finit-int-index}, we can prove following theorem.

\begin{theorem} \label{thm:linear-join-node}
We can populate the dynamic programming table for join node $X_i$ in $O(n_i^2(2r+1)^{n_i})$ time.
\end{theorem}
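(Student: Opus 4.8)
The plan is to observe that the only reason the running time in Lemma~\ref{lm:lm-join} carries factors of the global size $n$ is that the indication and convolution tables are indexed by a size coordinate $x$ ranging over $\{0,1,\ldots,n\}$; the bottleneck, the convolution of Lemma~\ref{lm: convolution}, is a polynomial multiplication in this coordinate and costs $O(n^2)$ per labeling. I would use the finite integer index property (Lemma~\ref{lm:finit-int-index}) to shrink this coordinate to a window of width only $O(n_i)$, after which every one of the five steps in the proof of Lemma~\ref{lm:lm-join} runs in $O(n_i^2(2r+1)^{n_i})$ time, dominated by the convolution and by the table conversions of Lemma~\ref{lm: conversion-table}.

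Concretely, first I would compute, for each of the two children, the baseline values $b_j = A_j[\{1,\ldots,1\}]$ and $b_k = A_k[\{1,\ldots,1\}]$. By Lemma~\ref{lm:finit-int-index} applied to $X_j$ and $X_k$ (noting $n_j = n_k = n_i$ at a join node), any labeling $c_j$ whose minimum partial solution can be extended to a global optimum satisfies $|A_j[c_j] - b_j| \le n_i+1$, and likewise for $k$. Hence I would store the indication tables $N_j,N_k$ only for size indices $x$ in the window $[b_j-(n_i+1),\,b_j+(n_i+1)]$ (respectively around $b_k$), re-indexing $x$ by its offset $x-b_j\in[-(n_i+1),n_i+1]$ and treating any entry that would fall outside the window as $\infty$. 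Each windowed table then has $O(n_i)$ size-slices instead of $O(n)$. I would rerun the pipeline of Lemma~\ref{lm:lm-join}: initialize $N_j,N_k$ via Equation~(\ref{eq: eq-ind-N_i}) in $O(n_i(2r+1)^{n_i})$ time; convert to convolution tables $\bar N_j,\bar N_k$ via Lemma~\ref{lm: conversion-table} in $O(n_i^2(2r+1)^{n_i})$ time; combine them by Equation~(\ref{eq: convolution}), which for each fixed $\bar c$ is a product of two length-$O(n_i)$ vectors and so costs $O(n_i^2)$, i.e.\ $O(n_i^2(2r+1)^{n_i})$ overall; then convert $\bar N_i$ back to $N_i$ and read off $A_i$ via Equation~(\ref{eq:ind-converse}). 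Since $x = x_j + x_k - \#_0(X_i,c_i)$, the output offsets lie in a window of width $O(n_i)$ around $b_j+b_k-\#_0(X_i,c_i)$; after the convolution I would recompute $b_i = A_i[\{1,\ldots,1\}]$ and re-window $A_i$ around $b_i$, so that the table is left in the correct form for the parent of $X_i$.

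The hard part will be the correctness of the windowing, namely that discarding out-of-window child entries never corrupts an entry $A_i[c_i]$ that could participate in a global optimum. For such a $c_i$, I would argue by the standard exchange argument that its minimum partial solution can be taken as the disjoint union of minimum partial solutions of the children for some consistent pair $(c_j,c_k)$ contained in a global optimum; applying Lemma~\ref{lm:finit-int-index} to $X_j$ and $X_k$ then forces $A_j[c_j]$ and $A_k[c_k]$ into their respective windows, so the minimizing pair in Equation~(\ref{eq: convolution}) is retained and $A_i[c_i]$ is computed exactly. Entries $A_i[c_i]$ that cannot lead to a global optimum may be computed as $\infty$ or dropped, which is harmless, and Lemma~\ref{lm:finit-int-index} applied to $X_i$ guarantees that all surviving relevant entries lie in the $O(n_i)$-width window around $b_i$, so the re-windowed $A_i$ remains a correct and complete table for every labeling that matters. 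Summing the five steps yields the claimed $O(n_i^2(2r+1)^{n_i})$ bound.
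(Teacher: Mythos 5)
Your proposal is correct and follows essentially the same route as the paper: invoke the finite integer index property (Lemma~\ref{lm:finit-int-index}) to confine the size coordinate $x$ to an $O(n_i)$-width window around the baseline labeling $\{1,\ldots,1\}$, so that each step of the join pipeline of Lemma~\ref{lm:lm-join} --- in particular the convolution of Lemma~\ref{lm: convolution} and the conversions of Lemma~\ref{lm: conversion-table} --- costs only $O(n_i^2(2r+1)^{n_i})$. The paper phrases this as ``maintaining non-zero values only'' and is terser about why discarding out-of-window entries is harmless, whereas you spell out the re-windowing and its correctness explicitly, but the underlying argument is the same.
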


\begin{proof}
By Lemma~\ref{lm:finit-int-index}, for a fixed $\bar{c}$, there are at most $2n_i + 3$ values $x \in \{1,2,\ldots,n\}$ such that $\bar{N}[\bar{c}][x] \not= 0$. Therefore, by maintaining non-zero values only, we have:
	\begin{enumerate} [nolistsep, noitemsep]
	\item Computing the convolution tables $\bar{N}_j[\bar c][x]$ and $\bar{N}_k[\bar c][x]$ via  Lemma~\ref{lm: conversion-table} takes $O(n^2_i(2r+1)^{n_i})$ time.
	\item Computing the table $\bar{N}_i[\bar c][x]$ of the join node $X_i$ via Lemma~\ref{lm: convolution} takes $O(n_i^2(2r+1)^{n_i})$ time.
	\item Computing the indication table $N_i[c_i][x]$ of the join node $X_i$ via Lemma~\ref{lm: conversion-table} takes $O(n^2_i(2r+1)^{n_i})$ time.
	\item Computing the table $A_i[c_i]$ of the join node $X_i$ by Equation~\eqref{eq:ind-converse} takes $O(n_i(2r+1)^{n_i})$ time.
	\end{enumerate} 
\end{proof}

Clearly, Theorem~\ref{thm:linear-join-node} implies an $O((2r+1)^{\tw + 1}\tw^2n)$ time algorithm for $r$DS problem.

%
%
%

\section{Algorithm for Connected $r$-dominating Set} \label{App:AppendixC}
We apply the Cut\&Count technique by Cygan et al.~\cite{CNPPRW11} to design a randomized algorithm which decides whether there is a connected $r$-dominating set of a given size in graphs of treewidth at most $\tw$ in time $O((2r+2)^{\tw}n^{O(1)})$ with probability of false negative at most $\frac{1}{2}$ and no false positives.
\subsection{Overview of the cut-and-count technique}

Rather than doubly introduce notation, we give an overview of the Cut\&Count technique as applied to our connected $r$-dominating set problem.  The goal is, rather than search over the set of all possible {\em connected} $r$-dominating sets, which usually results in $\Omega(\tw^\tw)$ configurations for the dynamic programming table, to search over all possible $r$-dominating sets.  Formally, let $\cal S$ be the family of connected subsets of vertices that $r$-dominate the input graph and let ${\cal S}_k \subseteq {\cal S}$ be the subset of solutions of size $k$.  Likewise, let $\cal R$ be the family of (not-necessarily-connected) subsets of vertices that $r$-dominate the input graph and similarly define ${\cal R}_k$.   Note that ${\cal S}$ and ${\cal S}_k$ are subsets of ${\cal R}$ and ${\cal R}_k$, respectively.  We wish to determine, for a given $k$, whether ${\cal S}_k$ is empty.  We cannot, of course, simply determine whether ${\cal R}_k$ is empty.  Instead, for every subset of vertices $U$, we derive a family ${\cal C}(U)$ whose size is odd only if $G[U]$ is connected.  Further, we assign random weights $\omega$ to the vertices of the graph, so that, by the Isolation Lemma (formalized below), the subset of ${\cal S}_k$ contains a unique solution of minimum weight with high probability.  We can then determine, for a given $k$, the parity of $|\cup_{U \in {\cal R}_k \ : \ \omega(U) = W}{\cal C}(U)|$ for every $W$.  We will find at least one value of $W$ to result in odd parity if ${\cal S}_k$ is non-empty.

The {\em Isolation Lemma} was first introduced by Valiant and Vazirani~\cite{VV86}. Given a universe $\univ$ of $|\univ|$ elements and a weight function $\omega : \univ \rightarrow \mathbb{Z}$. For each subset $X \subseteq \univ$, we define $\omega(X) = \sum_{x\in X}\omega(x)$. Let $\mathcal{F}$ be a family of subsets of $\univ$. We say that $\omega$ {\em isolates} a family $\mathcal{F}$ if there is a \emph{unique} set in $\mathcal{F}$ that has minimum weight.
\begin{lemma} \label{lm:isolation}
(\textbf{Isolation Lemma}) For a set $\univ$, a random weight function $\omega: \univ \rightarrow \{1,2,\ldots,N\}$, and a family $\mathcal{F}$ of subsets of $\univ$:
	\[ \text{Prob}[\omega\text{ isolates }\mathcal{F}] \geq 1 - \frac{|\univ|}{N} \]
\end{lemma}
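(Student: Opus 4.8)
The plan is to prove the Isolation Lemma by a union bound over the elements of $\univ$, charging every failure to isolate to a single ``confused'' element. First I would pin down the failure event: $\omega$ fails to isolate $\mathcal{F}$ exactly when there exist two distinct sets $A \neq B$ in $\mathcal{F}$ that both attain the minimum weight $\min_{X \in \mathcal{F}} \omega(X)$. The target is then to bound the probability of this event by $|\univ|/N$.

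For each element $u \in \univ$ I would introduce the event that $u$ is \emph{confused}. Fix $u$ and condition on all the weights $\omega(v)$ for $v \neq u$; these reveal two quantities that do not depend on $\omega(u)$, namely $\alpha_u = \min\{\omega(X) : X \in \mathcal{F},\ u \notin X\}$ and $\beta_u = \min\{\omega(X \setminus \{u\}) : X \in \mathcal{F},\ u \in X\}$ (with the convention that an empty minimum is $+\infty$). A minimum-weight set containing $u$ then has weight $\beta_u + \omega(u)$. I would declare $u$ confused when $\alpha_u = \beta_u + \omega(u)$, i.e.\ when the best set avoiding $u$ ties the best set using $u$. Since $\alpha_u - \beta_u$ is a fixed number once the other weights are revealed, and $\omega(u)$ is uniform on $\{1,\dots,N\}$ and independent of them, the event $\omega(u) = \alpha_u - \beta_u$ has probability at most $1/N$; averaging over the other weights preserves this, so $\mathrm{Prob}[u \text{ confused}] \le 1/N$.

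The combinatorial heart of the argument, which I expect to be the step requiring the most care, is the claim that whenever $\omega$ fails to isolate $\mathcal{F}$, some element is confused. Given two distinct minimizers $A \neq B$ of common weight $W$, I would pick any $u$ in their symmetric difference, say $u \in A \setminus B$. The best set containing $u$ has weight $\beta_u + \omega(u) \le \omega(A) = W$, and since $W$ is the global minimum this inequality is an equality; likewise the best set avoiding $u$ has weight $\alpha_u \le \omega(B) = W$ with equality. Hence $\alpha_u = W = \beta_u + \omega(u)$, so $u$ is confused.

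Finally I would assemble these by the union bound: the failure event is contained in $\bigcup_{u \in \univ}\{u \text{ confused}\}$, so
\[
\mathrm{Prob}[\omega \text{ fails to isolate } \mathcal{F}] \le \sum_{u \in \univ}\mathrm{Prob}[u\text{ confused}] \le \frac{|\univ|}{N},
\]
which rearranges to the stated bound $\mathrm{Prob}[\omega \text{ isolates } \mathcal{F}] \ge 1 - |\univ|/N$. The only genuine subtlety is the conditioning used for the per-element estimate: one must fix all weights other than $\omega(u)$ so that $\alpha_u - \beta_u$ becomes a constant, and only then invoke the uniform, independent distribution of $\omega(u)$ to conclude the $1/N$ bound before averaging back out.
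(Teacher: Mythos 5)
Your proof is correct, and it is the standard argument for the Isolation Lemma (the ``confused element'' / union-bound proof that goes back to Mulmuley, Vazirani and Vazirani). There is nothing to compare it against: the paper does not prove this lemma at all, but imports it as a known result with a citation to Valiant and Vazirani. All the delicate points are handled properly in your write-up: the quantities $\alpha_u$ and $\beta_u$ are measurable with respect to the weights of $\univ \setminus \{u\}$, so conditioning makes $\alpha_u - \beta_u$ a constant and the independence and uniformity of $\omega(u)$ give the $1/N$ bound; the $+\infty$ convention correctly covers the degenerate cases where no set contains (or no set avoids) $u$; and the choice of $u$ in the symmetric difference is symmetric in $A$ and $B$, so the tacit ``say $u \in A \setminus B$'' loses no generality. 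The only implicit hypothesis worth stating explicitly is that the weights $\omega(u)$ are drawn independently and uniformly, which is what the lemma intends.
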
 

Throughout the following, we fix a {\em root} vertex, $\rho$, of the graph $G = (V,E)$ and use a random assignment of weights to the vertices $\omega: V \rightarrow \{1,2, \ldots, 2n\}$.

\subsubsection{Cutting}

Given a graph $G= (V,E)$, we say that an ordered bipartition $(V_1,V_2)$ of $V$ is a {\em consistent cut} of $G$ if there are no edges in $G$ between $V_1$ and $V_2$ and $\rho \in V_1$.  We say that an ordered bipartition $(C_1,C_2)$ of a subset $C$ of $V$ is a {\em consistent subcut} if there are no edges in $G$ between $C_1$ and $C_2$, and, if $\rho \in C$ then $\rho \in C_1$.
\begin{lemma}[Lemma 3.3~\cite{CNPPRW11}] \label{lem:parity}Let $C$ be a subset of vertices that contains $\rho$.  The number of consistent cuts of $G[C]$ is $2^{cc(G[C])-1}$ where $cc(G[C])$ is the number of connected components of $G[C]$.
\end{lemma}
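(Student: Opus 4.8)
The plan is to prove the claim by establishing a bijection between consistent cuts of $G[C]$ and two-colorings of the connected components of $G[C]$ that fix the color of the component containing $\rho$. The key structural observation is that the defining condition of a consistent cut---no edges of $G$ between $V_1$ and $V_2$---forces each connected component of $G[C]$ to lie entirely within one side of the partition. Indeed, if some component $K$ of $G[C]$ had a vertex in $V_1$ and a vertex in $V_2$, then because $K$ is connected there would be a path inside $K$ from one side to the other, and that path would have to use an edge of $G[C]$ crossing between $V_1$ and $V_2$, contradicting the cut condition.

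Given this observation, the first step is to show that every consistent cut $(V_1,V_2)$ is determined by, and determines, a choice of side for each of the $\operatorname{cc}(G[C])$ components. Writing the components as $K_0, K_1, \ldots, K_{\operatorname{cc}(G[C])-1}$ with $\rho \in K_0$, the constraint $\rho \in V_1$ forces $K_0 \subseteq V_1$, while each remaining component $K_1,\ldots,K_{\operatorname{cc}(G[C])-1}$ may independently be placed into either $V_1$ or $V_2$. I would verify that any such assignment does yield a valid consistent cut: since the only edges of $G[C]$ lie within components and every component is assigned wholly to one side, no edge crosses the bipartition, and $\rho \in V_1$ holds by construction.

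The second step is the counting itself, which is then immediate: the map sending a consistent cut to the tuple of side-choices of its nonroot components is a bijection onto $\{1,2\}^{\operatorname{cc}(G[C])-1}$, so the number of consistent cuts equals $2^{\operatorname{cc}(G[C])-1}$.

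There is no real obstacle here; the statement is an exercise in observing that connectivity of each component is incompatible with a crossing edge. The only point requiring a line of care is the verification that the ordered nature of the bipartition is correctly handled---the root component is pinned to $V_1$, which is precisely what removes a factor of $2$ and yields the exponent $\operatorname{cc}(G[C]) - 1$ rather than $\operatorname{cc}(G[C])$. I would state this explicitly so that the asymmetry between $V_1$ and $V_2$ (via the placement of $\rho$) is not overlooked.
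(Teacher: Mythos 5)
Your proof is correct: components of $G[C]$ cannot be split by a consistent cut, the component containing $\rho$ is pinned to $V_1$, and the remaining $cc(G[C])-1$ components are assigned independently, giving $2^{cc(G[C])-1}$. The paper itself cites this lemma from Cygan et al.\ without reproving it, and your argument is precisely the standard one used there, so there is nothing to add or correct.
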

Recall the definitions of $\cal S$, ${\cal S}_k$, $\cal R$ and ${\cal R}_k$ from above.  We further 
let ${\cal S}_{k,W}$ be the subset of ${\cal S}_k$ with the further restriction of having weight $W$: ${\cal S}_{k,W} = \{U \in {\cal S}_k\ : \ \omega(U) = W\}$.  Similarly, we define ${\cal R}_{k,W}$.  Let ${\cal C}_{k,W}$ be the family of consistent cuts derived from ${\cal R}_{k,W}$ as:
\[{\cal C}_{k,W} = \{(C_1,C_2))\ :\ C \in {\cal R}_{k,W} \text{ and }(C_1,C_2) \text{ is a consistent cut of }G[C]\}\]
Since the number of consistent cuts of $G[C]$ for $C \in {\cal S}_{k,W}$ is odd by Lemma~\ref{lem:parity} and the number of of consistent cuts of $G[C]$ for $C \in {\cal R}_{k,W}\setminus {\cal S}_{k,W}$ is even by Lemma~\ref{lem:parity}, we get:
\begin{lemma}[Lemma 3.4~\cite{CNPPRW11}] \label{lm:cut} For every $W$, $|{\cal S}_{k,W}| \equiv |{\cal C}_{k,W}| \pmod 2 $.
\end{lemma}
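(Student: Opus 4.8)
The plan is to prove the congruence by partitioning the family $\mathcal{C}_{k,W}$ of consistent cuts according to the underlying $r$-dominating set $C$ from which each cut is built, and then counting the cuts contributed by each fixed $C$ using Lemma~\ref{lem:parity}. First I would unwind the definition of $\mathcal{C}_{k,W}$ as a union over the ground sets,
\[
\mathcal{C}_{k,W} = \bigcup_{C \in \mathcal{R}_{k,W}} \{ (C_1,C_2) : (C_1,C_2) \text{ is a consistent cut of } G[C] \},
\]
and observe that this union is disjoint: a cut $(C_1,C_2)$ determines its ground set $C = C_1 \cup C_2$, so two distinct sets $C \in \mathcal{R}_{k,W}$ never share a cut. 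Consequently $|\mathcal{C}_{k,W}| = \sum_{C \in \mathcal{R}_{k,W}} (\text{number of consistent cuts of } G[C])$.

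Next I would invoke Lemma~\ref{lem:parity}. Since the Cut\&Count framework forces the root $\rho$ to lie in every candidate solution, every $C \in \mathcal{R}_{k,W}$ contains $\rho$, so the lemma applies and gives exactly $2^{cc(G[C])-1}$ consistent cuts of $G[C]$, where $cc(G[C])$ denotes the number of connected components of $G[C]$. Substituting yields $|\mathcal{C}_{k,W}| = \sum_{C \in \mathcal{R}_{k,W}} 2^{cc(G[C])-1}$.

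The final step is to reduce this sum modulo $2$. The quantity $2^{cc(G[C])-1}$ is odd exactly when $cc(G[C]) = 1$, i.e.\ when $G[C]$ is connected, and is even whenever $cc(G[C]) \ge 2$. A set $C \in \mathcal{R}_{k,W}$ with $G[C]$ connected is precisely a connected $r$-dominating set of size $k$ and weight $W$, that is, a member of $\mathcal{S}_{k,W}$, and recall $\mathcal{S}_{k,W} \subseteq \mathcal{R}_{k,W}$. Hence modulo $2$ only the sets of $\mathcal{S}_{k,W}$ survive, each contributing $1$, and we obtain $|\mathcal{C}_{k,W}| \equiv |\mathcal{S}_{k,W}| \pmod 2$, as claimed.

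The argument is essentially bookkeeping once Lemma~\ref{lem:parity} is in hand, so I do not anticipate a genuine obstacle. The one point that needs to be made explicit is the applicability of Lemma~\ref{lem:parity}, which requires $\rho \in C$ for each $C$ summed over; the cleanest way to guarantee this is to adopt the standard Cut\&Count convention that all solutions in $\mathcal{R}_k$ (and hence $\mathcal{S}_k$) contain the fixed root $\rho$, which is harmless here since the root can always be forced into the solution by attaching a pendant path, exactly as done for $r_T$ in the reduction.
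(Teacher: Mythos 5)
Your proof is correct and follows essentially the same route as the paper: partition $\mathcal{C}_{k,W}$ by the underlying set $C$, apply Lemma~\ref{lem:parity} to count $2^{cc(G[C])-1}$ consistent cuts per set, and observe that only the connected sets (i.e.\ members of $\mathcal{S}_{k,W}$) contribute an odd term modulo $2$. Your explicit remark that every $C \in \mathcal{R}_{k,W}$ must contain the root $\rho$ for Lemma~\ref{lem:parity} to apply is a worthwhile clarification of a convention the paper leaves implicit.
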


\subsubsection{Counting}

Lemma~\ref{lm:cut} allows us to focus on computing $|{\cal C}_{k,W}| \pmod 2 $.  In the next section, we give an algorithm to compute $|{\cal C}_{k,W}|$ for all $k$ and $W$ ($W \in \{1,2,\ldots,2n^2\}$):
\begin{lemma} \label{lm:counting}
There is an algorithm which computes $|{\cal C}_{k,W}|$ for all $k$ and $W$ in time $O(k^2 n^4(2r+2)^{\tw})$.
\end{lemma}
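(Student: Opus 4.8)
The plan is to adapt the dynamic program for \rds\ of Appendix~\ref{App:AppendixA} over a nice tree decomposition, augmenting it to \emph{count} rather than minimize, and to carry the consistent-cut information of the Cut\&Count framework. The key observation is that the distance-labeling of the \rds\ program already partitions each bag into the solution vertices (label $0$) and the non-solution vertices (labels in $[-r,-1]\cup[1,r]$), and that only the label-$0$ vertices need a side of the consistent cut. I would therefore split the label $0$ into two labels $0_1,0_2$, recording on which side $C_1,C_2$ a solution vertex lies, and keep the remaining $2r$ distance labels unchanged; this yields exactly $2r+2$ states per vertex, the source of the $(2r+2)^{\tw}$ factor. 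The table $B_i$ for a bag $X_i$ is indexed by such a labeling $c$, a size $k'$, and a weight $W'$, where $B_i[c][k'][W']$ holds (modulo $2$, which suffices for the application via Lemmas~\ref{lem:parity} and~\ref{lm:cut}) the number of pairs consisting of a partial solution $D\subseteq V_i$ of size $k'$ and weight $W'$ inducing $c$ on $X_i$, together with a consistent subcut of $G[D]$ agreeing with the side labels of $c$; throughout we require the fixed root $\rho$ to receive side $1$.

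First I would populate the leaf, introduce and forget tables essentially verbatim from Equation~\eqref{eq:leaf}, the introduce rules, and Equation~\eqref{eq:forget}, with three additions. (i) The distance-label transitions (positive resolution, the map $\phi$, and the Ordering Lemma~\ref{lm:mono-lm}) act only on the non-zero labels and are unchanged. (ii) Whenever a vertex enters the solution it is introduced with one of $0_1,0_2$ (and $\rho$ only with $0_1$); at an introduce node we reject any labeling placing two adjacent bag vertices on opposite sides, which---since every edge lies in some bag---guarantees the counted subcut is consistent. (iii) The size index is incremented by one and the weight index by $\omega(u)$ exactly when $u$ is placed in the solution, and all updates are sums of products rather than minima. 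Forgetting a solution vertex sums its two side labels, which is precisely what produces the factor $2^{cc-1}$ of Lemma~\ref{lem:parity} across the whole DP. Extracting the answer at the root bag amounts to summing, for each $(k',W')$, the entries in which every vertex carries a dominated-or-solution (i.e.\ non-negative) label.

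The join node is where the work lies and is the step I expect to be the main obstacle. The consistency relation of Equation~\eqref{eq: join-bag-cons} carries over: negative labels and side labels $0_1,0_2$ must be identical in the two children (a solution vertex keeps a fixed side), while a positive label $+t$ must be realized as $(+t,-t)$ or $(-t,+t)$ across the children (the vertex is dominated from within exactly one subtree), excluding $(+t,+t)$. Thus the side labels behave exactly like the negative labels under the barred-label transform, and only the positive labels are barred; the machinery of Lemmas~\ref{lm: conversion-table} and~\ref{lm: convolution} applies with the enlarged state set, the pointwise product of Equation~\eqref{eq: convolution} now being taken over the \emph{joint} size-and-weight index. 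Concretely, the convolution $x_j+x_k-\#_0(X_i,c)=x$ for the size is paired with $w_j+w_k-\omega(\{u:c(u)\in\{0_1,0_2\}\})=w$ for the weight, so that the shared solution vertices of the bag are not double-counted.

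For the running time, there are $(2r+2)^{n_i}$ barred labelings of a bag with $n_i\le\tw+1$, the size ranges over $O(k)$ values, and the weight over $O(n^2)$ values since $\omega:V\to\{1,\dots,2n\}$. Converting between indication and convolution tables (Lemma~\ref{lm: conversion-table}) is dominated by the convolution (Lemma~\ref{lm: convolution}), whose evaluation over the joint index costs $O(k^2)$ for the size and $O((n^2)^2)=O(n^4)$ for the weight, per barred labeling. Multiplying gives $O(k^2 n^4 (2r+2)^{\tw})$ per join node, and the other node types and the number of nodes contribute only lower-order polynomial factors, giving the claimed bound. The delicate point to verify is that the barred-label convolution still computes the consistent-cut count correctly once side labels and the weight dimension are present; as in Lemma~\ref{lm: convolution}, this reduces to checking that barred-equality of $|c|$ coincides with the consistency relation on positive labels while side and negative labels are forced equal---exactly the content of the enlarged Equation~\eqref{eq: join-bag-cons}.
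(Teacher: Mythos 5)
Your overall route is the paper's: a Cut\&Count table indexed by a distance labeling in which the label $0$ is split into $0_1,0_2$ (hence the $(2r+2)^{\tw}$ states), by a size index and a weight index in $\{0,\dots,2n^2\}$, with the bar-label machinery of Lemmas~\ref{lm: conversion-table} and~\ref{lm: convolution} reused at join nodes over the joint size--weight index, and with the answer read off via Lemmas~\ref{lem:parity} and~\ref{lm:cut}. Your use of a standard nice decomposition with the subcut-consistency check performed at introduce nodes, instead of the paper's edge-nice decomposition with introduce-edge bags, is a cosmetic difference, as is working modulo $2$.

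The genuine gap is the claim that the minimization DP of Appendix~\ref{App:AppendixA} carries over ``essentially verbatim'' with minima replaced by sums, Ordering Lemma and all. Lemma~\ref{lm:mono-lm} is a statement about minima and is precisely what a counting DP must \emph{not} rely on: minimization may restrict attention to a canonical child labeling below $c$ in the order $\preceq$, but counting requires that each pair (partial solution, subcut) correspond to exactly one combination of child table entries. This bites in two places. At the join, your consistency relation excludes the case $c_j(u)=+t$, $c_k(u)=+t$; in the minimization DP that case ``can be ignored'' by the Ordering Lemma, but the paper's counting join rule explicitly keeps all three of $(+t,-t)$, $(-t,+t)$ and $(+t,+t)$, since a solution that dominates $u$ at distance $t$ from within both subtrees induces $+t$ in both children and would otherwise go uncounted---and the bar convolution you then invoke silently computes this inclusive version, contradicting the relation you wrote down. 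At the introduce step, the resolution map $\phi$ must be restricted so that the parent-to-child correspondence is a bijection; the paper achieves this by introducing edges one at a time and flipping only the vertices that are \emph{uniquely} resolved by the new edge at the correct distance, so that $A_i[t,W,c]=A_j[t,W,c]+A_j[t,W,\phi(c)]$ counts each object exactly once. Your proposal addresses neither point, and as written the join rule undercounts; the table semantics, the handling of the root side constraint, and the running-time accounting are otherwise in line with the paper.
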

Let $k^*$ be the size of the smallest connected $r$-dominating set.
 Since the range of $\omega$ has size $2n$, by the Isolation Lemma, the smallest value $W^*$ of $W$ such that ${\cal S}_{k^*,W}$ is non-empty also implies that $|{\cal S}_{k^*,W^*}| = 1$ with probability 1/2.  By Lemma~\ref{lm:cut}, $|{\cal C}_{k^*,W^*}|$ is also odd (with probability 1/2).  We can then find $|{\cal C}_{k^*,W^*}|$ by linear search over the possible values of $W$.  Thus Lemma~\ref{lm:counting} implies Theorem~\ref{thm:dp-thm-cds}

\subsection{ Counting Algorithm} 
To determine $|{\cal C}_{k,W}|$ for each $W$, we use dynamic programming given a tree decomposition $\mathbb{T}$ of $G$. To simplify the algorithm, we use an \emph{edge-nice} variant of $\mathbb{T}$. A tree decomposition $\mathbb{T}$ is edge-nice if each bag is one of following types:
\begin{quote}
  \begin{description}[nolistsep,noitemsep]
  \item[Leaf] a leaf $X_i$ of $\mathbb{T}$ with $X_i = \emptyset$
  \item[Introduce vertex] $X_i$ has one child bag $X_j$ and $X_i = X_j
    \cup \{v\}$
  \item[Introduce edge] $X_i$ has one child bag $X_j$ and $X_i = X_j$
    and $E(X_i) = E(X_j)\cup \{e(u,v)\}$
  \item[Forget] $X_i$ has one child bag $X_j$ and $X_j = X_i \cup
    \{v\}$
  \item[Join] $X_i$ has two children $X_j, X_j$ and $X_i = X_j = X_k$
  \end{description}
\end{quote}
We root this tree-decomposition at a leaf bag. Let $G_i = (V_i, E_i)$ be the subgraph formed by the edges and vertices of descendant bags of the bag $X_i$.

As with the dynamic program for the $r$-dominating set problem, we use a {\em distance labeling}, except we have {\em two types} of 0 labels:
\[c: X_i \rightarrow \{-r,\ldots,-1,0_1,0_2,1,\ldots,r\}\] A vertex $u$ is in a corresponding subsolution if $c(u) \in \{0_1,0_2\}$ and the subscript of 0 denotes the side of the consistent cut of the subsolution that $u$ is on.  {\em Throughout, we only allow the special root vertex to be labeled $0_1$.}  
We use the same notion of induces as for the non-connected version of the problem, with the additional requirement that we maintain bipartitions (cuts) of the solutions.  Specifically, a cut $(C_1,C_2)$ induces the labeling $c$ for a subset $X$ of vertices if $d(u,C_1\cup C_2) = c(u)$ if $c(u) > 0$, $u \in C_1$ if $c(u) = 0_1$ and $u \in C_2$ if $c(u) = 0_2$.  We limit ourselves to locally valid solutions as before.

A dynamic programming table $A_i$ for a bag $X_i$ of $\mathbb{T}$ is indexed by a distance labeling $c$ of $X_i$, and integers $t \in \{0,\ldots,n\}$ and $W \in \{0,1,\ldots,2n^2\}$.  $A_i[t,W,c]$ is the number of consistent subcuts $(C_1,C_2)$ of $G_i$ such that
\begin{itemize}
\item $|C_1 \cup C_2| = t$
\item $\omega(C_1 \cup C_2) = W$
\item $C_1 \cup C_2$ induces the labeling $c$ for $X_i$.
\end{itemize}

We show how to compute $A_i[t,W,c]$ of the bag $X_i$ given the tables of it children. 

\paragraph{Leaf} Let $X_i$ be a non-root leaf of $\mathbb{T}$:
	\[A_i[t,W,\emptyset] = 1\]
\paragraph{Introduce vertex} Let $X_i$ be an introduce vertex bag with its child $X_j$ and $X_i = X_j \cup \{u\}$.  Let $c \times d$ is a labeling of $X_i$ where $c$ is a labeling of $X_j$ and $d$ is the label of $u$. There are four cases for computing $A_i[t,W,c\times d]$ depending on the value of $d$.  Since $u$ is isolated in $G_i$, we need not worry about checking for local validity.
\[
A_i[t,W,c\times{d}] = 
\begin{cases}
  0 & \text{if $d > 0$ ($u$ cannot be $r$-dominated by a subset of $V_i$)}\\
  A_j[t-1,W-\omega(u),c] & \text{if $d = 0_1$ ($u$ is assigned to the first side of a consistent subcut)} \\
  A_j[t-1,W-\omega(u),c] & \text{if $d = 0_2$ ($u$ is assigned to the second side of a consistent subcut)} \\
  A_j[t,W,c] & \text{if $d < 0$}
\end{cases}
\]
	
\paragraph{Introduce edge} Let $X_i$ be an introduce edge bag with its child $X_j$ and $E_i = E_j\cup \{e(u,v)\}$.  We need only check for local validity, positive resolution and consistency of sub-cuts.
\begin{quote}
  \begin{description}
  \item[Local invalidity] For any labeling that is not locally valid
    upon the introduction of $uv$, that is, if $|c(u) - c(v)| > 1$, we
    set: $A_i(t,W,c) = 0$.
  \item[Inconsistent subcuts] If $c(u) = 0_1$ and $c(v) = 0_2$ (or
    vice versa), $c$ cannot correspond to a consistent subcut, so
    $A_i(t,W,c) = 0$.
  \item[Positive resolution] If $c(u) = c(v)- d_{G_i}(u,v) \ge 0$ then $u$
    positively resolves $v$. We say that a vertex $x \in G_i$ is \emph{uniquely resolved} by $u$ at distance $d$ if there is no vertex other than $u$ that positively resolves $v$ and $d_{G_i}(u,v) = d$. When the edge $e(u,v)$ is introduced to the bag $X_i$, some vertices will be positively resolved by $u$. The vertices $v \in X_i$ that are uniquely resolved by $u$ at distance $d_{G_i}(u,v)$ have negative labels in $X_j$.  We define a map $\phi$ applied to the
    label $c(x)$ of a vertex $x$:
    \[ \phi(c(x)) =
    \begin{cases}-c(x) & \mbox{if } x \mbox{ is uniquely resolved by } u  \mbox{ at distance $d_{G_i}(u,x)$} \\
      \hspace{.8em}c(x) & \mbox {otherwise}
    \end{cases}
    \]
    We use $\phi(c)$ to define the natural extension this map to a
    full labeling of $X_i$.   We get:
    \[A_i[t,W,c] = A_j[t,W,c] + A_j[t,W,\phi(c)]\]
  \end{description}
\end{quote}
In all other cases, the labeling is locally valid, the corresponding subcuts are valid and neither $u$ nor $v$ has been positively resolved, so $A_i[t,W,c] = A_j[t,W,c]$.
\paragraph{Forget}  Let $X_i$ be a forget bag with child $X_j$ such that $X_j = X_i \cup \{u\}$.  We compute $A_i[t,W,c]$ from $A_j[t,W,c\times d]$ where $c \times d$ is a labeling of $X_j$ where $u$ is labeled $d$.  We say that the labeling $c \times d$ is \emph{forgettable} if $d \geq 0$ or there is a vertex $v \in X_j$ such that $c(v) = d + d_{G_i}(u,v)$.  In the first case, $u$ has been dominated already; in the second case, the domination of $u$ must be handled through other vertices in $X_j$ in order for the labeling to be induced by a feasible solution.
\[
A_i[t,W,c] = \sum_{d \ : \ c\times d \text{ is forgettable}}A_j[t,W,c\times d]
\]

\paragraph{Join} Let $X_i$ be a join bag with children $X_j$ and $X_k$ and $X_i = X_j = X_k$.
We say the labeling $c_i$ (for $X_i)$ is \emph{consistent} with labelings $c_j$ and $c_k$ (for $X_j$ and $X_k$, respectively) if for every $u \in X_i$
	\begin{itemize} [noitemsep, nolistsep]
	\item If $c_i(u) = 0_j$ for $j \in \{1,2\}$, then $c_j(u) = 0_j$ and $c_k(u) = 0_j$.
	\item If $c_i(u) = t < 0$, then $c_j(u) = t$ and $c_k(u) = t$.
	\item If $c_i(u) = t > 0$, then one of the following must be true:
          \begin{itemize}[noitemsep, nolistsep]
          \item $c_j(u) = t$ and $c_k(u) = -t$
          \item $c_j(u) = -t$ and $c_k(u) = t$
          \item $c_j(u) = t$ and $c_k(u) = t$.
          \end{itemize}
	\end{itemize}
Given this, $A_i[t,W,c_i]$ is the product $A_j[t_1, W_1, c_j] \cdots A_k[t_2,W_2,c_k]$ summed over:
\begin{itemize}
\item all $t_1$ and $t_2$ such that $t_1+t_2-t$ is equal to the number of vertices that are labeled $0_1$ or $0_2$ by $c_i$
\item all $W_1$ and $W_2$ such that $W_1+W_2-W$ is equal to the weight of vertices that are labeled $0_1$ or $0_2$ by $c_i$
\item all $c_i$ and $c_j$ that are consistent with $c_j$ and $c_k$
\end{itemize}
By using the bar-coloring formulation, as for the disconnected case, we can avoid summing over all pairs of consistent distance labellings and instead compute $A_i[t,W,\bar c_i]$ as the product $A_j[t_1, W_1, \bar c_i] \cdots A_k[t_2,W_2,\bar c_i]$ summed over all $t_1$ and $t_2$ and all $W_1$ and $W_2$ as described above.
Using this latter formulation, we can compute $A_i$ in time $O(k^2n^3(2r+2)^{\tw})$. 
\paragraph{Running Time} The number of configurations for each  node of $\mathbb{T}$ is $O(kn^2(2r+2)^{\tw})$. The running time to update leaf, introduce edge, introduce vertex, and forget bags is $O(kn^2(2r+2)^{\tw})$. The running time to update join bags is $O(k^2n^3(2r+2)^{\tw})$. Therefore, the total running time of the counting algorithm is $O(nk^2n^3(2r+2)^{\tw}) = O(k^2 n^4(2r+2)^{\tw})$ after running the algorithm for all possible choices of root vertex $\rho$.

\section{Pathwidth of graphs in reductions}
To bound the pathwidth of our constructions, we use a \emph{mixed search game}~\cite{TUK95}. We view the graph $G$ as a system of tunnels. Initially, all edges are contaminated by a gas. An edge can be \emph{cleared} by placing two searchers at both ends of that edge simultaneously or by sliding a searcher along that edge. A cleared edge can be recontaminated if there is a path between this edge and a contaminated edge such that there is no searcher on this path. Set of rules for this game includes:	
	\begin{itemize} [noitemsep, nolistsep]
	\item Placing a searcher on a vertex
	\item Removing a searcher from a vertex
	\item Sliding a searcher on a vertex along an incident edge
	\end{itemize}
A search is a sequence of moves following these rules. A search strategy is \emph{winning} if all edges of $G$ are cleared after its termination. The minimum number of searchers required to win is the mixed search number of $G$, denoted by $\textbf{ms}(G)$. The following relation is established in~\cite{TUK95}:
	\begin{center}
	 $\pw (G) \leq \textbf{ms}(G) \leq \pw (G)  + 1 $ 
	\end{center}

%
%
%
\subsection{Proof of Lemma~\ref{lm:pw-bound}} \label{App:AppendixB}

We give a search strategy using at most $tp + O((2r+1)p)$ searchers. For a group gadget $\hat{B}$, we call the sets of vertices $\{P^1_i | 1 \leq i \leq p\}$ and $\{P^{2r+1}_i | 1 \leq i \leq p\}$ the sets of \emph{entry vertices} and \emph{exit vertices}, respectively. We search the graph $G$ in $m(2rpt+1)$ rounds. Initially, we place $tp$ searchers on the entry vertices of $t$ group gadgets $\hat{B}^1_i , 1 \leq i \leq t$. We use one more searcher to clear the path and the edges incident to $h_1$. In round $b, 1 \leq b \leq m(2prt+1)$ such that $b = m\ell+j, 0 \leq \ell \leq 2prt+1, 1 \leq j \leq m$, we keep $tp$ searchers on the entry vertices of all group gadgets $\hat{B}^{ml+j}_i , 1 \leq i \leq t$. We clear the group gadget $\hat{B}^{ml+j}_i$ by using at most $5(2r+1)p + 4$ searchers in which:
	\begin{itemize}[nolistsep, noitemsep]
	\item $(2r+1)p$ searchers are placed on the vertices of $p$ paths in ${\cal P}$.
	\item $2(2r+1)$ searchers to clear the guards and their $r-$frames.
	\item $3$ searchers are placed on $x$, $x'$ and $c_j^\ell$ and one more searcher to clear their attached paths. 
	\item $2(2r+1)$ to clear $x_S$ and their $r-$frames for all $S \in {\cal S}$.  
	\end{itemize}
After $\hat{B}^{ml+j}_i$ is cleared, we keep searchers on the exit vertices and $c_j^\ell$, remove other searchers and reuse them to clear $\hat{B}^{ml+j}_{i+1}$. After all the group gadgets in round $b$ are cleared, we slide searchers on the exit vertices of $\hat{B}^{b}_i$ to the entry vertices of $\hat{B}^{b+1}_i$ for all $1 \leq i \leq t$ and start a new round. When $b = m(2rpt+1)$, we need one more searcher to clear the path and the edges incident to $h_2$. In total, we use at most $tp + (5+p)(2r+1) + 4$ searchers which completes the proof of the lemma.     
%
%
%

\subsection{Proof of Lemma~\ref{lm:pw-cds}} \label{App:AppendixD}

Before we give a proof of Lemma~\ref{lm:pw-cds}, we have following observation:
\begin{observation} \label{obs:searcher-patterns}
Given a pattern $P_r(m)$, in which $m$ searchers are placed at $m$ leaves, at most $m+1$ more searchers are needed to clean the pattern $P_r(m)$. 
\end{observation}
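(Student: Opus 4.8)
The plan is to exhibit an explicit winning mixed-search strategy that keeps the $m$ given searchers pinned on the leaves $p_1,\dots,p_m$ (these guard the pattern against recontamination from the rest of $G$) and spends at most $m+1$ further searchers to clear every edge of $P_r(m)$. Recall the layout: the root $r_P$ is joined to an internal vertex $h$ by a path, $h$ has $m$ children $q_1,\dots,q_m$, each $q_i$ is joined to the leaf $p_i$ by a path, and when $r$ is even the children $q_1,\dots,q_m$ additionally form a clique. The even case is the binding one, since the clique is precisely what prevents us from clearing the branches one at a time, so I will describe the strategy for even $r$ and note that the odd case is strictly easier.

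First I would place one extra searcher on each of the $m$ children $q_1,\dots,q_m$ and one more on $h$, for a total of $m+1$ extra searchers; this simultaneously clears every clique edge $q_iq_j$ and every spoke $hq_i$, because both endpoints of each such edge now carry a searcher. Next, for $i=1,\dots,m$ in turn I would slide the searcher sitting on $q_i$ down the branch toward the (already guarded) leaf $p_i$, thereby clearing that branch. The point to verify is that no edge is recontaminated during a slide: when the searcher leaves $q_i$, the vertex $h$ is still guarded (blocking the root path and all other branches), every other child $q_j$ is still guarded, and the clique edges and the spoke incident to $q_i$ are already clean, so the only contaminated edges reachable from $q_i$ without crossing a searcher are those of branch $i$ itself, which the sliding searcher is clearing as it descends; with $p_i$ guarded ahead and the descended portion clean behind, branch $i$ stays clean. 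After the searcher reaches $p_i$ I would lift it, freeing a searcher.

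Once all $m$ branches are cleared, the entire subtree below $h$ is clean, only the searcher on $h$ remains placed, and $m$ searchers are free. Finally I would take one free searcher, place it on $r_P$, and slide it down the path from $r_P$ to $h$; since (within the pattern) nothing lies above $r_P$ and the subtree below $h$ is already clean, this slide meets no contaminated edge and clears the path. At every moment the number of extra searchers in play is at most $m+1$, the peak occurring during the initial placement on $q_1,\dots,q_m,h$, which proves the observation. For odd $r$ there is no clique, so the same strategy works verbatim, or more economically one may guard $h$ and clear the branches one at a time with a single sliding searcher, using only two extra searchers.

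The main obstacle is the even-$r$ clique: it forces the $m$ simultaneous searchers on $q_1,\dots,q_m$, and the accompanying searcher on $h$ is exactly the extra unit that prevents recontamination of the spokes and clique edges—this is the source of the ``$+1$'' in the bound. Some care is needed to check the recontamination invariants at the instant each $q_i$-searcher departs, but these follow immediately from $h$ and the remaining $q_j$ staying guarded.
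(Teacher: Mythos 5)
Your strategy is correct, and it supplies exactly the detail the paper omits: the paper states this observation without proof, adding only the remark that two extra searchers suffice when $r$ is odd, which your argument also recovers. One tiny refinement for when the pattern sits inside the full construction (where $r_P = x_S$ has external neighbors): clear the $r_P$--$h$ path by sliding \emph{from} $h$ \emph{toward} $r_P$ and parking the searcher there, so that the cleared path cannot be recontaminated through an unguarded root; this changes nothing in the count.
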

Indeed, when $r$ is odd, we only need $2$ more searchers to clean the pattern $P_r(m)$ given $m$ searchers are placed at $m$ leaves.   

We give a mixed search strategy using at most $tp + O((2r+2)^{2p})$ searchers. For a group gadget $B$, we call the set of $p$ vertices $\{a_1\}_p$ of $p$ cores of $B$ the \emph{entry vertices}. We search the graph $G$ in $m((2r+1)tp + 1)$ rounds, each round we clean $t$ group gadgets. Initially, we place $tp$ searchers at entry vertices of $B^1_i$ for all $1\leq i \leq t$. We place $1$ searcher at the root $r_T$ to clean the edges between $r_T$ and entry vertices of $B^1_i$ for all $1\leq i \leq t$ and using $1$ more searchers to clean the path attached to $r_T$. In round $b, 1 \leq b \leq m((2r+1)tp+1)$ such that $b = m\ell+j, 0 \leq \ell < (2r+1)tp +1, 1 \leq j \leq m$, we keep $tp$ searchers on the entry vertices of all group gadgets $B^{ml+j}_i , 1 \leq i \leq t$. We clear the group gadget $B^{ml+j}_i$ by using at most $(2r+2)^{2p} + (2r+2)p + 1$ searchers in which:
	\begin{itemize}[nolistsep, noitemsep]
	\item $(2r+2)p$ searchers are placed and kept on the vertices $\{a_1,a_2,\ldots,a_{2r+2}\}$ of $p$ cores. We use $1$ more searcher to clean the edge between vertices in the set $\{a_1,a_2,\ldots,a_{2r+2}\}$ and $r+2$ more searchers to clean two patterns $P_r(r+1)$ attached to two sets $\{a_1,a_3,\ldots, a_{2r+1}\}$ and $\{a_2,a_4, \ldots, a_{2r+2}\}$. By Observation~\ref{obs:searcher-patterns}, we need $(r+1)(r+2)$ vertices to clean the $r$-paths connecting $r_T$ and $\{a_2,a_4, \ldots, a_{2r+2}\}$ and path-forcing patterns $P_r(r+1)$ attached along these paths. Execept $(2r+2)a$ searchers on the set of vertices $\{a_1,a_2,\ldots,a_{2r+2}\}$, we remove all other searchers after finishing this step.	
	\item $(2r+2)^p((2r+2)^p+1)$ searchers are needed to clean the paths connecting set vertices $\{\bar{x}_S| S \in {\cal S}\}$ and $r_T$ the path-forcing patterns $P_r((2r+2)^p)$ attached along these paths by Observation~\ref{obs:searcher-patterns}. Keep $(2r+2)^p$ searchers on $\{\bar{x}_S| S \in {\cal S}\}$ and use $1$ more searcher to clean $x$ and its attached path. Also by Observation~\ref{obs:searcher-patterns}, $2pr+1$ searcher is needed to clean the patterns $P_r(p2r)$ rooted at $x_S$ for all $S \in {\cal S}$.		
	\item $1$ searcher placed on $c_j^\ell$ and one more searcher to clear its attached paths. 
	\end{itemize}
Note that searchers can be reused after finishing each small step described above. After $B^{ml+j}_i$ is cleared, we keep $p$ searchers at $p$ vertices $\{a_{2r+3}\}_p$ of $p$ cores, which are entry vertices of the next cores and keep $1$ searcher on $c_j^\ell$. We remove other searchers and reuse them to clear $B^{ml+j}_{i+1}$. After all the group gadgets in round $b$ are cleared, searchers are moved to the entry vertices of $B^{b+1}_i$ for all $1 \leq i \leq t$ and start a new round. In total, we use at most $tp + O((2r+2)^{2p})$ searchers which completes the proof of the lemma.  

\end{document}